\documentclass[11pt]{article}

\usepackage{fullpage}

\usepackage{microtype}
\usepackage{graphicx}
\usepackage{subcaption}
\usepackage{booktabs} 

\usepackage{hyperref}
\hypersetup{
	citecolor=blue,
	colorlinks=true,
	linkcolor=blue,
	urlcolor=blue
}
\usepackage{enumitem}

\usepackage{amsmath}
\usepackage{amssymb}
\usepackage{mathtools}
\usepackage{amsthm}

\usepackage{caption}
\usepackage{subcaption}

\usepackage[capitalize,noabbrev]{cleveref}

\theoremstyle{plain}
\newtheorem{theorem}{Theorem}[section]
\newtheorem{proposition}[theorem]{Proposition}
\newtheorem{lemma}[theorem]{Lemma}

\theoremstyle{definition}

\newtheorem{assumption}[theorem]{Assumption}
\theoremstyle{remark}
\newtheorem{remark}[theorem]{Remark}

\usepackage[textsize=tiny]{todonotes}

\newcommand{\R}{\mathbb{R}}
\newcommand{\cost}{\mathrm{cost}}

\title{\bf Delegation and Verification under AI}

\author{
	Lingxiao Huang\\
	Nanjing University
	\and
	Wenyang Xiao \\
	Nanjing University
	\and 
	Nisheeth K. Vishnoi \\
	Yale University}

\date{}

\begin{document}
	
	\maketitle
	
	\begin{abstract}
		As AI systems enter institutional workflows, workers must decide whether to delegate task execution to AI and how much effort to invest in verifying AI outputs, while institutions evaluate workers using outcome-based standards that may misalign with workers’ private costs. We model delegation and verification as the solution to a rational worker’s optimization problem, and define worker quality by evaluating an institution-centered utility (distinct from the worker’s objective) at the resulting optimal action. We formally characterize optimal worker workflows and show that AI induces \emph{phase transitions}, where arbitrarily small differences in verification ability lead to sharply different behaviors. As a result, AI can amplify workers with strong verification reliability while degrading {institutional worker quality} for others who rationally over-delegate and reduce oversight, even when baseline task success improves and no behavioral biases are present. These results identify a structural mechanism by which AI reshapes institutional worker quality and amplifies quality disparities between workers with different verification reliability.
	\end{abstract}

	\newpage
	\tableofcontents
	\newpage

	\section{Introduction}
	\label{sec:intro}
	
	Across high-stakes domains such as medicine, finance, and law, task completion traditionally followed a simple structure: a human worker executed the task and bore responsibility for correctness.
	As artificial intelligence (AI) systems enter institutional workflows, execution can be delegated while responsibility remains human.
	Institutions increasingly expect workers to use AI for efficiency, standardization, or cost reduction, transforming task completion into a process of delegation and oversight \cite{mckinsey2025superagency,zirar2023worker}.
	Work is therefore organized as a \emph{delegation pipeline}: a worker may perform the task manually, delegate execution to AI, or delegate while verifying the AI’s output.
	
	In experimental and real-world settings, workers tend to rely more heavily on AI advice on harder and less predictable tasks, even when AI accuracy is lower in those regimes \cite{bogert2021humans, passi2022overreliance, zhang_humaneai_tmp117}.
	At the same time, verification and oversight often weaken under cognitive load, complexity, or time pressure, leading to reduced scrutiny of AI outputs precisely where failures are most likely \cite{gaube2021ai, bergman2025meaning}.
	Together, these findings point to a structural tension between delegation and verification in AI-assisted work.
	Notably, recent empirical evidence suggests that AI-assisted work can unevenly affect worker outcomes, with some workers benefiting from AI access while others experience degraded performance despite similar exposure to AI tools \cite{brynjolfsson2025generative}.

	These observations raise a basic question: \emph{How does AI reshape the value that institutions derive from human workers?}
	We refer to this value as \emph{worker quality}: the institutional utility generated by a worker’s output, accounting not only for task success but also for verification costs and accountability borne by the institution.
	While workers choose among manual work, verified delegation, and pure delegation to optimize their own effort--utility trade-offs, institutional evaluation is typically outcome-based and agnostic to internal workflows.
	This misalignment implies that individually rational delegation strategies need not maximize institutional utility, motivating the framework we develop in this paper.

	\paragraph{Related work.}
	Our setting relates to prior work on human--AI delegation, reliance, and monitoring under costly verification, which differ in how delegation and verification decisions are modeled and in whether worker behavior and institutional outcomes are treated endogenously.
	A growing body of work studies how tasks should be allocated between humans and AI systems, including selective prediction, learning-to-defer, and complementary expertise frameworks \cite{madras2018predict, cortes2016learning, mozannar2020consistent, Hemmer_2023}.
	These approaches design routing or deferral policies to optimize system-level accuracy or utility, typically assuming fixed human accuracy and exogenous human effort, and therefore do not model how worker-chosen delegation interacts with verification effort or translates into institutional worker quality.
	
	Empirical work documents patterns of over-reliance and oversight decay in AI-assisted settings as functions of task difficulty, uncertainty, and cognitive load \cite{Parasuraman, zhang_humaneai_tmp117, lyell2017automation}.
	While this evidence shows that verification weakens in practice, it typically explains these effects in psychological or organizational terms rather than modeling the underlying decision problem linking effort costs, task difficulty, and outcomes.
	
	Economic models of monitoring and verification study how costly inspection or auditing should be deployed to deter misreporting or errors \cite{townsend1979optimal, gale1985incentive, mookherjee1989optimal, georgiadis2020optimal}.
	In these models, verification is controlled by an institutional principal, rather than chosen by workers on a per-task basis, and human and AI performance are not jointly modeled as functions of task hardness.
	
	Recent work also develops mathematical frameworks for human--AI integration, characterizing how human and AI capabilities combine to affect job success and productivity \cite{celis2025mathematical}, as well as benchmark- and value-of-information–based approaches to joint decision-making \cite{kleinberg_human_2018, guo_reliance_2024, guo_voi_2025}.
	These frameworks characterize optimal decisions given available signals, but do not model how delegation and verification choices are shaped by effort costs or how they induce institutional outcomes.
	Taken together, existing approaches address task allocation, reliance behavior, auditing, or evaluation in isolation, but do not provide a unified account of how worker-chosen delegation and verification under AI shape institutional worker quality.

	\paragraph{Our contributions.}
	We introduce a mathematical framework that characterizes how AI reshapes institutional worker quality through delegation and verification. Our main contributions are as follows:
	\begin{itemize}
		\item We model worker--AI interaction via a delegation--verification action pair $(d,s)\in[0,1]^2$, where $d$ is the probability of delegating execution to AI and $s$ is verification effort.
		Worker ability is parameterized by verification reliability $\alpha\in \R_{\geq 0}$ and execution efficiency $\beta\in \R_{\geq 0}$.
		Worker behavior is given by the solution to a utility maximization problem (Eq.~\eqref{eq:action}), while \emph{worker quality} is defined by evaluating an institution-centered utility at the resulting optimal action (Eq.~\eqref{eq:quality}).
		This framework makes explicit the structural misalignment between worker incentives and institutional objectives.
		
		\item 
		We characterize the worker’s optimal workflow $(d^\star,s^\star)$ as a function of ability parameters $(\alpha,\beta)$ (Theorem~\ref{thm:action}).
		We show how the model partitions the ability space into threshold behavior regions, with sharp transitions between manual work, pure delegation, and verified delegation.
		These transitions arise from structural properties of the delegation pipeline, persist under general concave detection and convex cost functions, and explain why accountability failures can emerge abruptly rather than gradually in AI-assisted workflows.

		\item 
		We characterize when AI assistance improves or degrades institutional worker quality (Theorem~\ref{thm:quality}) and identify which workers are upgraded or downgraded relative to the pre-AI baseline (Theorem~\ref{thm:region}).
		We identify a \emph{compliance gain} regime, in which workers with high verification reliability become institutionally qualified due to AI access, and a \emph{compliance loss} regime, in which workers with low verification reliability rationally over-delegate and experience quality degradation.
		These effects arise endogenously from rational optimization, without invoking behavioral bias or miscalibration.
		More broadly, our results suggest a {\em verification amplification} phenomenon: AI access can disproportionately benefit workers with strong evaluative capabilities, while disadvantaging others through rational over-delegation.

		\item %
		In a tractable instantiation with inverse-linear verification reliability and linear execution cost, we obtain closed-form expressions for optimal actions and resulting worker quality (Figure \ref{fig:example}).
		This highlights the central role of verification reliability and reveals countervailing effects: higher verification ability or higher AI quality can induce over-delegation and reduce quality for some workers.
		\item 
		We formulate a constrained optimization problem for worker up-skilling that trades off improvements in verification reliability $\alpha$ and execution efficiency $\beta$ under a qualification constraint (Eq.~\eqref{eq:intervention_worker}, Section \ref{sec:intervention_worker}).
		We also analyze institutional interventions, including improvements in AI capability and changes to worker incentives, and show that both can have non-monotonic effects on institutional utility (Section \ref{sec:interventions_extensions} and Section~\ref{sec:intervention_institution}).
		
	\end{itemize}
	\noindent
	We further examine the robustness of the model under heterogeneous task difficulty, miscalibrated beliefs about AI capability, and partial re-execution, showing that the core qualitative phenomena persist (Section \ref{sec:interventions_extensions} and Section~\ref{sec:extension}). We then provide an illustrative application using real-world data to demonstrate the framework's empirical relevance (Section~\ref{sec:calibration}).
	
	Overall, our framework shows that AI functions not only as a productivity enhancer but as a structural filter that induces discontinuous changes in institutional worker quality.
	By isolating rational over-delegation as the core mechanism, we show that verification ability, rather than execution efficiency, becomes the primary determinant of institutional viability in AI-assisted workflows.

	\section{Model}
	\label{sec:model}

	We study a static, per-task model of AI-assisted work in which a rational worker interacts with a fixed AI system, choosing delegation and verification effort, while the institution evaluates the worker based on task outcomes.

	\paragraph{Action space and workflow.}
	A worker $W$ chooses an action $(d,s)\in[0,1]^2$, where $d$ denotes the probability of delegating task execution to AI and $s$ denotes the verification effort exerted upon delegation.
	With probability $1-d$, the worker executes the task independently.
	When the task is delegated, the worker verifies the AI’s output with effort $s$.
	If an error is detected, the worker redoes the task independently of the AI’s output.
	The pre-AI workflow corresponds to $(d,s)=(0,0)$, in which the worker always executes the task directly.

	\paragraph{Worker utility.}
	The worker chooses an action to trade off task success against completion cost, which we model via a utility function depending on the probability of task success and the cost incurred.
	Let $p:[0,1]^2\to[0,1]$ denote the \emph{task success function} and $\cost:[0,1]^2\to\mathbb{R}_{\ge 0}$ the \emph{cost function}.
	For an action $(d,s)$, $p(d,s)$ is the probability of success and $\cost(d,s)$ the incurred cost.
	Both functions are increasing in verification effort $s$ and depend on worker and AI abilities, as specified in Equations~\eqref{eq:prob} and~\eqref{eq:cost}.
	Let $b_W \ge 0$ denote the worker’s benefit from success and $\ell_W \ge 0$ the loss from failure.
	The worker utility is
	\begin{align} 
		\label{eq:worker}
		U_W(d,s) := b_W\, p(d,s) - \ell_W \bigl(1 - p(d,s)\bigr) - \cost(d,s).
	\end{align}
	The worker selects an optimal action
	\begin{align}
		\label{eq:action}
		(d^\star, s^\star) := \arg\max_{(d,s)\in [0,1]^2} U_W(d,s).
	\end{align}
	Since AI enlarges the action space, the worker’s utility weakly improves relative to the no-AI baseline: $U_W(d^\star, s^\star) \ge U_W(0,0)$.

	\paragraph{Institutional utility.}
	The institution derives benefits from task success and incurs costs associated with worker effort, for example through compensation required to cover increased workload \cite{rosen1986theory}.
	Let $b_I, \ell_I \ge 0$ denote the institutional benefit from task success and loss from task failure, respectively, and let $\xi \ge 0$ scale the worker’s cost from the institutional perspective.
	The institutional utility under action $(d,s)$ is
	\begin{align}
		\label{eq:institution}
		U_I(d,s) := b_I\, p(d,s) - \ell_I \bigl(1 - p(d,s)\bigr) - \xi\, \cost(d,s).
	\end{align}
	Institutional utility depends on the worker’s action, which is not directly controlled by the institution.
	
	\paragraph{Worker quality.}
	We define \emph{worker quality} as the institutional utility induced by the worker’s optimal action,
	\begin{align}
		\label{eq:quality}
		Q(W) := U_I(d^\star, s^\star).
	\end{align}
	If institutional and worker utilities are aligned (e.g., $b_I=b_W$, $\ell_I=\ell_W$, and $\xi=1$), then $Q(W)=U_W(d^\star,s^\star)$ and worker quality coincides with the worker’s realized utility, which weakly improves under AI assistance.
	In general, however, $U_I \neq U_W$.
	Institutions employ multiple workers with heterogeneous objectives, so institutional utility cannot coincide with each worker’s individual utility.
	This misalignment allows worker quality to decrease under AI assistance even when workers act rationally, consistent with empirical observations \cite{bogert2021humans, passi2022overreliance}.

	\paragraph{Reliability and task success.}
	Task success under action $(d,s)$ arises through three mutually exclusive pathways:
	(i) direct worker execution,
	(ii) direct AI execution, and
	(iii) worker success after detecting and correcting an AI error.
	To compute the probability for each pathway, we define the reliability of the worker and the AI.
	Let $p_w, p_a \in [0,1]$ denote the task success probabilities of the worker and the AI, respectively.
	The probabilities of direct worker and AI success are $(1-d)p_w$ and $dp_a$.
	The quantities $p_w$ and $p_a$ may depend on task difficulty.
	For simplicity, we assume uniform task difficulty in the main text so that $p_w$ and $p_a$ are fixed, and relax this assumption in Section~\ref{sec:varying}.
	Worker correction after AI errors is governed by an \emph{error detection function}
	$\phi:[0,1]\to[0,1]$, which maps verification effort $s$ to the probability of detecting an AI error.
	Conditional on detecting an error, the worker’s subsequent task success is assumed to be independent of the AI’s original output.
	Thus, the probability of success via correction is $d(1-p_a)\phi(s)p_w$.
	Combining these cases, the task success probability induced by action $(d,s) \in [0,1]^2$ is
	\begin{align}
		\label{eq:prob}
		p(d,s) := (1-d)p_w + d p_a + d (1-p_a) \phi(s) p_w.
	\end{align}
	When $p_a < p_w$, delegation without verification (i.e., $d>0$ and $s=0$) can reduce the overall task success probability.
	We impose the following regularity conditions on the error detection function $\phi(\cdot)$:
	(i) $\phi(0)=0$, so no errors are detected without verification effort.
	(ii) $\phi$ is increasing, continuously differentiable, and strictly concave in $s$, implying diminishing returns to verification effort.
	These conditions capture the well-established speed--accuracy tradeoff and insights from random search theory in human cognition
	\cite{wickelgren1977speed, heitz2014speed, koopman1956theory}.
	
	A commonly used functional form is
	$\phi(s) := 1 - e^{-\alpha s}$,
	where $\alpha \ge 0$ represents the worker’s verification reliability, 
	namely the capacity to detect and correct AI errors after delegation.
	Larger values of $\alpha$ yield higher detection probabilities for any fixed $s > 0$, i.e., $\phi(s)$ is strictly increasing in $\alpha$.
	An alternative is the inverse-linear form
	$\phi(s) := 1 - \frac{1}{1+\alpha s}$,
	which is a first-order approximation to $1-e^{-\alpha s}$ and is analytically convenient.

	\paragraph{Execution efficiency and cost.}
	We model the cost incurred under action $(d,s)$ by aggregating execution, verification, and correction costs.
	Let $C_w, C_a \ge 0$ denote the execution costs of the worker and the AI, respectively.
	The expected execution cost is $(1-d)C_w + dC_a$.
	In practice, $C_a$ is typically much smaller than $C_w$, and we often assume $C_a = 0$ for simplicity.
	
	We parameterize the worker’s execution cost $C_w$ by an efficiency parameter $\beta \geq 0$, where larger $\beta$ corresponds to higher efficiency and lower execution cost.
	For example, $C_w$ may scale linearly as $1-\beta$ or inverse-linearly as $1/\beta$.
	Verification incurs an additional cost modeled by a \emph{detection cost function} $C_v:[0,1]\to\mathbb{R}_{\ge 0}$.
	We assume $C_v(0)=0$ and that $C_v$ is increasing, continuously differentiable, and strictly convex, implying increasing marginal verification costs \cite{laffont2002theory}.
	Intuitively,
	early verification checks salient or easily detectable errors, while deeper
	verification requires increasingly costly activities such as independent
	recomputation, external cross-checking, or domain-expert consultation.
	Examples include a linear cost $C_v(s)\propto s$ or a convex cost such as $C_v(s)=s+\tfrac{s^2}{2}$.
	Since verification occurs only when the task is delegated to AI, the expected verification cost under action $(d,s)$ is $d\,C_v(s)$.
	If an AI error is detected, the worker redoes the task at cost $C_w$, yielding an expected correction cost of $d(1-p_a)\phi(s)C_w$.
	The total cost under action $(d,s)$ is
	\begin{align} \label{eq:cost} \cost(d,s) := (1 - d) C_w+ d \left( C_a + C_v(s) + (1 - p_a) \phi(s) C_w \right). 
	\end{align}
	Substituting \eqref{eq:prob} and \eqref{eq:cost} into \eqref{eq:worker} and \eqref{eq:institution} yields
	$U_W(d,s) = f_W(s)d + g_W$ and $U_I(d,s) = f_I(s)d + g_I$, both linear in the delegation level $d$.
	Here, $g_W$ and $g_I$ denote the worker’s and institution’s no-AI baseline utilities; in particular, $g_I$ equals the pre-AI worker quality $Q_0(W)$.
	We restrict to the regime $g_W \ge 0$, ensuring nonnegative pre-AI worker utility.
	The terms $f_W(s)d$ and $f_I(s)d$ capture the incremental utilities from delegation with verification effort $s$.
	Expressions for $f_W(s)$, $g_W$, $f_I(s)$, and $g_I$ are given in Section~\ref{sec:utility}.
	Because worker and institutional objectives need not align, it is possible that $f_W(s) > 0$ while $f_I(s) < 0$, so delegation benefits the worker but harms the institution.

	Figure \ref{fig:flow_chart} provides a flow chart for the workflow.

	\begin{figure*}[h!]
		\centering
		\includegraphics[width=0.9\linewidth]{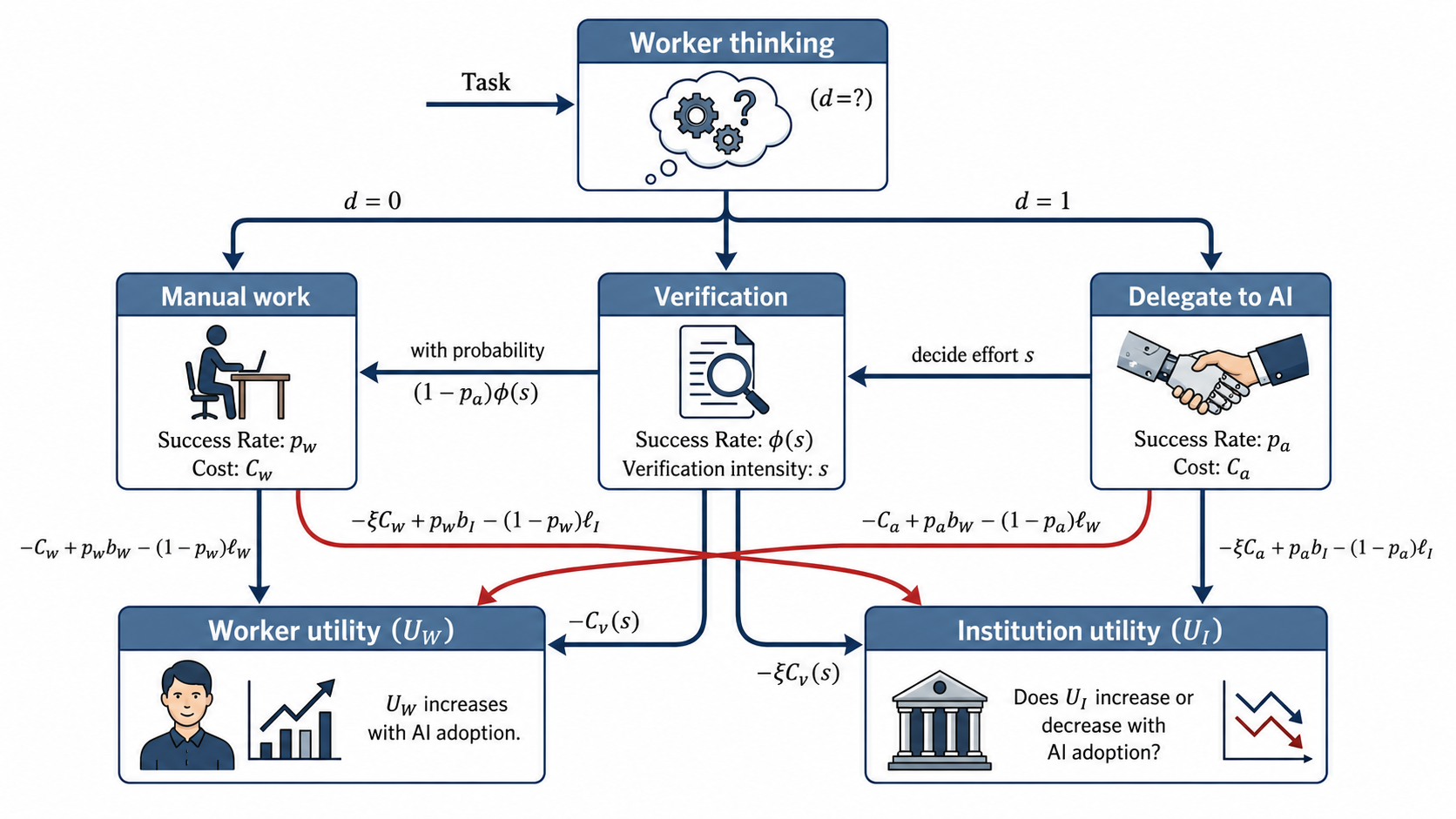}
		\caption{A flow chart for the workflow under AI assistance with worker action $(d,s)\in [0,1]^2$.}
		\label{fig:flow_chart}
	\end{figure*}
	
	\begin{remark}[\bf Structural properties of delegation with verification]
		The discontinuities identified in subsequent sections do not depend on specific functional forms, but arise from structural features of utility-driven delegation with costly verification.
		Suppose that (i) $\phi(s)$ is increasing and concave, (ii) $C_v(s)$ is increasing and convex, and (iii) utilities are affine in the delegation probability $d$ for fixed $s$.
		Then $U_W(d,s)=f_W(s)d+g_W$, implying $d^\star(s)\in\{0,1\}$ whenever $f_W(s)\neq 0$.
		As a result, small changes in worker or AI ability that flip the sign of $f_W(s)$ induce abrupt transitions between no delegation, verified delegation, and pure delegation.
		These phase transitions are therefore generic to delegation pipelines with costly verification.
	\end{remark}
	
	In the remainder of the paper, we analyze how the optimal action $(d^\star,s^\star)$ and resulting quality $Q(W)$ vary with verification reliability $\alpha$ and execution efficiency $\beta$.
	By contrast, in the pre-AI baseline only execution efficiency $\beta$ matters, as verification is unnecessary.

	\section{Theoretical results}
	\label{sec:result}
	
	\begin{figure*}[t]
		\centering
		\begin{subfigure}[b]{0.33\textwidth}
			\centering
			\includegraphics[width=\linewidth]{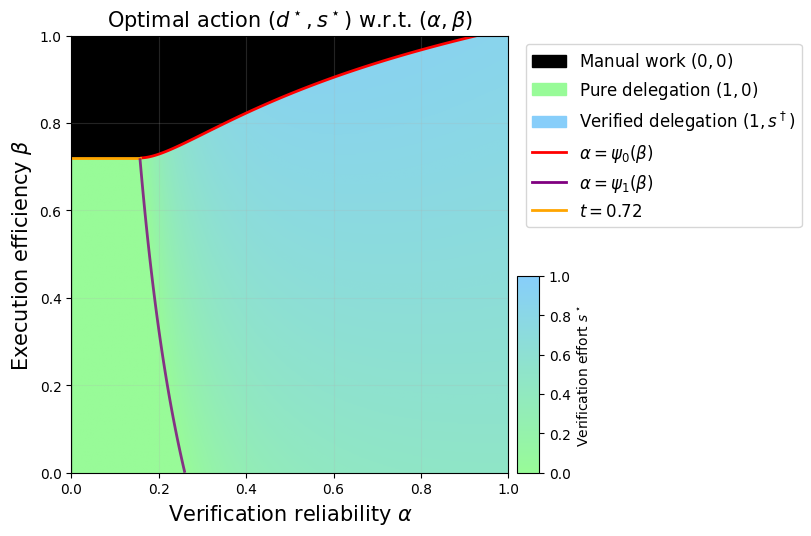}
			\caption{Worker action in Theorem \ref{thm:action}}
			\label{fig:action}
		\end{subfigure}
		\hfill
		\begin{subfigure}[b]{0.3\textwidth}
			\centering
			\includegraphics[width=\linewidth]{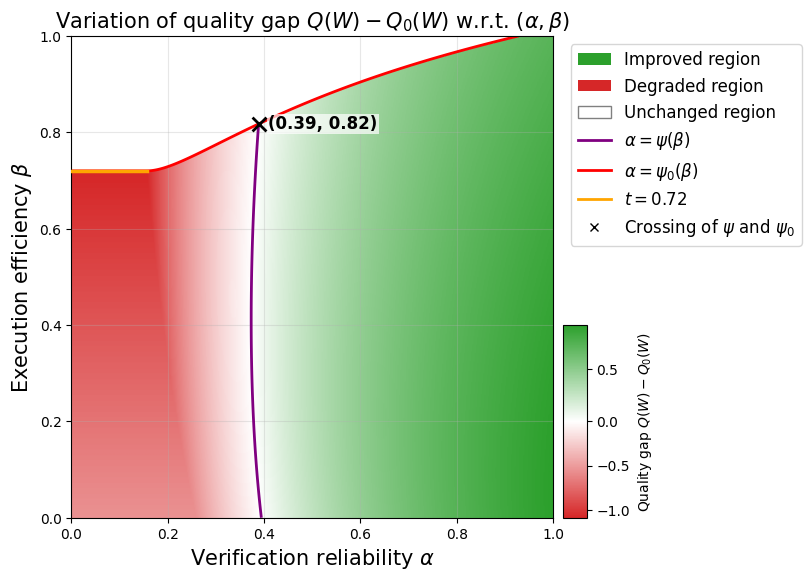}
			\caption{Quality gap in Theorem \ref{thm:quality}}
			\label{fig:gap}
		\end{subfigure}
		\hfill
		\begin{subfigure}[b]{0.3\textwidth}
			\centering
			\includegraphics[width=\linewidth]{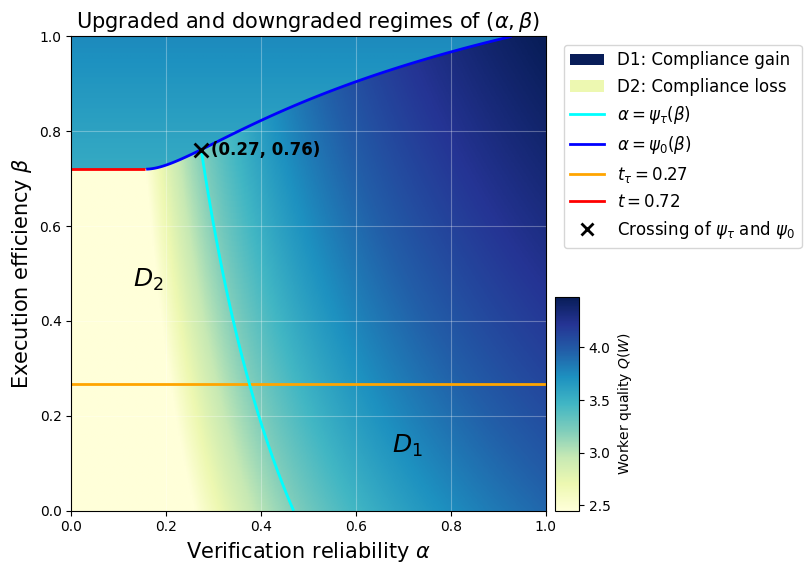}
			\caption{Quality grade in Theorem \ref{thm:region}}
			\label{fig:grading}
		\end{subfigure}
		\caption{Plots illustrating the worker action and quality regimes characterized in Theorems~\ref{thm:action}, \ref{thm:quality}, and \ref{thm:region} as functions of abilities $(\alpha,\beta)$, under the default parameter setting
			$
			(b_W, \ell_W, b_I, \ell_I, \xi, \tau, p_a, C_a, p_w) = (8, 6, 14, 12, 0.3, 6.4, 0.65, 0, 0.75),
			$
			and the functional choices $C_v(s)=s$, $C_w(\beta)=5(1-\beta)$, and $\phi(s;\alpha)=1-\frac{1}{1+2\alpha s}$.
			Discussion on the choice of parameters and functions, and closed-form expressions for key quantities, including $(d^\star, s^\star)$ and the boundaries $\psi_0$, $\psi_1$, $\psi$, and $\psi_\tau$, are provided in Section~\ref{sec:simulation}.
		}
		\label{fig:example}
	\end{figure*}

	This section presents our main theoretical results on how AI assistance reshapes worker behavior and institutional worker quality.
	Throughout, we fix the task profile (benefit parameters $b_W, b_I$, loss parameters $\ell_W, \ell_I$, and discount factor $\xi$), AI characteristics (task success probability $p_a$ and execution cost $C_a$), and baseline worker characteristics (task success probability $p_w$ and verification cost function $C_v(\cdot)$.
	Under this setup, the worker’s optimal action $(d^\star, s^\star)$ and resulting worker quality $Q(W)$ depend only on verification reliability $\alpha$ and execution efficiency $\beta$, which parameterize the detection function $\phi(\cdot)$ and execution cost $C_w$.
	
	We first show that the optimal worker action is nonlinear and discontinuous in $(\alpha, \beta)$ (Theorem~\ref{thm:action}).
	We then analyze how AI assistance alters worker quality relative to the pre-AI baseline: Theorem~\ref{thm:quality} characterizes when AI improves worker quality, while Theorem~\ref{thm:region} identifies which workers are upgraded or downgraded, highlighting the central role of verification ability.

	\paragraph{Characterization of worker action.}
	To compute the verification decision, we define the \emph{verification surplus} function
	\[
	\Phi(s; \alpha, \beta) := (1 - p_a)\bigl((b_W + \ell_W)p_w - C_w\bigr)\,\phi(s) - C_v(s),
	\]
	which captures the marginal benefit of detecting AI errors net of verification cost.
	This function corresponds to the $s$-dependent component of $f_W(s)$.
	Let
	\[
	s^\dagger(\alpha,\beta) \in \arg\max_{s\in[0,1]} \Phi(s;\alpha,\beta)
	\]
	denote the optimal verification effort, conditional on verification being employed.
	Since $f_W(s)=\Phi(s;\alpha,\beta)+\textnormal{const}$, where the constant term does not depend on $s$.
	By the definition of $\Phi$, this is equivalent to
	$
	s^\dagger(\alpha,\beta) \in \arg\max_{s\in[0,1]} f_W(s).
	$
	We also define the \emph{manual-to-delegation gain}
	\begin{align}
		\label{eq:Delta}
		\Delta(\beta) := U_W(1,0) - U_W(0,0) = f_W(0),
	\end{align}
	which represents the worker’s net utility change when switching from manual work to pure delegation (i.e., delegating execution to AI without verification).
	The last equality follows from the decomposition $U_W(d,s)=f_W(s)\,d+g_W$.
	Since no verification effort is exerted, we have $\phi(0)=0$ and $C_v(0)=0$ for any $\alpha$, and hence $\Delta$ depends only on the execution efficiency $\beta$.
	Below we characterize the utility-driven worker action.
	
	\begin{theorem}[\bf{Worker optimal action}]
		\label{thm:action}
		Fix the task profile, AI characteristics, and baseline worker characteristics.
		There exist continuous functions
		$\psi_0: [t,\infty) \rightarrow \R_{\ge 0}$, monotonically increasing,
		and $\psi_1: [0,t] \rightarrow \R_{\ge 0}$, monotonically decreasing, such that, for all $\beta$ in their respective domains,
		\begin{align*}
			f_W\!\bigl(s^\dagger(\psi_0(\beta), \beta)\bigr) = 0
			\quad \text{and} \quad
			\partial_s \Phi\!\bigl(0; \psi_1(\beta), \beta\bigr) = 0 .
		\end{align*}
		Then the optimal action $(d^\star, s^\star)$ takes the following form:
		\begin{itemize}
			\item {\bf (Manual work)} $(0,0)$ if $(\beta \ge t)$ and $(\alpha < \psi_0(\beta))$;
			\item {\bf (Pure delegation)} $(1,0)$ if $(\beta < t)$ and $(\alpha < \psi_1(\beta))$;
			\item {\bf (Verified delegation)} $(1, s^\dagger(\alpha,\beta))$ otherwise.
		\end{itemize}
	\end{theorem}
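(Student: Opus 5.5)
The plan is to exploit the affine structure $U_W(d,s)=f_W(s)\,d+g_W$. For fixed $s$, the optimal $d$ is $1$ if $f_W(s)>0$ and $0$ if $f_W(s)<0$. The worker's problem therefore reduces to comparing $0$ (manual work, value $g_W$) with $\max_{s\in[0,1]} f_W(s)=f_W(s^\dagger(\alpha,\beta))$. Write $f_W(s)=\Delta(\beta)+\Phi(s;\alpha,\beta)$ with
\[
\Delta(\beta)=(b_W+\ell_W)(p_a-p_w)+C_w(\beta)-C_a,
\]
\[
V(\alpha,\beta):=\max_{s}\Phi(s;\alpha,\beta)=\Phi\bigl(s^\dagger(\alpha,\beta);\alpha,\beta\bigr)\ge \Phi(0)=0.
\]
Since $C_w$ is decreasing in $\beta$, $\Delta$ is decreasing. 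I would define $t$ as the (unique, by continuity and monotonicity) root of $\Delta(t)=0$, so that $\Delta>0$ for $\beta<t$ and $\Delta\le 0$ for $\beta\ge t$.

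\textbf{Case $\beta<t$.} Here $f_W(s^\dagger)\ge\Delta(\beta)>0$, so $d^\star=1$ and only $s^\star=s^\dagger$ remains to be determined. Let $K(\beta):=(1-p_a)\bigl((b_W+\ell_W)p_w-C_w(\beta)\bigr)$, and assume $K(\beta)\ge 0$ (this is needed so that $\Phi$ is concave; otherwise $\Phi<0$ for $s>0$ and $s^\dagger=0$ trivially). Then $\Phi(\cdot;\alpha,\beta)=K\phi(\cdot;\alpha)-C_v$ is strictly concave. Hence $s^\dagger=0$ if and only if
\[
\partial_s\Phi(0;\alpha,\beta)=K(\beta)\,\partial_s\phi(0;\alpha)-C_v'(0)\le 0.
\]
$K$ is increasing in $\beta$, and $\partial_s\phi(0;\alpha)$ should be increasing in $\alpha$; both hold for the exponential and inverse-linear forms, where it equals $\alpha$. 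So the equation $\partial_s\Phi(0;\psi_1(\beta),\beta)=0$ defines $\psi_1(\beta)=C_v'(0)/(K(\beta)\,\partial_s\phi(0;\cdot))^{-1}$ in the appropriate inverse sense. This $\psi_1$ is continuous and decreasing in $\beta$. It follows that pure delegation occurs exactly when $\alpha<\psi_1(\beta)$; otherwise $s^\dagger>0$ and we are in verified delegation.

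\textbf{Case $\beta\ge t$.} Now $\Delta\le 0$, so delegation occurs if and only if $F(\alpha,\beta):=\Delta(\beta)+V(\alpha,\beta)>0$. Whenever this holds, $V>0$ and hence $s^\dagger>0$, so the only delegating option is verified delegation. I would establish two monotonicity properties of $F$.
\begin{itemize}
\item $F$ is increasing in $\alpha$, since $\phi$ is increasing in $\alpha$ and $K\ge 0$.
\item $F$ is decreasing in $\beta$. By the envelope theorem (or directly, since a max of decreasing functions is decreasing), $\partial_\beta\bigl[\Delta+\Phi(s)\bigr]=C_w'(\beta)\bigl(1-(1-p_a)\phi(s)\bigr)\le 0$ for every fixed $s$.
\end{itemize}
Continuity of $F$ follows from Berge's maximum theorem. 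Then $\psi_0(\beta):=\inf\{\alpha: F(\alpha,\beta)>0\}$ satisfies $F(\psi_0(\beta),\beta)=0$, i.e.\ $f_W(s^\dagger(\psi_0(\beta),\beta))=0$. It is increasing in $\beta$, because raising $\beta$ lowers $F$ and so requires a larger $\alpha$ to restore the zero. Continuity of $\psi_0$ follows from strict monotonicity of $F$ in $\alpha$ via an implicit-function or intermediate-value argument. For $\alpha<\psi_0(\beta)$ we get $(0,0)$, and otherwise $(1,s^\dagger)$.

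The main obstacle I expect is in this second case: guaranteeing that $F(\cdot,\beta)$ actually crosses zero and is strictly increasing in $\alpha$, so that $\psi_0$ is a well-defined, finite, continuous function. This requires $\phi(s;\alpha)\to$ a large enough value as $\alpha\to\infty$, and strict dependence on $\alpha$ at $s^\dagger>0$. I would first try to prove strictness by evaluating at $s^\dagger(\alpha,\beta)>0$, where $\phi$ is strictly increasing in $\alpha$. For the existence of the crossing, I would either adopt the convention $\psi_0=\infty$ or add a mild assumption. Boundary ties (e.g.\ $\alpha=\psi_0$, $\beta=t$) would be assigned by the tie-breaking implicit in the theorem's ``otherwise'' clause.
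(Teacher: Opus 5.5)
Your proposal is correct and follows essentially the same route as the paper's proof. Both split on the sign of $\Delta(\beta)=f_W(0)$ around $t$, use concavity of $\Phi$ to get $s^\dagger=0$ iff $\partial_s\Phi(0;\alpha,\beta)=K(\beta)\,\partial_s\phi(0;\alpha)-C_v'(0)\le 0$ (giving $\psi_1$), and use that $\max_s f_W(s)$ is non-decreasing in $\alpha$ and non-increasing in $\beta$, via the same pointwise bound $C_w'(\beta)\bigl(1-(1-p_a)\phi\bigr)\le 0$, to get $\psi_0$.

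You depart from the paper in two places. First, your handling of existence and continuity of $\psi_0$ goes beyond the paper, which simply assumes the implicitly defined thresholds exist. This covers the infimum definition, the flat region at $\beta=t$, strictness in $\alpha$ where $s^\dagger>0$, Berge, and a possible missing zero crossing. Second, your appeal to specific forms for the monotonicity of $\partial_s\phi(0;\alpha)$ can be dropped. As in the paper, $\partial_s\phi(0;\alpha)=\lim_{u\to 0^{+}}\phi(u;\alpha)/u$ inherits monotonicity in $\alpha$ directly from $\phi$.
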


	\noindent
	This result shows that the worker’s optimal action lies in one of three regimes: manual work $(0,0)$, pure delegation $(1,0)$, and verified delegation $(1, s^\dagger(\alpha,\beta))$.
	The characterization therefore partitions the
	ability space $(\alpha,\beta)$ into distinct workflow regimes, with regime
	boundaries determined by the worker's delegation and verification incentives.
	In particular, the optimal delegation decision $d^\star$ is binary, leading to sharp regime shifts.
	This discreteness follows from the fact that worker utility $U_W(d,s)$ is affine in the delegation variable $d$, highlighting the structural source of abrupt action transitions under AI assistance (Figure~\ref{fig:action}).
	For example, as $\beta$ decreases past a threshold $t$, the worker may abruptly switch from manual work to pure delegation.
	Similarly, as verification reliability $\alpha$ crosses a threshold $\psi_0(\beta)$, the optimal verification effort $s^\star$ can jump discontinuously from $0$ to $s^\dagger(\alpha,\beta)$.
	These phase transitions provide an explanation for empirical findings that accountability failures can emerge abruptly rather than gradually in AI-assisted workflows \cite{zhang_humaneai_tmp117}.
	
	The threshold condition $\Delta(t)=0$ characterizes indifference between manual work and pure delegation.
	The condition
	$
	f_W\bigl(s^\dagger(\psi_0(\beta),\beta)\bigr)=0
	$
	implies that, at $(\alpha,\beta)=(\psi_0(\beta),\beta)$, the worker is indifferent between manual work and verified delegation, so $\psi_0$ separates these regimes.
	Finally, the condition
	$\partial_s \Phi\bigl(0;\psi_1(\beta),\beta\bigr)=0$
	implies that verification yields zero marginal surplus at $s=0$, so $\psi_1$ separates pure and verified delegation.
	When $\alpha$ is sufficiently large, the worker enters the verified-delegation regime, underscoring the role of verification ability in preventing over-delegation.
	Moreover, the monotonicity of the separatrices $\psi_0$ and $\psi_1$ implies that the optimal delegation decision $d^\star$ is non-increasing in execution efficiency $\beta$:
	higher execution efficiency makes manual work more attractive, reducing reliance on AI.
	
	\paragraph{Worker quality under AI assistance.}
	Given the worker’s optimal action $(d^\star,s^\star)$, we can compute the resulting worker quality $Q(W)$.
	A central question is whether $Q(W)$ exceeds the no-AI baseline quality $Q_0(W)$.
	This question is practically relevant for workers, who seek to remain qualified and avoid job loss.
	It is equally important for institutions, for whom worker quality directly determines productivity, liability, and risk.
	From the institutional perspective, a key concern is whether workers are upgraded or downgraded under AI assistance, as this directly affects productivity.
	Let $\tau \ge 0$ denote a qualification threshold for worker quality.
	Formally, we ask which workers are \emph{upgraded}: moving from $Q_0(W) < \tau$ to $Q(W) \ge \tau$ and, which are \emph{downgraded}: moving from $Q_0(W) \ge \tau$ to $Q(W) < \tau$.

	To address these questions, we introduce technical assumptions.
	The first ensures that the institution places higher marginal value on task success than the worker does.
	
	\begin{assumption}[\textbf{Institutional dominance}]
		\label{assum:dominant}
		Assume that
		$
		(b_I + \ell_I) > \xi\,(b_W + \ell_W).
		$
	\end{assumption}
	
	\noindent
	This assumption is typically reasonable in practice: institutions often have larger benefits from success and larger losses from failure than individual workers (i.e., $b_I > b_W$ and $\ell_I > \ell_W$), while the institution may bear only a discounted fraction of the worker’s private cost (i.e., $\xi < 1$).

	A second assumption ensures that, under the worker’s optimal verification choice, higher verification reliability $\alpha$ leads to higher detection probability.
	
	\begin{assumption}[\textbf{Monotonicity of detection}]
		\label{assum:detection}
		Assume that for any $\alpha,\beta \ge 0$,
		$
		\partial_{\alpha}\,\phi\bigl(s^\dagger(\alpha,\beta)\bigr) \ge 0.
		$
	\end{assumption}
	
	\noindent
	This assumption is intuitive: more reliable verifiers should detect more under optimal behavior.
	We provide concrete examples validating this condition in Section~\ref{sec:assum_detection}.
	We first study whether AI assistance improves worker quality relative to the no-AI baseline, i.e., whether $Q(W) - Q_0(W) > 0$ or not.
	Recall that $Q_0(W) = g_I$, which immediately implies the following claim.
	
	\begin{proposition}[\textbf{Worker quality difference}]
		\label{lm:reduction}
		$
		Q(W) - Q_0(W) = f_I(s^\star)\,d^\star.
		$
	\end{proposition}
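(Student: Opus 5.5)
The argument follows directly from the affine structure of the institutional utility in the delegation variable, recorded right after \eqref{eq:cost}. Substituting \eqref{eq:prob} and \eqref{eq:cost} into \eqref{eq:institution} gives $U_I(d,s)=f_I(s)\,d+g_I$ for every $(d,s)\in[0,1]^2$. Here $g_I$ collects all terms not multiplied by $d$, and $f_I(s)$ collects the $d$-coefficient. To make this explicit, I would expand $p(d,s)=p_w+d\bigl(p_a-p_w+(1-p_a)\phi(s)p_w\bigr)$ and $\cost(d,s)=C_w+d\bigl(C_a+C_v(s)+(1-p_a)\phi(s)C_w-C_w\bigr)$. It then follows that
\begin{align*}
g_I &= (b_I+\ell_I)p_w-\ell_I-\xi C_w,\\
f_I(s) &= (b_I+\ell_I)\bigl(p_a-p_w+(1-p_a)\phi(s)p_w\bigr)-\xi\bigl(C_a+C_v(s)+(1-p_a)\phi(s)C_w-C_w\bigr).
\end{align*}

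Next I would identify the baseline. The pre-AI workflow is $(d,s)=(0,0)$. Evaluating the decomposition there gives $Q_0(W)=U_I(0,0)=f_I(0)\cdot 0+g_I=g_I$. This uses only $d=0$; the facts $\phi(0)=0$ and $C_v(0)=0$ are not even needed. It agrees with the identification $g_I=Q_0(W)$ stated in the model section.

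Finally, by definition \eqref{eq:quality}, $Q(W)=U_I(d^\star,s^\star)=f_I(s^\star)d^\star+g_I$. Subtracting the baseline yields $Q(W)-Q_0(W)=f_I(s^\star)\,d^\star$.

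There is no real obstacle. The only point needing care is confirming that $g_I$ is genuinely independent of $s$. This holds because every $s$-dependent term in $p$ and $\cost$ carries a factor $d$: the verification cost enters as $d\,C_v(s)$, and the detection and correction terms carry $d(1-p_a)\phi(s)$. The statement also does not depend on which maximizer in \eqref{eq:action} is selected, since the identity holds pointwise for any $(d,s)$.
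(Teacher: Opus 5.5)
Your proposal is correct and takes essentially the same approach as the paper, which simply observes that $U_I(d,s)=f_I(s)\,d+g_I$ with $g_I=U_I(0,0)=Q_0(W)$ and then evaluates at $(d^\star,s^\star)$. Your explicit expressions for $g_I$ and $f_I(s)$ also agree with those given in Section~\ref{sec:utility}.
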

	
	\noindent
	Consequently, the question reduces to determining the sign of $f_I(s^\star)\,d^\star$, which leads to the following theorem.

	\begin{theorem}[\bf{Worker quality improvement v.s. no AI}]
		\label{thm:quality}
		Fix the task profile, AI characteristics, and baseline worker characteristics.
		Let $t$ and $\psi_0(\cdot)$ be as given in Theorem~\ref{thm:action}.
		Under Assumptions~\ref{assum:dominant} and~\ref{assum:detection}, there exists a continuous function $\psi: \R_{\ge 0} \rightarrow \R_{\ge 0}$ such that
		\[
		(f_I(s^\star) = 0) \wedge (d^\star = 1) \ \text{for any pair } (\alpha,\beta) = (\psi(\beta),\beta).
		\]
		Moreover,
		\begin{itemize}
			\item {\bf (Improved)} $Q(W) > Q_0(W)$ holds iff $\alpha > \psi(\beta)$;
			\item {\bf (Unchanged)} $Q(W) = Q_0(W)$ holds iff $\bigl((\beta \ge t)\wedge (\alpha < \psi_0(\beta))\bigr)$ or $\alpha = \psi(\beta)$;
			\item {\bf (Degraded)} $Q(W) < Q_0(W)$ holds otherwise.
		\end{itemize}
	\end{theorem}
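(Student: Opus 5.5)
We first write out $f_W$ and $f_I$ explicitly. Substituting \eqref{eq:prob} and \eqref{eq:cost} into \eqref{eq:worker} and \eqref{eq:institution} gives
\[
f_W(s) = (b_W+\ell_W)(p_a-p_w) + C_w - C_a + \Phi(s;\alpha,\beta),
\]
\[
f_I(s) = (b_I+\ell_I)(p_a-p_w) + \xi(C_w - C_a) + (1-p_a)\bigl((b_I+\ell_I)p_w - \xi C_w\bigr)\phi(s) - \xi C_v(s).
\]
By Proposition~\ref{lm:reduction}, $Q(W)-Q_0(W)=f_I(s^\star)d^\star$. So the whole question is the sign of $f_I(s^\star)$ on the region where $d^\star=1$.

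On the manual region of Theorem~\ref{thm:action} we have $d^\star=0$, so the quality is unchanged there; this gives the first clause of (Unchanged). On the complement $d^\star=1$ and $s^\star\in\{0,s^\dagger\}$. The key device is to express $f_I(s^\star)$ through $f_W(s^\star)$:
\[
f_I(s) = \xi f_W(s) + \kappa\bigl[(p_a-p_w) + (1-p_a)p_w\,\phi(s)\bigr], \qquad \kappa := (b_I+\ell_I)-\xi(b_W+\ell_W) > 0 ,
\]
where $\kappa>0$ by Assumption~\ref{assum:dominant}. Since $d^\star=1$ is optimal, $f_W(s^\star)\ge 0$. The bracket is exactly the gain in success probability from delegating, $p(1,s)-p(0,0)$. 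Hence $f_I(s^\star)$ is the sum of a nonnegative worker term and $\kappa$ times the success-probability gain $h(\alpha,\beta):=p_a-p_w+(1-p_a)p_w\,\phi(s^\star(\alpha,\beta))$.

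The aim is to show that, for each fixed $\beta$, $\alpha\mapsto f_I(s^\star(\alpha,\beta))$ is continuous and nondecreasing on the delegation region, with exactly one sign change. The steps are:
\begin{itemize}
\item[(a)] In the pure-delegation region $s^\star=0$, so $f_I=\xi\Delta(\beta)+\kappa(p_a-p_w)$, a quantity independent of $\alpha$.
\item[(b)] In the verified region, take the derivative in $\alpha$. By the envelope theorem, $\partial_\alpha f_W(s^\dagger)=\partial_\alpha\Phi\ge 0$, since $\phi$ is increasing in $\alpha$. Assumption~\ref{assum:detection} gives $\partial_\alpha h\ge 0$.
\item[(c)] Continuity across $\psi_1$ holds because $s^\dagger\to 0$ there (strict concavity of $\Phi$, with zero slope at $0$).
\end{itemize}
With monotonicity in hand, define $\psi(\beta)$ as the crossing point $\inf\{\alpha: f_I(s^\star)>0\}$ on the delegation region. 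Continuity of $\psi$ in $\beta$ then follows from the implicit function theorem, or from a monotone-limit argument using continuity of $s^\dagger$ in $(\alpha,\beta)$, which holds by Berge's maximum theorem and uniqueness of the maximizer under strict concavity.

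The main obstacle is strictness together with the correct placement of $\psi$ relative to the manual boundary. We need to rule out two situations: $f_I$ being flat at zero on an interval of $\alpha$, which would break the ``iff'' statements, and $\psi(\beta)$ falling inside the manual region, where $d^\star=0$. For strictness, we would use that $\partial_\alpha\Phi(s^\dagger)>0$ whenever $s^\dagger>0$. Then $f_I$ is strictly increasing on the verified region, and on the pure-delegation region it is constant. If the pure-delegation constant is zero, we must treat that degenerate case or argue it is excluded; we expect to handle it by defining $\psi$ at $\psi_1(\beta)$.

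For placement, consider the manual boundary $\alpha=\psi_0(\beta)$. There $f_W=0$, so $f_I=\kappa h$, and it must be checked whether $h$ can be negative there. If $h<0$ at the manual boundary, the zero of $f_I$ lies strictly inside the verified region. If instead $f_I\ge0$ already at the boundary, we must show it is strictly positive, so that just past $\psi_0$ the worker is improved and $\psi(\beta)$ can be set to $\psi_0(\beta)$ consistently with the statement's ``$d^\star=1$ at $\psi$''. This case analysis, in particular verifying $f_I(s^\star)=0$ with $d^\star=1$ exactly at $\alpha=\psi(\beta)$ for all $\beta$, is where we expect the detailed work. Once it is done, the three cases (Improved, Unchanged, Degraded) follow directly by reading off the sign of $f_I(s^\star)d^\star$.
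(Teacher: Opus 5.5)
\textbf{Overall.} Your route is the paper's route. It uses the reduction $Q-Q_0=f_I(s^\star)d^\star$ and the identity $f_I=\xi f_W+\kappa\,(p(1,s)-p_w)$, which is the paper's $f_I=\xi f_W+(K_I-\xi K_W)\phi+\text{const}$ from the proof of Lemma~\ref{lm:quality_gap}. It then uses monotonicity of $f_I(s^\star)$ in $\alpha$ on the delegation set and takes $\psi$ to be the crossing point. The genuine gap is the step you defer as ``detailed work'': in the case you flag, it cannot be carried out, because the statement is false there.

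\textbf{The counterexample case.} At $\alpha=\psi_0(\beta)$ you have $f_W(s^\dagger)=0$, so $f_I(s^\dagger)=\kappa\,(p(1,s^\dagger)-p_w)$. Since $f_W(s)=(b_W+\ell_W)(p(1,s)-p_w)-(\cost(1,s)-C_w)$, this is strictly positive whenever verified delegation costs more than manual work at the boundary. This actually occurs. In the running example of Figure~\ref{fig:example} at $\beta=1$ (so $C_w=C_a=0$), one gets $f_W(s)=14\,(p(1,s)-p_w)-s$. At $\alpha=\psi_0(1)\approx 0.93$ this forces $s^\dagger>0$ (because $\Delta(1)<0$) and $f_I(s^\dagger)=\kappa\,s^\dagger/14>0$. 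By monotonicity, $f_I(s^\star)>0$ on the whole delegation set $\{\alpha\ge\psi_0(\beta)\}$. Consequences for such $\beta$:
\begin{itemize}
\item No $\alpha$ at all has $d^\star=1$ and $f_I(s^\star)=0$, so the first clause fails outright.
\item With the tie-break of Theorem~\ref{thm:action}, the improved set is the closed half-line $[\psi_0(\beta),\infty)$, which is not of the form $\{\alpha>\psi(\beta)\}$.
\end{itemize}
So setting $\psi(\beta)=\psi_0(\beta)$ ``consistently with the statement'' is impossible, as is any other choice.

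\textbf{The pure-delegation degeneracy.} This case is of the same kind. For $\beta<t$, the value $f_I(0)=\xi\Delta(\beta)+\kappa(p_a-p_w)$ does not depend on $\alpha$. If it is $\ge 0$ (for example, $p_a\ge p_w$ and $C_w>C_a$), then either $Q=Q_0$ on all of $[0,\psi_1(\beta)]$, or $Q>Q_0$ already at $\alpha=0$. Setting $\psi(\beta)=\psi_1(\beta)$ does not restore the ``iff''s in either case.

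\textbf{Further unproved steps.} Two more pieces are missing:
\begin{itemize}
\item Existence of a crossing, when $f_I<0$ at the left endpoint, needs an argument about $f_I(s^\dagger)$ as $\alpha\to\infty$.
\item Your strictness argument via the $\xi f_W$ term needs $\xi>0$.
\end{itemize}

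\textbf{What the paper does.} The paper's proof does not address any of this. It simply \emph{defines} $\psi$ by the break-even condition and derives strict inequalities from weak monotonicity. Its own worked example (Section~\ref{sec:simulation}) in effect uses $\psi=\max\{\psi_0,\psi'\}$, with $\psi_0:\equiv 0$ on $[0,t)$ and $\psi'$ the zero of $\alpha\mapsto f_I(s^\dagger(\alpha,\beta))$. For $\beta$ above the reported intersection $\beta\approx 0.82$, this equals $\psi_0$ and violates $f_I(s^\star)=0$.

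\textbf{What your argument can prove.} It can establish that amended statement, under nondegeneracy hypotheses: $\alpha\mapsto f_I(s^\dagger(\alpha,\beta))$ starts negative and changes sign, and $\xi>0$. Then:
\begin{itemize}
\item quality improves iff $\alpha>\max\{\psi_0(\beta),\psi'(\beta)\}$, up to the tie at $\psi_0$;
\item quality degrades iff $d^\star=1$ and $\alpha<\psi'(\beta)$;
\item the break-even property at $\psi(\beta)$ holds only where $\psi'(\beta)\ge\psi_0(\beta)$.
\end{itemize}
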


	\noindent
	Thus, AI assistance reshapes worker quality in heterogeneous ways.
	The regime in which quality remains unchanged largely coincides with the manual-work regime ($(d^\star, s^\star) = (0,0)$).
	Beyond this, worker quality improves with high verification reliability $\alpha$ but can degrade with low $\alpha$, as over-delegation reduces task success.
	Moreover, a higher execution efficiency $\beta$ does not necessarily translate into improved worker quality under AI assistance since delegation behavior can weaken the induced utility gain.

	Figure~\ref{fig:gap} visualizes the three regimes governing how worker quality varies under AI assistance in Theorem~\ref{thm:quality}.
	Notably, there is a specific region in which quality is degraded, located below the intersection point $(\alpha,\beta)=(0.39,\,0.82)$ of the separatrices $\psi$ and $\psi_0$.
	This region highlights a countervailing effect: increasing verification reliability can induce workers to switch from manual work to delegation even when reliability remains insufficient to make delegation quality-improving, thereby triggering over-delegation and reducing quality.
	
	Moreover, we have
	$
	t = 1 - \frac{C_a + (b_W+\ell_W)(p_w-p_a)}{5}
	$
	(see \eqref{eq:t_val}), which is monotonically increasing in AI ability (i.e., higher $p_a$ or lower $C_a$).
	Consequently, more capable AI enlarges the quality-degradation regime when $\alpha$ is close to $0$.
	This reveals a second countervailing effect: improved AI capability can induce workers with low verification reliability to over-delegate, thereby reducing quality; see also Figure \ref{fig:high_AI}.

	Next, we characterize the conditions under which workers are upgraded or downgraded under AI assistance.

	\begin{theorem}[\bf{Worker upgraded v.s.\ downgraded}]
		\label{thm:region}
		Fix the task profile, AI characteristics, and baseline worker characteristics.
		Fix a grading threshold $\tau \ge 0$.
		Let $t_\tau \ge 0$ be the solution to $Q_0(W;\beta)=\tau$.
		Under Assumptions~\ref{assum:dominant} and~\ref{assum:detection}, there exists a monotonically decreasing function $\psi_\tau:\R_{\ge 0}\rightarrow \R_{\ge 0}$ such that
		$Q(W)=\tau$ for the ability pair $(\alpha,\beta)=(\psi_\tau(\beta),\beta)$.
		Then
		\begin{itemize}
			\item {\bf (Compliance gain)} $(Q(W)>\tau)\wedge(Q_0(W)<\tau)$ holds iff
			$
			\bigl(\beta < \min\{t,t_\tau\}\bigr)\wedge\bigl(\alpha > \psi_\tau(\beta)\bigr)
			$ or 
			$\bigl(t \le \beta < t_\tau\bigr)\wedge\bigl(\alpha > \max\{\psi_0(\beta),\psi_\tau(\beta)\}\bigr).
			$
			\item {\bf (Compliance loss)} $(Q(W)<\tau)\wedge(Q_0(W)>\tau)$ holds iff
			$
			\bigl(t_\tau < \beta \le t\bigr)\wedge\bigl(\alpha < \psi_\tau(\beta)\bigr)
			$ or $
			\bigl(\beta > \max\{t,t_\tau\}\bigr)\wedge\bigl(\psi_0(\beta) < \alpha < \psi_\tau(\beta)\bigr).
			$
		\end{itemize}
	\end{theorem}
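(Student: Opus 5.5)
We prove Theorem~\ref{thm:region} by combining the regime decomposition of Theorem~\ref{thm:action} with two monotonicity facts. The first concerns the baseline: $Q_0(W)=g_I$ depends only on $\beta$ and is strictly increasing in it. Its $\beta$-dependence is $-(1+\xi)C_w(\beta)$ up to terms involving $p_w$ and the institutional payoffs, and $C_w$ is decreasing in $\beta$. Hence $t_\tau$ is well defined, and $Q_0(W)<\tau$ iff $\beta<t_\tau$.

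The second fact concerns the post-AI quality $Q(W)=f_I(s^\star)d^\star+g_I$. We describe it regime by regime.
\begin{itemize}
\item In the manual regime it equals $g_I$.
\item In the pure-delegation regime it equals $U_I(1,0)$, which does not depend on $\beta$. This holds because with $d=1$ and $s=0$ we have $\phi(0)=0$, so $C_w$ never enters. It also does not depend on $\alpha$.
\item In the verified-delegation regime it equals $U_I(1,s^\dagger)$, whose $\alpha,\beta$-dependence enters only through $\phi(s^\dagger)$, $C_v(s^\dagger)$ and $C_w$.
\end{itemize}

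\textbf{Step 1: monotonicity of $Q$ in $\alpha$.} In the verified regime we write
\[
U_I(1,s^\dagger)=\text{const}+(1-p_a)\bigl((b_I+\ell_I)p_w-\xi C_w\bigr)\phi(s^\dagger)-\xi C_v(s^\dagger).
\]
We compare this with $\xi\Phi(s^\dagger)$. By the envelope theorem, $\xi\Phi(s^\dagger)$ is nondecreasing in $\alpha$ (since $\phi$ is increasing in $\alpha$). The remainder is
\[
(1-p_a)\bigl((b_I+\ell_I)-\xi(b_W+\ell_W)\bigr)p_w\,\phi(s^\dagger),
\]
which is nondecreasing in $\alpha$ by Assumptions~\ref{assum:dominant} and~\ref{assum:detection}. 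Hence $Q$ is nondecreasing in $\alpha$ within each regime. Across regimes it jumps upward at $\psi_1$ and $\psi_0$: at these boundaries the worker is indifferent, and the same decomposition shows the institution weakly prefers the verified action there.

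We treat $\beta$ in the same way. Increasing $\beta$ lowers $C_w$. This raises $\xi\Phi$, since $\Phi$ is nondecreasing as $C_w$ falls, and it raises the $-\xi C_w\phi$ term. So $Q$ is nondecreasing in $\beta$ on the verified regime and constant on the pure-delegation regime.

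These monotonicities give $\psi_\tau(\beta)=\inf\{\alpha:Q(W)\ge\tau\}$ on the delegation side. Since $Q$ increases in both variables, the level set $\{Q=\tau\}$ yields a decreasing $\psi_\tau$. Where the level is never reached we set $\psi_\tau=\infty$ or $0$ by convention.

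\textbf{Step 2: case analysis.} We split according to $\beta$ relative to $t$ and $t_\tau$, using the regimes of Theorem~\ref{thm:action}.

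\emph{Compliance gain} requires $Q_0<\tau$, i.e.\ $\beta<t_\tau$, together with $Q>\tau$. Manual work is excluded here, since there $Q=Q_0$.
\begin{itemize}
\item If $\beta<t$, every $\alpha$ lies in the pure or verified regime, so the condition is simply $\alpha>\psi_\tau(\beta)$.
\item If $t\le\beta<t_\tau$, delegation requires $\alpha>\psi_0(\beta)$, and we additionally need $\alpha>\psi_\tau(\beta)$.
\end{itemize}

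\emph{Compliance loss} requires $\beta>t_\tau$ and $Q<\tau$, which is again incompatible with manual work.
\begin{itemize}
\item If $\beta\le t$, the condition is $\alpha<\psi_\tau(\beta)$. The boundary case $\beta=t$ needs care with ties: pure delegation gives $\Delta=0$, and the paper's convention puts $\beta=t$ in the manual regime only when $\alpha<\psi_0$. So we check that the inequalities match the stated ones, adjusting the tie-breaking convention if necessary.
\item If $\beta>\max\{t,t_\tau\}$, the condition is $\psi_0<\alpha<\psi_\tau$.
\end{itemize}

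\textbf{Main obstacle.} The hardest part is Step 1. We must establish that $Q$ has no downward jump across the $\psi_0$ and $\psi_1$ boundaries, and that $\psi_\tau$ is a genuine single-valued decreasing function. Across $\psi_1$, continuity holds because $s^\dagger\to0$. Across $\psi_0$ there is a jump from $g_I$ to $U_I(1,s^\dagger)$. This jump does not affect $\psi_\tau$ itself, since $\psi_\tau$ only governs delegation-side outcomes, but it explains why $\max\{\psi_0,\psi_\tau\}$ appears in the statement. I would first verify the envelope argument carefully when $s^\dagger$ sits at the corner $s=1$.
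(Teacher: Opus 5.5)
Your route is the paper's: strict monotonicity of $Q_0=g_I$ in $\beta$ gives $t_\tau$, monotonicity of the delegated quality gives $\psi_\tau$, and your case split via Theorem~\ref{thm:action} matches the paper's. Your $\alpha$-argument writes $U_I(1,s^\dagger)=\xi\Phi(s^\dagger;\alpha,\beta)+D(1-p_a)p_w\,\phi(s^\dagger)+\text{const}$ with $D:=(b_I+\ell_I)-\xi(b_W+\ell_W)>0$. This is actually sounder than the paper's proof of Lemma~\ref{lm:quality}, which invokes $\partial s^\dagger/\partial\alpha\ge 0$; that claim is never established and is false in the unimodal example of Section~\ref{sec:assum_detection}.

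However, one claim in Step~1 is false and cannot be proved: that $Q$ jumps upward across $\psi_0$. The same algebra gives $f_I(s)=\xi f_W(s)+D\,\bigl(p(1,s)-p_w\bigr)$. At $\alpha=\psi_0(\beta)$ we have $f_W(s^\dagger)=0$, so the jump is $f_I(s^\dagger)=D\,\bigl(p(1,s^\dagger)-p_w\bigr)$. This is negative whenever $(1-p_a)p_w\phi(s^\dagger)<p_w-p_a$. In Figure~\ref{fig:example}, just above $\beta=t$, one has $\phi(s^\dagger)\approx 0$ at $\psi_0$, so the jump is about $-0.1D<0$. This downward jump is exactly the degraded strip of Theorem~\ref{thm:quality} and the second compliance-loss branch of the very statement you are proving. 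So your ``main obstacle'' is a false target, not something to overcome.

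Drop the cross-regime claim. Define $G(\alpha,\beta):=U_I(1,s^\dagger(\alpha,\beta))$ for \emph{all} pairs regardless of $d^\star$ (this is the paper's $Q(W)|_{d^\star=1}$), and take $\psi_\tau$ as its $\tau$-threshold. Then $Q=G$ where $d^\star=1$ and $Q=g_I$ where $d^\star=0$. The decreasing property must be derived for $G$, not for the realized quality. If you read $\inf\{\alpha:Q(W)\ge\tau\}$ on the delegation side with the realized $Q$, then for $\beta\ge t$ you get $\max\{\psi_0,\psi_\tau\}$, which increases in $\beta$ wherever $\psi_0$ binds.

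The $\beta$-step is also incomplete. In your decomposition, the term $D(1-p_a)p_w\,\phi(s^\dagger(\alpha,\beta))$ depends on $\beta$ only through $s^\dagger$. Your phrase ``raises the $-\xi C_w\phi$ term'' accounts only for the direct effect at fixed $s^\dagger$. You need $s^\dagger$ to be nondecreasing in $\beta$. This holds because $\Phi(s;\alpha,\beta')-\Phi(s;\alpha,\beta)=(1-p_a)\bigl(C_w(\beta)-C_w(\beta')\bigr)\phi(s)$ is nondecreasing in $s$ for $\beta'>\beta$ (increasing differences plus a unique maximizer). Alternatively, use the implicit-function step in the paper's proof of Lemma~\ref{lm:quality}. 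With it, $G$ is nondecreasing in both arguments and $\psi_\tau$ is decreasing, as claimed.

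Two small corrections:
\begin{itemize}
\item The $\beta$-dependence of $g_I$ is $-\xi C_w(\beta)$, not $-(1+\xi)C_w(\beta)$, so well-definedness of $t_\tau$ needs $\xi>0$.
\item The tie at $\beta=t$ you flagged is a real inconsistency between the statement's $t_\tau<\beta\le t$ and the manual-work convention of Theorem~\ref{thm:action}. The paper's proof simply asserts that delegation is preferred for $\beta\le t$. This is a measure-zero defect of the statement, not of your argument.
\end{itemize}
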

	\noindent
	This result identifies both a \emph{compliance gain} regime and a \emph{compliance loss} regime, separated by a boundary that depends on verification reliability $\alpha$.
	In particular, high $\alpha$ enables workers with low execution efficiency ($\beta < t_\tau$) to upgrade, while low $\alpha$ can cause workers with high execution efficiency ($\beta > t_\tau$) to be downgraded.
	The monotonicity of the separatrix $\psi_\tau$ indicates that workers with lower execution efficiency require higher verification reliability to remain qualified.
	
	Together with Theorem~\ref{thm:quality}, this yields a \emph{verification amplification} phenomenon: AI access can disproportionately benefit workers with strong evaluative capabilities, while disadvantaging others through rational over-delegation.
	Notably, these effects arise endogenously from rational optimization, without invoking behavioral bias or miscalibration.
	
	Figure~\ref{fig:grading} visualizes the upgraded and downgraded regimes characterized in Theorem~\ref{thm:region}.
	When $\beta < t$, worker quality $Q(W)$ increases more rapidly with $\alpha$ at low values of $\alpha$, revealing diminishing returns to verification upskilling.
	This highlights that verification upskilling is most valuable for workers with low initial reliability.

	In Section~\ref{sec:interventions_extensions}, we study multiple interventions and model extensions. We first analyze how
	worker-side and institutional interventions reshape the action, quality, and
	compliance regimes. We then relax several baseline assumptions, including task
	difficulty, belief calibration, and correction workflows. Across these variants,
	the core qualitative mechanism remains stable: AI assistance improves outcomes
	only when delegation is paired with sufficiently reliable verification, whereas
	rational delegation can still generate quality loss when verification is weak.
	
	\subsection{Overview of the proofs}
	\label{sec:overview}

	\begin{table*}[ht]
		\footnotesize
		\centering
		\caption{Summary of monotonicity properties of key quantities with respect to the ability parameters $\alpha$ and $\beta$.
			Here $\uparrow$ denotes increasing, $\downarrow$ denotes decreasing, $\slash$ denotes no clear monotonic pattern, and $\times$ denotes no dependence.
		}
		\begin{tabular}{cccccccccccccc}
			\toprule
			& $s^\dagger$ 
			& $d^\star$ 
			& $s^\star$ 
			& $Q_0$
			& $Q\mid_{d^\star = 1}$ 
			& $Q\mid_{d^\star = 1} - Q_0$
			& $\psi_0(\beta)$
			& $\psi_1(\beta)$
			& $\psi(\beta)$ 
			& $\psi_\tau(\beta)$
			& $f_W(s^\dagger)$ 
			& $\partial_s \Phi(0)$ 
			& $f_I(s^\dagger)$ \\
			\midrule
			$\alpha$ 
			& $ \slash $
			& $\uparrow$ 
			& $\slash$ 
			& $\times$ 
			& $\uparrow$ 
			& $\uparrow$
			& $\times$
			& $\times$
			& $\times$
			& $\times$ 
			& $\uparrow$
			& $\uparrow$
			& $\uparrow$ \\
			$\beta$ 
			& $\uparrow$ 
			& $\downarrow$ 
			& $\uparrow$ 
			& $\uparrow$ 
			& $\uparrow$ 
			& $\slash$ 
			& $\uparrow$
			& $\downarrow$
			& $\slash$
			& $\downarrow$ 
			& $\downarrow$ 
			& $\uparrow$ 
			& $\slash$ \\
			\bottomrule
		\end{tabular}
		\label{tab:monotonicity_summary}
	\end{table*}
	\noindent
	We summarize the main proof ideas; full proofs appear in Section~\ref{sec:proof}.
	All arguments rely on monotonicity properties of a small number of quantities with respect to verification reliability $\alpha$ and execution efficiency $\beta$, despite the fact that optimal verification effort depends implicitly on both parameters.
	These relationships are summarized in Table~\ref{tab:monotonicity_summary}.
	
	\paragraph{To Theorem~\ref{thm:action}.}
	We first characterize the structure of the worker’s optimal workflow.
	Recall that
	$U_W(d,s)=f_W(s)\,d + g_W$
	is affine in the delegation variable $d$, with all nontrivial dependence on verification captured by $f_W(s)$.
	Under strict concavity of $\phi$ and strict convexity of $C_v$, $f_W(s)$ is strictly concave, implying a unique optimal verification effort
	$s^\dagger(\alpha,\beta)\in\arg\max_{s\in[0,1]} f_W(s)$.
	The optimal delegation decision is therefore binary:
	\[
	d^\star=\mathbb{I}\!\left[f_W\bigl(s^\dagger\bigr)\geq 0\right],
	\]
	with the boundary between delegation and manual work given by
	$f_W\bigl(s^\dagger\bigr)=0$.
	
	To locate this boundary in ability space, we study how $f_W\bigl(s^\dagger(\alpha,\beta)\bigr)$ varies with $\alpha$ and $\beta$.
	
	\begin{lemma}[\textbf{Effects of $\alpha$ and $\beta$ on $f_W$}]
		\label{lm:effects_fW}
		$f_W\bigl(s^\dagger(\alpha,\beta)\bigr)$ is non-decreasing in $\alpha$ and non-increasing in $\beta$.
	\end{lemma}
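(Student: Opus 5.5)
We will work with the value function $V(\alpha,\beta) := \max_{s\in[0,1]} f_W(s;\alpha,\beta) = f_W\bigl(s^\dagger(\alpha,\beta)\bigr)$ rather than with $s^\dagger$ itself. This way the implicit dependence of $s^\dagger$ on $(\alpha,\beta)$ never has to be differentiated; this is an envelope-type argument. First we write out $f_W$ explicitly from \eqref{eq:prob}, \eqref{eq:cost}, and \eqref{eq:worker}. Collecting the terms multiplying $d$ gives
\[
f_W(s;\alpha,\beta) = (b_W+\ell_W)(p_a-p_w) + C_w - C_a + (1-p_a)\bigl((b_W+\ell_W)p_w - C_w\bigr)\phi(s;\alpha) - C_v(s),
\]
that is, $f_W(s) = \Delta(\beta) + \Phi(s;\alpha,\beta)$, where $\Delta(\beta) = (b_W+\ell_W)(p_a-p_w)+C_w(\beta)-C_a$.

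\textbf{Monotonicity in $\alpha$.} For every fixed $s$, $\phi(s;\alpha)$ is non-decreasing in $\alpha$, and $\alpha$ enters $f_W$ only through $\phi$. Consider the case where the coefficient $K(\beta) := (1-p_a)\bigl((b_W+\ell_W)p_w - C_w\bigr)$ is nonnegative. Then $s\mapsto f_W(s;\alpha,\beta)$ is pointwise non-decreasing in $\alpha$, so its supremum over $[0,1]$ is non-decreasing as well. Now consider the case $K(\beta)<0$. Then $\Phi(s)\le 0=\Phi(0)$ for all $s$, so $s^\dagger=0$ and $V=\Delta(\beta)$, which does not depend on $\alpha$. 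In either case $V$ is non-decreasing in $\alpha$.

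\textbf{Monotonicity in $\beta$.} Here $\beta$ enters only through $C_w(\beta)$, which is non-increasing in $\beta$ (higher efficiency means lower cost). The $C_w$-dependence of $f_W(s)$ is $C_w\bigl(1-(1-p_a)\phi(s)\bigr)$, and the coefficient $1-(1-p_a)\phi(s)\ge p_a\ge 0$ because $\phi\in[0,1]$. Hence, for each fixed $s$, $f_W(s;\alpha,\beta)$ is non-decreasing in $C_w$ and therefore non-increasing in $\beta$. Taking the maximum over $s\in[0,1]$ preserves this: if $\beta'>\beta$, then $f_W(s;\beta')\le f_W(s;\beta)\le V(\alpha,\beta)$ for all $s$, so $V(\alpha,\beta')\le V(\alpha,\beta)$. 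The same pointwise-sup argument also covers the $\alpha$ case, once the sign issue above is handled.

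The main obstacle is the sign of $K(\beta)$ in the $\alpha$ argument. A larger $\alpha$ raises $\phi$, but when $K<0$ higher detection would lower $f_W(s)$ for $s>0$. We resolve this by observing that in that regime the optimum sits at $s=0$, where $\phi(0)=0$ independently of $\alpha$. We should also verify that $s^\dagger$ is well defined as the maximizer, which holds by strict concavity of $f_W$ when $K>0$ and trivially otherwise. Differentiability in $\alpha$ and $\beta$ is not needed anywhere, since the argument uses only pointwise comparisons followed by taking the maximum.
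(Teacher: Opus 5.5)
Your proof is correct and follows essentially the same route as the paper. You show that $f_W(s;\alpha,\beta)$ is pointwise non-decreasing in $\alpha$ (through $\phi$, with coefficient $K(\beta)$) and pointwise non-increasing in $\beta$ (since $C_w$ enters with coefficient $1-(1-p_a)\phi(s)\ge 0$), and then pass to the maximum over $s\in[0,1]$. The only differences are small refinements. The paper simply takes $K_w(\beta)\ge 0$ from the viability regime $g_W\ge 0$ instead of treating $K(\beta)<0$ separately as you do, and it writes the pointwise steps with partial derivatives in $\alpha$ and $\beta$, whereas your comparison argument needs no differentiability.
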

	\noindent
	When $s^\dagger=0$, the condition $f_W(s^\dagger)=0$ coincides with the manual--delegation indifference condition $\Delta(\beta)=0$, yielding the threshold $t$ separating manual work from pure delegation.
	When $s^\dagger>0$, the same condition defines the separatrix $\psi_0$ between manual work and verified delegation.
	The monotonicity of $\psi_0$ follows directly from Lemma~\ref{lm:effects_fW}.
	
	The remaining boundary $\psi_1$, separating pure from verified delegation, is determined by whether verification yields positive marginal surplus at $s=0$.
	
	\begin{lemma}[\textbf{Effects of $\alpha$ and $\beta$ on $\Phi$}]
		\label{lm:effects_Phi}
		$\partial_s \Phi(0;\alpha,\beta)$ is non-decreasing in $\alpha$ and non-decreasing in $\beta$.
	\end{lemma}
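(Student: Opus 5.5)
Differentiate $\Phi$ at $s=0$ and then check the sign of each factor. Write $K(\beta) := (1-p_a)\bigl((b_W+\ell_W)p_w - C_w(\beta)\bigr)$, so that $\Phi(s;\alpha,\beta) = K(\beta)\,\phi(s;\alpha) - C_v(s)$. Since $\phi$ and $C_v$ are continuously differentiable, we get
\[
\partial_s \Phi(0;\alpha,\beta) = K(\beta)\,\partial_s\phi(0;\alpha) - C_v'(0).
\]
The term $C_v'(0)$ depends on neither $\alpha$ nor $\beta$, because $C_v$ is part of the fixed baseline worker characteristics. Monotonicity therefore reduces to monotonicity of the product $K(\beta)\,\partial_s\phi(0;\alpha)$.

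\emph{Dependence on $\beta$.} Execution cost $C_w(\beta)$ is decreasing in efficiency $\beta$, so $K(\beta)$ is non-decreasing in $\beta$, using $1-p_a\ge 0$. Also $\phi$ is increasing, so $\partial_s\phi(0;\alpha)\ge 0$. Multiplying a non-decreasing function of $\beta$ by a nonnegative constant keeps it non-decreasing, which proves the $\beta$ claim.

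\emph{Dependence on $\alpha$.} I need two facts: $\partial_s\phi(0;\alpha)$ is non-decreasing in $\alpha$, and $K(\beta)\ge 0$.

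For the first fact, use the modelling convention that larger $\alpha$ means better detection. Since $\phi(0;\alpha)=0$ for all $\alpha$ and $\phi(s;\alpha)$ is increasing in $\alpha$ for each $s>0$, the difference quotient $\phi(s;\alpha)/s$ is increasing in $\alpha$ for every $s>0$. Letting $s\downarrow 0$ gives $\partial_s\phi(0;\alpha)$ non-decreasing in $\alpha$. For the concrete forms this is immediate: $\partial_s\phi(0;\alpha)=\alpha$ for both $1-e^{-\alpha s}$ and $1-\frac{1}{1+\alpha s}$.

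The second fact, $K(\beta)\ge 0$, is the real obstacle. If $K(\beta)<0$, the product is non-increasing in $\alpha$, and the claim would fail as stated. I would handle this in two steps.

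\begin{itemize}
\item[(i)] Argue that $K(\beta) \ge 0$ holds in the regime of interest: $(b_W+\ell_W)p_w \ge C_w$. This follows from $g_W\ge 0$ together with $C_w$ being bounded by the worker's expected net gain. Concretely, $g_W = b_Wp_w - \ell_W(1-p_w) - C_w \ge 0$ gives $(b_W+\ell_W)p_w - C_w \ge \ell_W \ge 0$.
\item[(ii)] Note that even without this, when $K(\beta)\le 0$ the surplus $\Phi$ is non-increasing in $s$ everywhere, so $\partial_s\Phi(0)\le -C_v'(0)$ for all $\alpha$. The relevant sign of $\partial_s \Phi(0)$ is then unaffected.
\end{itemize}

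Step (i) settles the matter, since the paper restricts to $g_W\ge 0$. With $K(\beta)\ge 0$ in hand, the product $K(\beta)\,\partial_s\phi(0;\alpha)$ is non-decreasing in $\alpha$, which completes the proof.
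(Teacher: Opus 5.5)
Your proof is correct and follows the paper's argument essentially step for step: the same decomposition $\partial_s\Phi(0;\alpha,\beta)=K(\beta)\,\partial_s\phi(0;\alpha)-C_v'(0)$, the same difference-quotient limit of $\phi(s;\alpha)/s$ for the $\alpha$-monotonicity, and the same sign argument on $K'(\beta)$ for the $\beta$-monotonicity. Your identity $K(\beta)=(1-p_a)(g_W+\ell_W)\ge 0$, derived from the standing restriction $g_W\ge 0$, justifies a fact the paper only asserts, so step (ii) is unnecessary and can be dropped.
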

	\noindent
	Since $\psi_1$ is defined by $\partial_s \Phi(0;\alpha,\beta)=0$, this lemma determines its geometry.
	Together, Lemmas~\ref{lm:effects_fW} and~\ref{lm:effects_Phi} characterize the three workflow regimes and explain the origin of discontinuous transitions.
	
	\paragraph{To Theorem~\ref{thm:quality}.}
	Theorem~\ref{thm:quality} compares worker quality with and without AI assistance.
	By Proposition~\ref{lm:reduction}, the quality gap reduces to the sign of $f_I(s^\star) d^\star$, so the key question is how this quantity varies with verification reliability.
	
	\begin{lemma}[\textbf{Effect of $\alpha$ on $f_I(s^\star) d^\star$}]
		\label{lm:quality_gap}
		$d^\star$ is non-decreasing in $\alpha$; and for ability pairs $(\alpha, \beta)$ with $d^\star=1$, $f_I(s^\star)$ is non-decreasing in $\alpha$.
	\end{lemma}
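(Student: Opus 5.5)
The plan is to make $f_W$ and $f_I$ explicit, read off the claim about $d^\star$ from Lemma~\ref{lm:effects_fW}, and then write $f_I$ as a combination of $f_W$ and the success-probability increment so that Assumptions~\ref{assum:dominant} and~\ref{assum:detection} apply directly.

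\textbf{Explicit forms.} From \eqref{eq:prob} and \eqref{eq:cost}, the coefficients of $d$ in $p(d,s)$ and $\cost(d,s)$ are
\[
\pi(s) := p_a - p_w + (1-p_a)p_w\,\phi(s), \qquad
\kappa(s) := C_a - C_w + C_v(s) + (1-p_a)C_w\,\phi(s).
\]
Substituting into \eqref{eq:worker} and \eqref{eq:institution} gives
\[
f_W(s) = (b_W+\ell_W)\,\pi(s) - \kappa(s), \qquad f_I(s) = (b_I+\ell_I)\,\pi(s) - \xi\,\kappa(s).
\]
Eliminating $\kappa$ yields the identity
\[
f_I(s) = \xi\, f_W(s) + \bigl((b_I+\ell_I) - \xi(b_W+\ell_W)\bigr)\,\pi(s),
\]
which drives the second part of the argument.

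\textbf{Monotonicity of $d^\star$.} By strict concavity of $f_W$, the maximizer $s^\dagger(\alpha,\beta)$ is unique. As in the overview, the optimal delegation is $d^\star = \mathbb{I}\bigl[f_W(s^\dagger(\alpha,\beta))\ge 0\bigr]$, with ties broken toward delegation, consistent with Theorem~\ref{thm:action}. Lemma~\ref{lm:effects_fW} says $f_W(s^\dagger(\alpha,\beta))$ is non-decreasing in $\alpha$. Hence for fixed $\beta$ the set $\{\alpha : d^\star = 1\}$ is upward closed, and $d^\star$ is non-decreasing in $\alpha$.

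\textbf{Monotonicity of $f_I(s^\star)$ on $\{d^\star=1\}$.} When $d^\star=1$, Theorem~\ref{thm:action} gives $s^\star = s^\dagger(\alpha,\beta)$ in both delegation regimes. In pure delegation $s^\dagger = 0$, since $\partial_s\Phi(0)\le 0$ and $\Phi$ is concave. The identity above then gives
\[
f_I(s^\star) = \xi\, f_W\bigl(s^\dagger(\alpha,\beta)\bigr) + \bigl((b_I+\ell_I)-\xi(b_W+\ell_W)\bigr)\Bigl(p_a-p_w+(1-p_a)p_w\,\phi\bigl(s^\dagger(\alpha,\beta)\bigr)\Bigr).
\]
The two terms are handled separately.
\begin{itemize}
\item The first term is non-decreasing in $\alpha$ by Lemma~\ref{lm:effects_fW}, since $\xi\ge 0$.
\item In the second term, the bracketed coefficient is positive by Assumption~\ref{assum:dominant}. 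The factor $(1-p_a)p_w \ge 0$, and $\phi(s^\dagger(\alpha,\beta))$ is non-decreasing in $\alpha$ by Assumption~\ref{assum:detection}.
\end{itemize}
A sum of non-decreasing functions is non-decreasing, which proves the claim. Because $\{d^\star=1\}$ is an $\alpha$-interval for fixed $\beta$, this monotonicity holds along that whole interval.

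\textbf{Main obstacle.} The delicate step is justifying that $\alpha\mapsto\phi(s^\dagger(\alpha,\beta))$ is genuinely monotone, not merely that its derivative is nonnegative wherever it exists. $s^\dagger$ may switch between an interior solution and the corners $s=0$ or $s=1$.
\begin{itemize}
\item On each piece, $s^\dagger$ is continuous in $\alpha$: by the maximum theorem with a unique maximizer, and by the implicit function theorem on the interior first-order condition $\partial_s\Phi = 0$.
\item Continuity of $s^\dagger$ and of $\phi$ in $(s,\alpha)$ lets one integrate the pointwise derivative bound of Assumption~\ref{assum:detection} across the finitely many regime switches.
\end{itemize}
If this piecewise argument becomes messy, the fallback is to read Assumption~\ref{assum:detection} as the monotonicity statement it is intended to encode.
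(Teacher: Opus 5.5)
Your proposal is correct and follows essentially the same route as the paper. The paper also writes $f_I(s)=\xi f_W(s)+(K_I-\xi K_W)\phi(s)+\text{const}$, which is your identity in terms of $\pi(s)$, since $K_I-\xi K_W=(1-p_a)p_w\bigl((b_I+\ell_I)-\xi(b_W+\ell_W)\bigr)$. It then combines Lemma~\ref{lm:effects_fW} with Assumptions~\ref{assum:dominant} and~\ref{assum:detection}, and gets monotonicity of $d^\star$ from Lemma~\ref{lm:effects_fW} just as you do.

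The differences are minor. The paper splits into pure delegation (where $f_I(0)$ is constant in $\alpha$) and verified delegation (where it differentiates in $\alpha$). Your sum-of-monotone-functions argument handles both regimes and the switch between them in one step. Finally, the paper simply reads Assumption~\ref{assum:detection} as the monotonicity statement your fallback adopts.
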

	\noindent
	Higher $\alpha$ weakly increases the likelihood of delegation and, conditional on delegation, increases the institutional value of verification.
	Assumptions~\ref{assum:dominant} and~\ref{assum:detection} ensure that detection improvements translate into institutional value at least as fast as into worker utility, yielding a separatrix $\psi$ such that worker quality improves exactly when $\alpha>\psi(\beta)$.
	
	\paragraph{To Theorem~\ref{thm:region}.}
	Finally, we characterize which workers are upgraded or downgraded relative to a qualification threshold $\tau$.
	This requires understanding how worker quality varies jointly with $\alpha$ and $\beta$ in the delegation regime.
	
	\begin{lemma}[\bf{Effects of $\alpha,\beta$ on $Q$}]
		\label{lm:quality}
		For ability pairs with $d^\star=1$, worker quality $Q(W)$ is non-decreasing in $\alpha$ and non-decreasing in $\beta$.
	\end{lemma}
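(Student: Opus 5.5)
We prove Lemma~\ref{lm:quality}, which concerns the delegation regime. When $d^\star=1$, we have $Q(W)=U_I(1,s^\star)=f_I(s^\star)+g_I$ with $s^\star=s^\dagger(\alpha,\beta)$. So we write $Q$ explicitly. Using \eqref{eq:prob} and \eqref{eq:cost} with $d=1$,
\[
Q(W)=b_I\,p(1,s^\dagger)-\ell_I\bigl(1-p(1,s^\dagger)\bigr)-\xi\,\cost(1,s^\dagger),
\]
and substituting gives
\[
Q(W)=(b_I+\ell_I)\bigl(p_a+(1-p_a)\phi(s^\dagger)p_w\bigr)-\ell_I-\xi\bigl(C_a+C_v(s^\dagger)+(1-p_a)\phi(s^\dagger)C_w\bigr).
\]
The key observation is that $Q$ depends on $(\alpha,\beta)$ only through three quantities: the detection level $\phi(s^\dagger)$, the verification cost $C_v(s^\dagger)$, and $C_w(\beta)$. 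Grouping terms,
\[
Q(W)=\mathrm{const}+(1-p_a)\bigl((b_I+\ell_I)p_w-\xi C_w\bigr)\phi(s^\dagger)-\xi C_v(s^\dagger).
\]
We then compare this with the worker's own objective, $\Phi(s^\dagger)=(1-p_a)\bigl((b_W+\ell_W)p_w-C_w\bigr)\phi(s^\dagger)-C_v(s^\dagger)$. This gives
\[
Q(W)=\mathrm{const}+\xi\,\Phi(s^\dagger)+(1-p_a)p_w\bigl((b_I+\ell_I)-\xi(b_W+\ell_W)\bigr)\phi(s^\dagger).
\]
By Assumption~\ref{assum:dominant}, the coefficient on $\phi(s^\dagger)$ in the last term is positive.

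\textbf{Monotonicity in $\alpha$.}
\begin{enumerate}
\item The term $\max_s\Phi(s;\alpha,\beta)=\Phi(s^\dagger)$ is non-decreasing in $\alpha$. Indeed, $\phi(s)$ is pointwise increasing in $\alpha$, and its coefficient is nonnegative whenever verification is relevant. (If $(b_W+\ell_W)p_w-C_w\le 0$, then $s^\dagger=0$ and nothing depends on $\alpha$.) By an envelope argument, the maximum is therefore non-decreasing. Equivalently, this is the content of Lemma~\ref{lm:effects_fW}, since $f_W(s^\dagger)=\Phi(s^\dagger)+\mathrm{const}$ and that constant does not depend on $\alpha$.
\item The remaining term is non-decreasing in $\alpha$ by Assumption~\ref{assum:detection}.
\end{enumerate}
Hence $Q$ is non-decreasing in $\alpha$, consistent with Lemma~\ref{lm:quality_gap}, since $Q=f_I(s^\star)+g_I$ and $g_I$ is independent of $\alpha$.

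\textbf{Monotonicity in $\beta$.} This is the harder direction, because $\Phi(s^\dagger)$ is \emph{decreasing}-in-$\beta$-compatible (Lemma~\ref{lm:effects_fW}) rather than increasing. So the decomposition above is not the right one for $\beta$, and we argue directly.
\begin{enumerate}
\item Write $Q=\mathrm{const}+A(\beta)\phi(s^\dagger)-\xi C_v(s^\dagger)$, where $A(\beta)=(1-p_a)\bigl((b_I+\ell_I)p_w-\xi C_w(\beta)\bigr)$. Since $C_w$ is decreasing in $\beta$, $A$ is increasing in $\beta$.
\item From Table~\ref{tab:monotonicity_summary}, $s^\dagger$ is non-decreasing in $\beta$. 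This follows from a supermodularity argument: the worker's coefficient $B(\beta)=(1-p_a)\bigl((b_W+\ell_W)p_w-C_w\bigr)$ increases in $\beta$ while $C_v$ is fixed, so by Topkis's theorem the maximizer of $B\phi-C_v$ moves up.
\item Along the worker's first-order condition $B\phi'(s^\dagger)=C_v'(s^\dagger)$, the total derivative in $\beta$ is
\[
\frac{dQ}{d\beta}=A'(\beta)\,\phi(s^\dagger)+\bigl(A\phi'(s^\dagger)-\xi C_v'(s^\dagger)\bigr)\frac{ds^\dagger}{d\beta}.
\]
The first term is nonnegative. For the second, the first-order condition gives $A\phi'-\xi C_v'=(A-\xi B)\phi'$, and
\[
A-\xi B=(1-p_a)p_w\bigl((b_I+\ell_I)-\xi(b_W+\ell_W)\bigr)>0
\]
by Assumption~\ref{assum:dominant}. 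Since $\phi'\ge 0$ and $ds^\dagger/d\beta\ge 0$, the second term is also nonnegative.
\end{enumerate}

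The main obstacle is justifying this derivative computation at boundary solutions $s^\dagger\in\{0,1\}$ and where $s^\dagger$ is not differentiable.
\begin{itemize}
\item At $s^\dagger=1$: if $s^\dagger$ stays at $1$ locally, only the $A'\phi$ term survives.
\item At $s^\dagger=0$ (locally constant): $Q$ depends on $\beta$ only through $-\xi C_a$-type constants, so it is constant in $\beta$.
\item In general: we replace derivatives by a discrete comparison between $\beta<\beta'$, using $s^\dagger(\beta)\le s^\dagger(\beta')$. We bound $Q(\beta')-Q(\beta)$ by splitting into the change in $A$ at fixed $s$ and the change in $s$ at fixed $A(\beta')$. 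For the latter, we use concavity of $A\phi-\xi C_v$ together with the fact that its derivative at any $s\le s^\dagger(\beta')$ is at least $(A-\xi B)\phi'(s)\ge 0$, which follows from worker optimality ($B\phi'(s)\ge C_v'(s)$ below the worker's maximizer).
\end{itemize}
This gives monotonicity in $\beta$ without requiring differentiability.
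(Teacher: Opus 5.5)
Your proof is correct. For $\alpha$ it takes a different, and sounder, route than the paper's proof of this lemma; for $\beta$ it follows the paper's idea with more care.

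\textbf{The $\alpha$ case.} The paper differentiates $G=K_I\phi(s^\dagger)-\xi C_v(s^\dagger)+\Gamma_I$ and substitutes the worker's first-order condition. It then needs $\partial s^\dagger/\partial\alpha\ge 0$, which it attributes to Lemma~\ref{lm:quality_gap}. That lemma says nothing about $s^\dagger$, and the claim is false in general: Table~\ref{tab:monotonicity_summary} marks $s^\dagger$ as non-monotone in $\alpha$, and in the example of Section~\ref{sec:assum_detection}, $s_0(\alpha)$ is unimodal.

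Your decomposition $Q=\mathrm{const}+\xi\,\Phi(s^\dagger)+c\,\phi(s^\dagger)$, with $c=(1-p_a)p_w\bigl((b_I+\ell_I)-\xi(b_W+\ell_W)\bigr)\ge 0$, avoids this step. The envelope term is monotone without any information on how $s^\dagger$ moves, and all movement of $s^\dagger$ is absorbed into the total derivative controlled by Assumption~\ref{assum:detection}. This is the mechanism of the paper's proof of Lemma~\ref{lm:quality_gap}. It is also what the paper's derivative becomes if regrouped as $\xi K_W\,\partial_\alpha\phi|_{s=s^\dagger}+(K_I-\xi K_W)\,\tfrac{d}{d\alpha}\phi(s^\dagger)$.

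\textbf{The $\beta$ case.} Your argument matches the paper's: the FOC substitution produces the nonnegative factor $K_I-\xi K_W$ (your $A-\xi B$) times a nonnegative shift in $s^\dagger$. You replace the implicit function theorem with Topkis, and your two-step discrete comparison correctly handles the corners and kinks the paper only gestures at.

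\textbf{An incorrect remark.} Your statement that $\Phi(s^\dagger)$ is decreasing in $\beta$ is wrong, though nothing in your proof depends on it. Lemma~\ref{lm:effects_fW} concerns $f_W(s^\dagger)=\Phi(s^\dagger)+\Delta(\beta)$, and the decrease there comes entirely from $\Delta(\beta)$. By contrast, $\Phi(s;\alpha,\beta)=K_W(\beta)\phi(s)-C_v(s)$ is pointwise nondecreasing in $\beta$, since $K_W$ increases and $\phi\ge 0$. Hence $\max_s\Phi$ is nondecreasing in $\beta$.

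Your constant $(b_I+\ell_I)p_a-\ell_I-\xi C_a$ does not depend on $\beta$. So the same decomposition, together with Topkis monotonicity of $s^\dagger$ in $\beta$ and monotonicity of $\phi$ in $s$, settles the $\beta$ case in one line, with no derivatives and no boundary cases. The direct argument you give is valid, just longer than necessary.
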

	\noindent
	This lemma determines the geometry of the level sets $Q(W)=\tau$, yielding the separatrix $\psi_\tau$ that divides compliance gain from compliance loss.
	Combined with Theorem~\ref{thm:action}, this shows how AI assistance can upgrade workers with strong verification ability while downgrading others through rational over-delegation.

	\section{Interventions and model extensions}
	\label{sec:interventions_extensions}
	
	The model in Section~\ref{sec:model} studies delegation and verification in the $(\alpha,\beta)$ ability space under simplifying assumptions such as fixed AI capability, calibrated beliefs, homogeneous tasks, single-round verification, full re-execution, binary outcomes, and regular verification technologies. 
	We then summarize interventions and extensions that modify these assumptions. 
	Across these variants, the same qualitative structure persists: actions remain organized into manual work, pure delegation, and verified delegation, and the model continues to exhibit compliance gain/loss and rational over-delegation. Full details appear in Sections~\ref{sec:intervention} and~\ref{sec:extension}.

	\subsection{Interventions}
	
	\paragraph{Worker-side upskilling (Section~\ref{sec:intervention_worker}).}
	Workers may meet the institutional standard $\tau$ by investing in verification
	reliability and/or execution efficiency. Let $h_1$ and $h_2$ denote the costs of
	increasing $\alpha$ and $\beta$, respectively. The least-cost worker-side
	intervention is given by
	\begingroup
	\setlength{\abovedisplayskip}{3pt}
	\setlength{\belowdisplayskip}{3pt}
	\[
	\min_{D_\alpha,D_\beta\ge 0}
	h_1(D_\alpha)+h_2(D_\beta)
	\ \
	\mathrm{s.t.}
	\
	Q(\alpha+D_\alpha, \beta + D_\beta)\ge \tau .
	\]
	\endgroup
	Figure~\ref{fig:worker_intervention} shows that such adjustment often favors increasing \(\alpha\), highlighting verification reliability's importance.
	
	\paragraph{Deploying a more capable AI system (Section~\ref{sec:intervention_institution}).}
	The institution may consider stronger AI to improve worker quality, represented by
	\(
	p_a \mapsto p_a + D_p, D_p \ge 0.
	\)
	Figure~\ref{fig:high_AI} shows that a more capable AI system can have a
	non-monotonic effect on worker quality.
	
	\medskip
	\noindent \emph{Incentive reshaping and equilibrium design (Section~\ref{sec:intervention_institution}).}
	To strengthen substantive accountability, the institution may transfer a reward \(D_b\ge 0\) from its own benefit \(b_I\) to the worker's benefit \(b_W\). Figures~\ref{fig:high_gain} and~\ref{fig:equ_game} compare uniform and worker-specific choices of \(D_b\): fixed transfers have heterogeneous effects, while optimized transfers can act as an alignment mechanism when incentive alignment enters the institutional objective.
	
	\medskip
	\noindent \emph{Rewarding observable verification (Section~\ref{sec:intervention_institution}).}
	If verification is observable, the institution can reward verification effort
	through a process-based reward $R(s)$, which modifies the worker and
	institutional utilities as follows:
	\begingroup
	\small
	\setlength{\abovedisplayskip}{3pt}
	\setlength{\belowdisplayskip}{3pt}
	\[
	U_W^R(d,s)=U_W(d,s)+dR(s),\;
	U_I^R(d,s)=U_I(d,s)-dR(s).
	\]
	\endgroup
	Figure~\ref{fig:reward_verification} shows that verification rewards expand verified delegation and shrink quality-degradation regions, with ability-dependent effects.
	
	\subsection{Model extensions}
	\paragraph{Partial delegation (Section~\ref{sec:partial_delegation}).}
	We relax binary reliance by adding a convex cost of delegation intensity:
	$
	-\delta d^2
	$
	to $U_W$.
	Figure~\ref{fig:partial_delegation} shows that this creates an interior partial-delegation region \(d^\star\in(0,1)\) and smooths the delegation transition, while preserving the main action-regime, compliance, and over-delegation patterns.
	
	\noindent \emph{Time and cognitive resource constraints (Section~\ref{sec:resource_constraints}).}
	We restrict feasible actions by a time or cognitive budget:
	\begingroup
	\setlength{\abovedisplayskip}{3pt}
	\setlength{\belowdisplayskip}{3pt}
	\setlength{\abovedisplayshortskip}{3pt}
	\setlength{\belowdisplayshortskip}{3pt}
	\[
	\mathcal A_{\tau_c}
	=
	\{(d,s)\in[0,1]^2:(1-d)C_w+dC_v(s)\le \tau_c\}.
	\]
	\endgroup
	Figure~\ref{fig:cost_restriction} shows that such constraints can make manual work or intensive verification infeasible, shifting workers toward delegation or lower verification.
	
	\noindent \emph{Belief miscalibration (Section~\ref{sec:belief_miscalibration}).}
	We allow the worker to optimize under a subjective AI success probability \(\widehat p_a\), while institutional quality is evaluated under the true \(p_a\). 
	Figure~\ref{fig:noisy} shows that overestimation \((\widehat p_a>p_a)\) intensifies over-delegation and expands quality-degradation and compliance-loss regions, whereas underestimation \((\widehat p_a<p_a)\) expands manual work and dampens both gains and losses.

	\paragraph{Heterogeneous task difficulty (Section~\ref{sec:varying}).}
	We allow success probabilities and costs to vary with task difficulty \(h\),
	and define worker quality by averaging \(Q(W;h)\) over the task distribution:
	\( Q(W)=\int Q(W;h)\,d\mu(h).\)
	Figure~\ref{fig:task_difficulty} shows that task heterogeneity reshapes the
	quality regions but preserves the main improvement/degradation and compliance
	gain/loss patterns.
	
	\paragraph{Partial re-execution (Section~\ref{sec:partial_re_execution}).}
	We allow detected AI errors to be corrected through partial rather than full re-execution, replacing the correction cost \((1-p_a)\phi(s)C_w\) by \((1-p_a)\phi(s)\kappa C_w\). 
	Figure~\ref{fig:partial} shows that this lowers the cost of verified delegation but leaves the delegation--verification structure and qualitative regions essentially unchanged.
	
	\paragraph{Multi-round interaction (Section~\ref{sec:multi_round_interaction}).}
	We reduce a \(T\)-round interaction to an effective one-shot verification
	problem. Given total verification effort \(s\), the multi-round process induces
	an effective detection frontier \(\bar\phi_T(s)\) and an effective verification
	cost \(\bar C_{v,T}(s)\). Substituting these objects for \(\phi(s)\) and
	\(C_v(s)\) preserves the same reduced-form structure as the single-round model,
	showing that the qualitative mechanism does not depend on single-round
	verification.
	
	\paragraph{Continuous output quality (Section~\ref{sec:continuous_quality}).}
	Replacing binary success probabilities with expected qualities,
	\(p_w\mapsto \mathbb E[Y_w]\) and \(p_a\mapsto \mathbb E[Y_a]\),
	preserves the same reduced-form structure, so the model extends directly to
	continuous output quality.
	
	\paragraph{Relaxed regularity conditions (Section~\ref{sec:relaxing_regularities}).}
	We relax the concavity of the detection function and the convexity of the
	verification cost. Figure~\ref{fig:non_concave_detection} shows that verification
	choices may become less smooth, but the main qualitative conclusions remain
	unchanged.

	\section{From data to model parameters}
	\label{sec:calibration}
	
	While our analysis is structural rather than empirical, the model is designed to interface with observational data from AI-assisted workflows. The exercise in this section should therefore be interpreted as a data-to-parameter mapping rather than an empirical validation of the model's causal claims. We outline how commonly available institutional data can be mapped to the model's parameters, with full implementation details deferred to Section~\ref{sec:usability_details}.

	\paragraph{Observational inputs.}
	We use the Collab-CXR dataset \cite{Agarwal2023CombiningHE}, which contains diagnostic data for 324 patient cases from 33 clinicians.
	For each clinician--case pair, the dataset records predictions over 14 pathologies in three settings: clinician alone, AI-alone, and clinician with access to the AI’s output.
	The dataset also reports time spent per case in the clinician-alone and clinician-with-AI conditions.
	Expert diagnoses from 10 radiologists provide a reference for evaluating correctness, and a washout period of at least two weeks is enforced between clinician-alone and clinician-with-AI sessions.
	
	\paragraph{Mapping observables to model quantities.}
	From these data, we map each clinician to the model’s core quantities: task success probabilities $p_w$ and $p_a$, success probability under AI assistance $p(1,s^\dagger)$, execution costs $C_w$ and $C_a$, and the assisted cost $\cost(1,s^\dagger)$.
	We illustrate this mapping for clinician \texttt{vn\_exp\_65341958}.
	
	\paragraph{Deriving $p_w$, $p_a$, and $p(1,s^\dagger)$.}
	Task success is defined as matching expert labels on all 14 pathologies.
	Under this criterion, we obtain success indicators for $n=38$ cases in the clinician-alone, AI-alone, and clinician-with-AI conditions.
	Averaging yields $p_w=0.447$, $p_a=0.368$, and $p(1,s^\dagger)=0.395$.
	In the model, the clinician-with-AI condition corresponds to $d=1$, with the clinician exerting optimal verification effort $s^\dagger$.
	
	\paragraph{Deriving $C_w$, $C_a$, and $\cost(1,s^\dagger)$.}
	We assume worker cost is proportional to task completion time.
	Since AI execution time is negligible, we set $C_a=0$.
	Average completion times over the $n$ cases yield $C_w=172.5$ and $\cost(1,s^\dagger)=116.8$.

	We next infer the clinician’s ability parameters $(\alpha,\beta)$.
	
	\paragraph{Deriving $\phi(s^\dagger)$.}
	By \eqref{eq:prob},
	$
	\phi(s^\dagger)=\frac{p(1,s^\dagger)-p_a}{(1-p_a)p_w}=0.093$.
	
	\paragraph{Deriving $C_v(s^\dagger)$.}
	By \eqref{eq:cost} with $C_a=0$,
	$
	C_v(s^\dagger)=\cost(1,s^\dagger)-(1-p_a)\phi(s^\dagger)C_w=106.7$.
	
	\paragraph{Choice of ability functions.}
	We model cost through time expenditure, taking linear verification cost and normalizing verification effort and execution efficiency via
	$C_v(s)=T_v^{\max}s$ and $C_w(\beta)=T_w^{\max}(1-\beta)$,
	where $T_v^{\max}=118.1$ and $T_w^{\max}=262.3$ are the maximum observed verification and execution times.
	These normalizations ensure $s^\dagger,\beta\in[0,1]$.
	We use $\phi(s;\alpha)=1-e^{-\alpha s}$ \cite{heitz2014speed}.
	
	\paragraph{Deriving $\beta$.}
	From $C_w$, we obtain $\beta=1-\frac{C_w}{T_w^{\max}}=0.342$.
	
	\paragraph{Deriving $\alpha$.}
	From $C_v(s)=T_v^{\max}s$, we obtain $s^\dagger=0.903$.
	Using $\phi(s^\dagger;\alpha)$ and the estimated $\phi(s^\dagger)$ yields
	$\alpha=-\frac{\ln(1-\phi(s^\dagger))}{s^\dagger}=0.108$.
	
	\paragraph{Identifying regime membership.}
	Given $(\alpha,\beta)$, the model predicts the clinician’s optimal action and resulting quality.
	
	\paragraph{Identifying the action regime.}
	By rationality, $s^\dagger$ satisfies $\partial_s \Phi(s^\dagger;\alpha,\beta)=0$, yielding
	\[
	b_W+\ell_W
	= \frac{T_v^{\max} + (1-p_a)\alpha e^{-\alpha s^\dagger} C_w}
	{(1-p_a)\alpha e^{-\alpha s^\dagger} p_w}
	= 4646.0.
	\]
	This implies $f_W(s^\dagger)=-189.26<0$, so $(d^\star,s^\star)=(0,0)$.
	The clinician therefore lies in the manual-work regime of Theorem~\ref{thm:action}.
	
	\paragraph{Identifying the quality regime.}
	Setting $b_I=2787.6$, $\ell_I=1858.4$, and $\xi=0.5$ (Assumption~\ref{assum:dominant}) yields
	$
	Q(W)=Q_0(W)=133.77$.
	Thus, the clinician lies in the quality-unchanged regime of Theorem~\ref{thm:quality}, and is neither upgraded nor downgraded in Theorem~\ref{thm:region}.
	
	\paragraph{Interpreting interventions.}
	Suppose the institution raises the qualification threshold to $\tau=150$.
	As discussed in Section~\ref{sec:result}, the model can evaluate interventions such as upskilling worker abilities (increasing $\alpha$ or $\beta$) or deploying more capable AI (increasing $p_a$).
	To achieve $Q(W)\ge\tau$, the institution may consider:
	(i) increasing $\alpha$ from $0.108$ to $0.335$;
	(ii) increasing $\beta$ from $0.342$ to $0.479$; or
	(iii) increasing $p_a$ from $0.368$ to $0.432$.
	This numerical analysis can guide for institutions when designing intervention policies.

	\paragraph{Scope and limitations.}
	This calibration illustrates applicability rather than identification.
	The model does not estimate causal effects or capture dynamic learning or organizational adaptation.
	Parameters such as $\alpha$ and $\beta$ may be interpreted at the role or population level, depending on the setting.

	\section{Omitted proofs and details from section \ref{sec:result}}
	\label{sec:proof}
	
	This section provides proofs of the results in Section~\ref{sec:result} and gives detailed derivations for Figure~\ref{fig:example}.
	Table \ref{tab:notation} summarizes the notations used in this paper, Table \ref{tab:monotonicity_summary} summarizes the monotonicity of key quantities with respect to abilities $\alpha, \beta$, and Figure \ref{fig:flow_chart} provides a flow chart for the workflow.
	
	\subsection{Explicit forms of utility functions}
	\label{sec:utility}
	
	Recall that 
	\begin{align*}
		U_W(d,s) = f_W(s) d + g_W, \text{ and } U_I(d,s) = f_I(s) d + g_I.
	\end{align*}
	We provide the explicit form of the coefficients below:
	\begin{align*}
		f_W(s) = & \ (1 - p_a)\left((b_W + \ell_W)p_w - C_w \right) \phi(s) - C_v(s) - (b_W + \ell_W)(p_w - p_a) + C_w - C_a \\
		g_W = & \ (b_W + \ell_W)p_w - \ell_W - C_w \\
		f_I(s) = & \ (1 - p_a)\left((b_I + \ell_I)p_w - \xi C_w \right) \phi(s) - \xi C_v(s) - (b_I + \ell_I)(p_w - p_a) + \xi(C_w - C_a) \\
		g_I =& \ (b_I + \ell_I)p_w - \ell_I - \xi C_w.
	\end{align*}
	
	\noindent
	Furthermore, note that $f_W(s) = \Phi(s; \alpha, \beta) + \Delta(\beta)$.
	
	\subsection{An illustrative example for Assumption \ref{assum:detection}: monotonicity of detection}
	\label{sec:assum_detection}
	
	\begin{assumption}[\textbf{Restatement of Assumption \ref{assum:detection}}]
		Assume that for any $\alpha,\beta \ge 0$,
		\[
		\partial_{\alpha}\,\phi\bigl(s^\dagger(\alpha,\beta)\bigr) \ge 0.
		\]
	\end{assumption}
	\noindent
	Consider the specific functional forms:
	\begin{align}
		\phi(s; \alpha) &:= 1 - \frac{1}{1+ 10\alpha s}, \quad \text{and} \quad C_v(s) := 0.5 s.
	\end{align}
	The constants $10$ and $0.5$ are chosen for convenience and can be replaced by other values, illustrating the generality of the assumption.
	Let $K_W(\beta) := (1-p_a)\left[ (b_W + \ell_W)p_w - C_w \right]$. 
	The worker chooses effort $s \in [0,1]$ to maximize the net benefit:
	\begin{align}
		f_W(s) = K_W(\beta) \cdot \phi(s; \alpha) - C_v(s) + C_w - (b_W + \ell_W)(p_w - p_a) - C_a.
	\end{align}
	The first order condition with respect to $s$ is:
	\begin{align}
		\frac{\partial f_W(s)}{\partial s} = K_W(\beta) \cdot \frac{10 \alpha}{(1+10\alpha s)^2} - 0.5 = 0.
	\end{align}
	Solving for the interior root $s_0$, we obtain:
	\begin{align}
		s_0(\alpha) = \sqrt{\frac{K_W(\beta)}{5\alpha}} - \frac{1}{10\alpha}.
	\end{align}
	It is straightforward to verify that $s_0(\alpha)$ is unimodal in $\alpha$ (increasing then decreasing), with a maximum value of $(s_0)_{\max} = K_W(\beta)/2$. The constrained optimal effort is $s^\dagger = \min\left(1, \; \max(0, s_0) \right)$.
	
	We now verify that the total derivative $\frac{d}{d\alpha}\phi(s^\dagger;\alpha)$ is strictly positive. Note that $\phi$ depends on $\alpha$ solely through the product $x := \alpha s$. Thus, $\phi(x) = 1 - (1+10x)^{-1}$, and $\frac{d\phi}{dx} = \frac{10}{(1+10x)^2} > 0$. It suffices to show that $\frac{d}{d\alpha}(\alpha s^\dagger) \ge 0$.
	
	\begin{enumerate}
		\item \textbf{Corner Solution ($s^\dagger = 0$):} \\
		In this regime, $s^\dagger = 0$, implying $\phi(0;\alpha) = 0$. The derivative is trivially non-negative.
		
		\item \textbf{Corner Solution ($s^\dagger = 1$):} \\
		In this regime, $s^\dagger = 1$. The product becomes $x = \alpha$. The derivative is:
		\begin{align}
			\frac{d}{d\alpha}\phi(1; \alpha) = \frac{d\phi}{dx} \cdot \frac{dx}{d\alpha} = \frac{10}{(1+10 \alpha)^2} \cdot 1 > 0.
		\end{align}
		
		\item \textbf{Interior Solution ($s^\dagger \in (0,1)$):} \\
		In this regime, $s^\dagger = s_0$. The product $x = \alpha s_0$ simplifies to:
		\begin{align}
			x(\alpha) = \alpha \left( \sqrt{\frac{K_W(\beta)}{5\alpha}} - \frac{1}{10\alpha} \right) = \sqrt{\frac{\alpha K_W(\beta)}{5}} - 0.1.
		\end{align}
		Differentiating $x$ with respect to $\alpha$:
		\begin{align}
			\frac{dx}{d\alpha} = \frac{1}{2}\sqrt{\frac{K_W(\beta)}{5\alpha}} > 0.
		\end{align}
		Consequently, by the Chain Rule:
		\begin{align}
			\partial_\alpha\phi(s^\dagger) = \frac{10}{(1+10 x)^2} \cdot \frac{1}{2}\sqrt{\frac{K_W(\beta)}{5\alpha}} > 0.
		\end{align}
	\end{enumerate}
	Combining all cases, we conclude that $\frac{d}{d\alpha}\phi(s^\dagger) \ge 0$ for all $\alpha > 0$.

	\subsection{Proof of Theorem \ref{thm:action}: \bf{Worker optimal action}}
	
	\begin{theorem}[\bf{Restatement of Theorem \ref{thm:action}}]
		Fix the task profile, AI characteristics, and baseline worker characteristics.
		Let $t \ge 0$ be a threshold such that $\Delta(t)=0$.
		There exist continuous functions
		$\psi_0: [t,\infty) \rightarrow \R_{\ge 0}$, monotonically increasing,
		and $\psi_1: [0,t] \rightarrow \R_{\ge 0}$, monotonically decreasing, such that, for all $\beta$ in their respective domains,
		\begin{align*}
			f_W\!\bigl(s^\dagger(\psi_0(\beta), \beta)\bigr) = 0
			\qquad \text{and} \qquad
			\partial_s \Phi\!\bigl(0; \psi_1(\beta), \beta\bigr) = 0 .
		\end{align*}
		Then the optimal action $(d^\star, s^\star)$ takes the following form:
		\begin{itemize}
			\item {\bf (Manual work)} $(0,0)$ if $(\beta \ge t)$ and $(\alpha < \psi_0(\beta))$;
			\item {\bf (Pure delegation)} $(1,0)$ if $(\beta < t)$ and $(\alpha < \psi_1(\beta))$;
			\item {\bf (Verified delegation)} $(1, s^\dagger(\alpha,\beta))$ otherwise.
		\end{itemize}
	\end{theorem}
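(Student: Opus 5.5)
The plan is to reduce the two-variable problem to a one-dimensional one, using the decomposition $U_W(d,s)=f_W(s)\,d+g_W$ from Section~\ref{sec:utility} together with $f_W(s)=\Phi(s;\alpha,\beta)+\Delta(\beta)$.

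\textbf{Step 1: the delegation decision.} For any fixed $s$, $U_W$ is affine in $d$, so the maximum over $d\in[0,1]$ is attained at $d=1$ if $f_W(s)>0$ and at $d=0$ if $f_W(s)<0$. Maximizing over $s$ first gives
\[
\max_{d,s}U_W=g_W+\max\{0,f_W(s^\dagger)\}.
\]
Hence $(d^\star,s^\star)=(1,s^\dagger)$ if $f_W(s^\dagger)\ge 0$, with ties broken toward delegation, and $(0,0)$ otherwise. When $d=0$ the value of $s$ is irrelevant, and we take $s=0$ as the canonical choice.

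\textbf{Step 2: the verification level and its regime boundaries.} Since $\phi$ is strictly concave and $C_v$ is strictly convex, $\Phi$ is strictly concave in $s$ whenever the coefficient $K(\beta):=(1-p_a)((b_W+\ell_W)p_w-C_w)$ is positive. In that case $s^\dagger$ is unique, and $s^\dagger=0$ if and only if $\partial_s\Phi(0;\alpha,\beta)\le 0$. If $K(\beta)\le 0$, then $\Phi\le 0$ and $s^\dagger=0$.

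Next I define $t$ as the root of $\Delta(t)=0$. Since $\Delta(\beta)=C_w(\beta)-C_a-(b_W+\ell_W)(p_w-p_a)$ and $C_w$ is decreasing in $\beta$, $\Delta$ is decreasing. Therefore $\Delta>0$ for $\beta<t$ and $\Delta\le 0$ for $\beta\ge t$.

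\textbf{Case $\beta<t$.} Here $f_W(s^\dagger)\ge f_W(0)=\Delta(\beta)>0$, so $d^\star=1$. The only remaining question is whether $s^\dagger=0$.
\begin{itemize}
\item By Lemma~\ref{lm:effects_Phi}, $\partial_s\Phi(0;\alpha,\beta)=K(\beta)\,\partial_s\phi(0;\alpha)-C_v'(0)$ is non-decreasing in $\alpha$ and in $\beta$.
\item I define $\psi_1(\beta)$ as the root in $\alpha$ of $\partial_s\Phi(0;\alpha,\beta)=0$, with the convention $\psi_1(\beta)=\infty$ (or $0$) when no root exists.
\item Monotonicity in $\alpha$ gives pure delegation for $\alpha<\psi_1(\beta)$ and verified delegation above it.
\item Monotonicity in $\beta$ yields that $\psi_1$ is non-increasing: raising $\beta$ raises the left-hand side, so a smaller $\alpha$ suffices to reach zero.
\item Continuity of $\psi_1$ follows from the implicit function theorem, since $\partial_s\phi$ is $C^1$ in $\alpha$ for the families considered, or from strict monotonicity combined with continuity via an intermediate-value argument.
\end{itemize}

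\textbf{Case $\beta\ge t$.} Pure delegation is never strictly optimal, because $f_W(0)=\Delta\le 0$. By Lemma~\ref{lm:effects_fW}, $F(\alpha,\beta):=f_W(s^\dagger(\alpha,\beta))$ is non-decreasing in $\alpha$ and non-increasing in $\beta$. I define
\[
\psi_0(\beta):=\inf\{\alpha: F(\alpha,\beta)\ge 0\}.
\]
\begin{itemize}
\item $F$ is continuous in $(\alpha,\beta)$ by Berge's maximum theorem, since $F$ is the value function $\max_s f_W(s)$ over the compact set $[0,1]$. This continuity is what gives $F(\psi_0(\beta),\beta)=0$ at the boundary.
\item For $\alpha<\psi_0(\beta)$ we have $F<0$, so the action is $(0,0)$. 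For $\alpha\ge\psi_0(\beta)$ we have $d^\star=1$ with $s^\dagger>0$, except possibly at $\beta=t$ where $F=0=\Delta$. This gives verified delegation.
\item Monotonicity of $\psi_0$ follows from the $\beta$-monotonicity of $F$: a larger $\beta$ lowers $F$, so a larger $\alpha$ is needed to reach zero.
\item At $\beta=t$ the two boundaries meet: $F(\alpha,t)\ge\Delta(t)=0$ for all $\alpha$, so $\psi_0(t)=0$, matching the regime structure.
\end{itemize}

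\textbf{Main obstacle.} The hardest step is establishing continuity and strict separation of the boundaries. $\psi_0$ is defined only as a level set of a monotone function, so it could jump if $F(\cdot,\beta)$ has flat pieces. My first attempt will be to show that $F$ is strictly increasing in $\alpha$ wherever $s^\dagger>0$. By the envelope theorem, $\partial_\alpha F=K(\beta)\,\partial_\alpha\phi(s^\dagger;\alpha)$, which is strictly positive because $\phi$ is strictly increasing in $\alpha$ for $s>0$ and $K>0$ whenever $s^\dagger>0$. Strict monotonicity together with joint continuity then gives continuity of $\psi_0$ by a standard argument. The same approach handles $\psi_1$, using $\partial_\alpha\partial_s\phi(0;\alpha)>0$ for the parametric families used in the paper.
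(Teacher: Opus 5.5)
Your route is essentially the paper's. Affinity in $d$ reduces everything to the sign of $f_W(s^\dagger)$, the sign of $\Delta$ separates $\beta<t$ from $\beta\ge t$, and $\psi_0,\psi_1$ are read off as level sets using Lemmas~\ref{lm:effects_fW} and~\ref{lm:effects_Phi}. The paper asserts continuity of $\psi_0,\psi_1$ without proof, so your Berge-plus-strict-monotonicity argument is an addition. On $(t,\infty)$ it is sound: if $F(\alpha,\beta)=0$ with $\Delta(\beta)<0$, then $\max_s\Phi=-\Delta(\beta)>0$. That forces $s^\dagger>0$ and $K(\beta)>0$, so $F(\cdot,\beta)$ cannot be flat at level $0$.

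The step that fails is the endpoint $\beta=t$, where you set $\psi_0(t)=0$.
\begin{itemize}
\item At $\beta=t$ you have $F(\alpha,t)=\max_s\Phi(s;\alpha,t)$. This equals $0$ for every $\alpha\le\psi_1(t)$, because $s^\dagger=0$ on that range. So $F(\cdot,t)$ is flat at level $0$ exactly where your strict-monotonicity argument is unavailable, and your infimum picks the left end of the flat piece.
\item Fix $\alpha<\psi_1(t)$. Then $\partial_s\Phi(0;\alpha,t)<0$, hence $\partial_s\Phi(0;\alpha,\beta)<0$ for $\beta$ slightly above $t$. So $s^\dagger=0$ and $F(\alpha,\beta)=\Delta(\beta)<0$, which gives $\psi_0(\beta)\ge\alpha$.
\item For $\alpha>\psi_1(t)$ you have $F(\alpha,t)>0$, so $\psi_0(\beta)\le\alpha$ for $\beta$ near $t$.
\item Together these give $\lim_{\beta\downarrow t}\psi_0(\beta)=\psi_1(t)$.
\end{itemize}
So your $\psi_0$ jumps at $t$ whenever $\psi_1(t)>0$, which contradicts the continuity you are trying to prove. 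This is the generic case. In the running example of Section~\ref{sec:simulation}, $t=0.72$ and the closed forms give $\psi_0(t)=\psi_1(t)=20/127.4\approx 0.157$. The two boundaries do meet, but at $(\psi_1(t),t)$, not at $(0,t)$.

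The repair is to define $\psi_0(t):=\psi_1(t)$, which still satisfies $f_W(s^\dagger(\psi_0(t),t))=0$. On the segment $\{\beta=t,\ \alpha<\psi_1(t)\}$ one has $f_W(s^\dagger)=f_W(0)=\Delta(t)=0$, so $(0,0)$ is an optimal action there. Your global rule of breaking ties toward delegation must be relaxed on that segment, so that it matches the manual-work label the statement then assigns.
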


	\begin{proof}
		We proceed by analyzing the properties of the threshold functions and then deriving the optimal actions for each regime.
		
		\paragraph{Monotonicity of $\psi_0(\beta)$.}
		Recall that $\psi_0(\beta)$ is defined implicitly by the condition $f_W(s^\dagger(\psi_0(\beta), \beta)) = 0$. Consider two values $\beta_1, \beta_2$ with $\beta_1 > \beta_2$. By Lemma \ref{lm:effects_fW}, the function $f_W(s^\dagger(\alpha, \beta))$ is non-decreasing in $\alpha$ and non-increasing in $\beta$.
		Comparing the values at $(\psi_0(\beta_1), \beta_1)$ and $(\psi_0(\beta_1), \beta_2)$, we have:
		\[
		0 = f_W\bigl(s^\dagger(\psi_0(\beta_1), \beta_1)\bigr) \le f_W\bigl(s^\dagger(\psi_0(\beta_1), \beta_2)\bigr).
		\]
		Since $f_W\bigl(s^\dagger(\psi_0(\beta_2), \beta_2)\bigr) = 0$, it follows that $f_W\bigl(s^\dagger(\psi_0(\beta_1), \beta_2)\bigr) \ge f_W\bigl(s^\dagger(\psi_0(\beta_2), \beta_2)\bigr)$. Because $f_W$ is non-decreasing in $\alpha$, this inequality implies $\psi_0(\beta_1) \ge \psi_0(\beta_2)$. Thus, $\psi_0(\beta)$ is monotonically increasing in $\beta$.
		
		\paragraph{Monotonicity of $\psi_1(\beta)$.}
		Recall that $\psi_1(\beta)$ is defined implicitly by $\partial_s \Phi(0; \psi_1(\beta), \beta) = 0$. Consider $\beta_1 > \beta_2$. By Lemma \ref{lm:effects_Phi}, the marginal benefit $\partial_s \Phi(0; \alpha, \beta)$ is non-decreasing in $\alpha$ and non-decreasing in $\beta$.
		Comparing the values at $(\psi_1(\beta_1), \beta_1)$ and $(\psi_1(\beta_1), \beta_2)$, we have:
		\[
		0 = \partial_s \Phi(0; \psi_1(\beta_1), \beta_1) \ge \partial_s \Phi(0; \psi_1(\beta_1), \beta_2).
		\]
		Since $\partial_s \Phi(0; \psi_1(\beta_2), \beta_2) = 0$, it follows that $\partial_s \Phi(0; \psi_1(\beta_1), \beta_2) \le \partial_s \Phi(0; \psi_1(\beta_2), \beta_2)$. Because $\partial_s \Phi$ is non-decreasing in $\alpha$, this implies $\psi_1(\beta_1) \le \psi_1(\beta_2)$. Thus, $\psi_1(\beta)$ is monotonically decreasing in $\beta$.
		
		\paragraph{Relationship between $f_W(0)$ and $t$.}
		The term $f_W(0)$ represents the net benefit of pure delegation relative to manual work. Note that $f_W(0) = C_w(\beta) - C_a - (b_W+\ell_W)(p_w(\beta)-p_a)$. Since the manual cost $C_w$ is decreasing in $\beta$, we have $\partial_\beta f_W(0) < 0$.
		The threshold $t$ is defined such that $\Delta(t) = f_W(0)|_{\beta=t} = 0$. Therefore:
		\begin{itemize}
			\item If $\beta \ge t$, then $f_W(0) \le 0$.
			\item If $\beta < t$, then $f_W(0) > 0$.
		\end{itemize}
		
		\paragraph{Optimal action regimes.}
		We now characterize the optimal strategy $(d^\star, s^\star)$ based on the sign of the optimal verification intensity $s^\dagger$ and the maximal value function $f_W(s^\dagger)$.
		
		\begin{enumerate}[leftmargin=*]
			\item \textbf{Manual work $(0,0)$.}
			The worker chooses manual work if and only if the utility of delegation (even with optimal verification) does not exceed manual utility, i.e., $f_W(s^\dagger(\alpha,\beta)) \le 0$.
			Since $f_W(s^\dagger)$ is maximized at $s^\dagger$ and bounded below by pure delegation $f_W(0)$, a necessary condition is $f_W(0) \le 0$, which implies $\beta \ge t$.
			Given $\beta \ge t$, the condition $f_W(s^\dagger(\alpha,\beta)) \le 0$ holds if and only if $\alpha \le \psi_0(\beta)$ since $f_W$ is increasing in $\alpha$ and zero at $\psi_0$.
			Thus, the condition is: $\beta \ge t$ and $\alpha < \psi_0(\beta)$.
			
			\item \textbf{Pure delegation $(1,0)$.}
			The worker chooses pure delegation if and only if verification is not worthwhile ($s^\dagger = 0$) and delegation is better than manual work ($f_W(0) > 0$).
			First, $f_W(0) > 0$ implies $\beta < t$.
			Second, $s^\dagger = 0$ corresponds to the corner solution where the marginal benefit of verification at zero is non-positive: $\partial_s \Phi(0; \alpha, \beta) \le 0$. 
			Since $\partial_s \Phi$ is increasing in $\alpha$ and zero at $\psi_1$, this requires $\alpha \le \psi_1(\beta)$.
			Thus, the condition is: $\beta < t$ and $\alpha < \psi_1(\beta)$.
			
			\item \textbf{Verified delegation $(1, s^\dagger)$.}
			The worker chooses verified delegation in all other cases, specifically when $s^\dagger > 0$ and $f_W(s^\dagger) > 0$.
			This occurs if:
			\begin{itemize}
				\item $\beta < t$ and $\alpha > \psi_1(\beta)$ (Delegation is profitable, and verification is valuable).
				\item $\beta \ge t$ and $\alpha > \psi_0(\beta)$ (Delegation is only profitable because verification is sufficiently valuable to overcome the manual baseline). 
				Note that when $\beta \ge t$, we must have $\psi_0(\beta) \ge \psi_1(\beta)$ for a solution to exist, ensuring consistency.
			\end{itemize}
		\end{enumerate}
		This concludes the proof of the characterization.
	\end{proof}

	\subsection{Proof of Theorem \ref{thm:quality}: \bf{Worker quality improvement v.s. no AI}}

	\begin{theorem}[\bf{Restatement of Theorem \ref{thm:quality}}]
		Fix the task profile, AI characteristics, and baseline worker characteristics.
		Let $t$ and $\psi_0(\cdot)$ be as given in Theorem~\ref{thm:action}.
		Under Assumptions~\ref{assum:dominant} and~\ref{assum:detection}, there exists a continuous function $\psi: \R_{\ge 0} \rightarrow \R_{\ge 0}$ such that
		\[
		(f_I(s^\star) = 0) \wedge (d^\star = 1) \ \text{for any pair } (\alpha,\beta) = (\psi(\beta),\beta).
		\]
		Moreover,
		\begin{itemize}
			\item {\bf (Improved)} $Q(W) > Q_0(W)$ holds iff $\alpha > \psi(\beta)$;
			\item {\bf (Unchanged)} $Q(W) = Q_0(W)$ holds iff $\bigl((\beta \ge t)\wedge (\alpha < \psi_0(\beta))\bigr)$ or $\alpha = \psi(\beta)$;
			\item {\bf (Degraded)} $Q(W) < Q_0(W)$ holds otherwise.
		\end{itemize}
	\end{theorem}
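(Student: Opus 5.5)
The argument rests on Proposition~\ref{lm:reduction}, which gives $Q(W)-Q_0(W)=f_I(s^\star)\,d^\star$. It therefore suffices to determine the sign of this product over the $(\alpha,\beta)$ plane, using the regime description of Theorem~\ref{thm:action}.

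\textbf{Step 1: the manual-work regime.} In the manual-work regime, $\beta\ge t$ and $\alpha<\psi_0(\beta)$, Theorem~\ref{thm:action} gives $d^\star=0$. Hence $Q(W)=Q_0(W)$ there, which accounts for the first clause of the \emph{unchanged} case.

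\textbf{Step 2: the delegation regime.} Everywhere else $d^\star=1$, so the sign is that of $f_I(s^\star)$. By Lemma~\ref{lm:quality_gap}, $d^\star$ is non-decreasing in $\alpha$. For fixed $\beta$, the delegation set is therefore an up-ray in $\alpha$: it is all of $\R_{\ge0}$ when $\beta<t$, and $\{\alpha\ge\psi_0(\beta)\}$ when $\beta\ge t$. On this ray, $f_I(s^\star(\alpha,\beta))$ is non-decreasing in $\alpha$, again by Lemma~\ref{lm:quality_gap}.

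To see why this monotonicity holds, write
\[
f_I(s)=A_I\,\phi(s)-\xi C_v(s)+\text{const},
\qquad A_I=(1-p_a)\bigl((b_I+\ell_I)p_w-\xi C_w\bigr).
\]
Compare this with $f_W=A_W\phi-C_v+\text{const}$ and its envelope in $\alpha$. Assumption~\ref{assum:dominant} gives $A_I\ge \xi A_W$, so $f_I$ places at least as much weight on detection relative to cost as $\xi f_W$ does. Assumption~\ref{assum:detection} says that $\phi(s^\dagger)$ rises with $\alpha$, so the detection term pushes $f_I$ upward.

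I would then define
\[
\psi(\beta):=\inf\{\alpha \text{ in the delegation ray}: f_I(s^\star(\alpha,\beta))\ge 0\}.
\]
It follows that $f_I<0$ below $\psi(\beta)$ on the ray and $f_I>0$ above it. The ``iff'' statements also need strictness, namely that $f_I=0$ only at $\alpha=\psi(\beta)$. I would obtain this from strict monotonicity of $\phi(s^\dagger)$ wherever $s^\dagger>0$. Where $s^\dagger=0$, I would argue that $f_I(0)$ does not depend on $\alpha$, so the zero of $f_I$ cannot sit on a flat piece inside the pure-delegation region except in degenerate cases. I would either note this or perturb the argument.

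\textbf{Step 3: existence of a zero on the boundary.} To get $f_I=0$ and $d^\star=1$ at $(\psi(\beta),\beta)$, I would use continuity of $f_I(s^\dagger(\alpha,\beta))$ in $\alpha$, which follows from uniqueness of the strictly concave maximizer and Berge's maximum theorem. The intermediate value theorem then applies, provided two things hold.

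\begin{itemize}
\item \textbf{Sign at the left endpoint.} This needs $f_I\le0$ there. At $\alpha=\psi_0(\beta)$ with $\beta\ge t$, we have $f_W(s^\dagger)=0$, and one compares $f_I$ with $\xi f_W$. At $\alpha=0$ with $\beta<t$, $\phi\equiv0$ and $f_I(0)$ is evaluated directly.
\item \textbf{Sign for large $\alpha$.} This needs $f_I>0$ for large $\alpha$. I would use $\phi(s^\dagger)\to1$ as $\alpha\to\infty$, so that $f_I$ approaches the institutional value of fully verified delegation.
\end{itemize}

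If the left endpoint already has $f_I>0$, I would set $\psi(\beta)$ equal to that endpoint, which is still consistent with the statement. Continuity of $\psi$ in $\beta$ would come from the implicit function theorem, or from joint continuity combined with strict monotonicity in $\alpha$.

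\textbf{Step 4: assembling the cases.} The three facts combine as follows.
\begin{itemize}
\item $Q>Q_0$ exactly when $d^\star=1$ and $f_I>0$, which is exactly when $\alpha>\psi(\beta)$. This uses $\psi(\beta)\ge\psi_0(\beta)$ when $\beta\ge t$.
\item $Q=Q_0$ exactly on the manual regime or at $\alpha=\psi(\beta)$.
\item $Q<Q_0$ everywhere else.
\end{itemize}

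\textbf{Main obstacle.} The hardest step is the sign control in Step 3: showing that $f_I$ changes sign from $\le 0$ to $>0$ along each ray. The comparison of $f_I$ with $\xi f_W$ via $A_I\ge\xi A_W$ handles the manual-to-verified boundary. However, the behaviour at $\alpha=0$ for $\beta<t$ and the limit as $\alpha\to\infty$ may require extra parameter conditions, which I would make explicit if needed.
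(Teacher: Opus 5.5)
\textbf{The route matches the paper's, but Step~3 fails where you think it is settled.} You follow the paper's approach: reduce to the sign of $f_I(s^\star)d^\star$ via Proposition~\ref{lm:reduction}, use Lemma~\ref{lm:quality_gap} along the delegation ray, and cut at a threshold. The paper simply \emph{declares} $\psi$ by $d^\star=1$, $f_I(s^\dagger)=0$ and never shows such a point exists, so your Step~3 is where the real work lies. Comparing $f_I$ with $\xi f_W$ does not control the sign at $\alpha=\psi_0(\beta)$. Since $f_W(s)=U_W(1,s)-U_W(0,0)$ and $f_I(s)=U_I(1,s)-U_I(0,0)$, one has exactly
\[
f_I(s)=\xi f_W(s)+D\bigl(p(1,s)-p_w\bigr),\qquad D:=(b_I+\ell_I)-\xi(b_W+\ell_W)>0 .
\]
So the extra weight $A_I-\xi A_W$ on $\phi$ is offset by the constant $-D(p_w-p_a)$. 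At $\alpha=\psi_0(\beta)$, where $f_W(s^\dagger)=0$, this gives $f_I(s^\dagger)=D\bigl(p(1,s^\dagger)-p_w\bigr)=\frac{D}{b_W+\ell_W}\bigl(\cost(1,s^\dagger)-C_w\bigr)$. That is strictly positive whenever verified delegation costs more than manual work, e.g.\ for efficient workers with $C_w<C_a+C_v(s^\dagger)$.

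This is what happens in the paper's own example for $\beta$ above the crossing near $\beta\approx 0.82$, which is why Section~\ref{sec:simulation} writes $\psi=\max\{\psi_0,\psi'\}$. On such rays $f_I(s^\star)>0$ at every delegating $\alpha$, so no point satisfies $f_I(s^\star)=0\wedge d^\star=1$. Your fallback is therefore \emph{not} consistent with the statement. By Theorem~\ref{thm:action}'s convention, $d^\star=1$ at $\alpha=\psi_0(\beta)=\psi(\beta)$, so $Q>Q_0$ at a point the statement labels Unchanged. The opposite tie-break rescues the classification but not $d^\star=1$.

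\textbf{The other endpoints are not removable either.} For $\beta<t$, $f_I(0)=\xi f_W(0)-D(p_w-p_a)$ with $f_W(0)>0$ can be positive or zero. If positive, $Q>Q_0$ for all $\alpha\ge 0$ and no $\psi(\beta)\in\R_{\ge 0}$ works. If zero, $Q=Q_0$ on the whole pure-delegation interval $[0,\psi_1(\beta)]$. Nothing in the model forces $\sup_\alpha f_I(s^\dagger)>0$ either. So the literal statement cannot be proved.

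What your argument does prove is a corrected version. Assume $f_I(0)<0$ for every $\beta$ and $\sup_\alpha f_I(s^\dagger(\alpha,\beta))>0$. Let $\psi'(\beta)$ be the zero of $f_I(s^\dagger(\cdot,\beta))$ and set $\psi:=\max\{\psi_0,\psi'\}$, with $\psi_0:=0$ on $[0,t)$. Then the three-way classification holds up to the tie at $\alpha=\psi_0(\beta)$, and $f_I(s^\star)=0$ on the graph of $\psi$ only where $\psi'\ge\psi_0$.

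\textbf{Strictness.} Assumption~\ref{assum:detection} gives only $\partial_\alpha\phi(s^\dagger)\ge 0$, so take strictness from the $\xi f_W$ term instead. For $\alpha_2>\alpha_1$,
\[
f_W(s^\dagger(\alpha_2);\alpha_2)\ge f_W(s^\dagger(\alpha_1);\alpha_2)>f_W(s^\dagger(\alpha_1);\alpha_1)
\]
whenever $s^\dagger(\alpha_1)>0$, $A_W>0$, and $\phi(s;\cdot)$ is strictly increasing for $s>0$. With $\xi>0$, the identity above then makes $f_I(s^\dagger)$ strictly increasing on the verified-delegation part of the ray.
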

	
	\begin{proof}
		We analyze the relationship between the equilibrium quality $Q(W)$ and the baseline manual quality $Q_0(W)$. 
		Recall that the difference is given by $Q(W) - Q_0(W) =  f_I(s^\star) \cdot d^\star$. 
		Thus, the deviation depends on the delegation decision $d^\star$ and the sign of the institution's net benefit $f_I(s^\star)$.
		
		\paragraph{Construction of $\psi(\beta)$.}
		We define the threshold function $\psi(\beta)$ implicitly by the condition that the institution breaks even under delegation. 
		Specifically, for an ability pair $(\alpha, \beta) = (\psi(\beta), \beta)$, we satisfy $d^\star = 1$ and $f_I(s^\dagger(\psi(\beta), \beta)) = 0$.
		By Lemma \ref{lm:quality_gap}, $f_I(s^\dagger(\alpha,\beta))$ is non-decreasing in $\alpha$, and the optimal delegation $d^\star$ is non-decreasing in $\alpha$. 
		Given that $d^\star=1$ at the threshold, it follows that for any $\alpha > \psi(\beta)$, we have $d^\star=1$ and $f_I(s^\dagger) > 0$. 
		Conversely, if $\alpha \le \psi(\beta)$ and delegation occurs, $f_I(s^\dagger) \le 0$.
		
		\paragraph{Improved ($Q(W) > Q_0(W)$).}
		This regime holds if and only if the worker delegates ($d^\star = 1$) and the institutional benefit is strictly positive ($f_I(s^\dagger) > 0$).
		Based on the monotonicity established above, $f_I(s^\dagger) > 0$ holds if and only if $\alpha > \psi(\beta)$. 
		Since $\alpha > \psi(\beta)$ implies $d^\star=1$ (delegation is optimal for the worker), the condition simplifies to:
		\[
		\alpha > \psi(\beta).
		\]
		
		\paragraph{Unchanged ($Q(W) = Q_0(W)$).}
		This regime holds if either the worker chooses manual work ($d^\star = 0$) or delegates with zero net institutional gain ($d^\star = 1 \land f_I(s^\dagger) = 0$).
		\begin{itemize}
			\item From Theorem \ref{thm:action}, manual work ($d^\star = 0$) is chosen if and only if $\beta \ge t$ and $\alpha < \psi_0(\beta)$.
			\item Verified delegation with zero gain occurs exactly at the threshold $\alpha = \psi(\beta)$.
		\end{itemize}
		Combining these disjoint cases yields the condition:
		\[
		\bigl((\beta \ge t) \land (\alpha < \psi_0(\beta))\bigr) \lor (\alpha = \psi(\beta)).
		\]
		
		\paragraph{Degraded ($Q(W) < Q_0(W)$).}
		This regime holds if and only if the worker delegates ($d^\star = 1$) despite the institutional benefit being negative ($f_I(s^\dagger) < 0$).
		First, $f_I(s^\dagger) < 0$ implies $\alpha < \psi(\beta)$.
		Second, for the worker to delegate in this region, they must satisfy the delegation conditions from Theorem \ref{thm:action}: either $(\beta < t)$ or $(\beta \ge t \land \alpha > \psi_0(\beta))$.
		Thus, degradation occurs if:
		\[
		(\alpha < \psi(\beta)) \land \bigl[ (\beta < t) \lor (\beta \ge t \land \alpha > \psi_0(\beta)) \bigr].
		\]
	\end{proof}

	\subsection{Proof of Theorem \ref{thm:region}: \bf{Worker upgraded v.s.\ downgraded}}
	
	\begin{theorem}[\bf{Restatement of Theorem \ref{thm:region}}]
		Fix the task profile, AI characteristics, and baseline worker characteristics.
		Fix a grading threshold $\tau \ge 0$.
		Let $t_\tau \ge 0$ be the solution to $Q_0(W;\beta)=\tau$.
		Under Assumptions~\ref{assum:dominant} and~\ref{assum:detection}, there exists a monotonically decreasing function $\psi_\tau:\R_{\ge 0}\rightarrow \R_{\ge 0}$ such that
		$Q(W)=\tau$ for the ability pair $(\alpha,\beta)=(\psi_\tau(\beta),\beta)$.
		Then
		\begin{itemize}
			\item {\bf (Compliance gain)} $(Q(W)>\tau)\wedge(Q_0(W)<\tau)$ holds iff
			$
			\bigl(\beta < \min\{t,t_\tau\}\bigr)\wedge\bigl(\alpha > \psi_\tau(\beta)\bigr)
			$ or 
			$\bigl(t \le \beta < t_\tau\bigr)\wedge\bigl(\alpha > \max\{\psi_0(\beta),\psi_\tau(\beta)\}\bigr).
			$
			\item {\bf (Compliance loss)} $(Q(W)<\tau)\wedge(Q_0(W)>\tau)$ holds iff
			$
			\bigl(t_\tau < \beta \le t\bigr)\wedge\bigl(\alpha < \psi_\tau(\beta)\bigr)
			$ or $
			\bigl(\beta > \max\{t,t_\tau\}\bigr)\wedge\bigl(\psi_0(\beta) < \alpha < \psi_\tau(\beta)\bigr).
			$
		\end{itemize}
	\end{theorem}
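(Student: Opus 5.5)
The plan is to build everything from three ingredients: the monotonicity of the baseline $Q_0$ in $\beta$, the action regimes of Theorem~\ref{thm:action}, and Lemma~\ref{lm:quality}.

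\textbf{Step 1: the baseline threshold $t_\tau$.} Since $Q_0(W)=g_I=(b_I+\ell_I)p_w-\ell_I-\xi C_w(\beta)$ and $C_w$ is decreasing in $\beta$, $Q_0$ is non-decreasing in $\beta$. Hence $Q_0(W)<\tau$ iff $\beta<t_\tau$ and $Q_0(W)>\tau$ iff $\beta>t_\tau$, with $t_\tau$ the unique crossing (taking strict monotonicity when $\xi>0$).

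\textbf{Step 2: the separatrix $\psi_\tau$ on the delegation region.} On $\{d^\star=1\}$ we have $Q(W)=U_I(1,s^\dagger(\alpha,\beta))$, and by Lemma~\ref{lm:quality} this is non-decreasing in both $\alpha$ and $\beta$. Define
\[
\psi_\tau(\beta):=\inf\{\alpha\ge 0:\ d^\star=1,\ Q(W)\ge\tau\}.
\]
Continuity of $Q$ in $\alpha$ on the delegation region follows from the continuity of $s^\dagger$, which uses the strict concavity of $f_W$ via Berge's maximum theorem. By the intermediate value theorem, $Q(\psi_\tau(\beta),\beta)=\tau$ whenever the level is attained. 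If it is not attained, set $\psi_\tau=0$ or $+\infty$ as a boundary convention.

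Monotonicity of $\psi_\tau$ uses the same comparison argument as for $\psi_0$ in Theorem~\ref{thm:action}. For $\beta_1>\beta_2$, we have $\tau=Q(\psi_\tau(\beta_2),\beta_2)\le Q(\psi_\tau(\beta_2),\beta_1)$, and $Q$ is non-decreasing in $\alpha$, so $\psi_\tau(\beta_1)\le\psi_\tau(\beta_2)$. It follows that on the delegation region $Q>\tau$ iff $\alpha>\psi_\tau(\beta)$ and $Q<\tau$ iff $\alpha<\psi_\tau(\beta)$, up to the flat-segment issue discussed below.

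\textbf{Step 3: case split by action regime.} By Theorem~\ref{thm:action} there are two cases.
\begin{itemize}
\item If $\beta<t$, the worker always delegates.
\item If $\beta\ge t$, the worker delegates iff $\alpha>\psi_0(\beta)$; otherwise $Q=Q_0$, so neither gain nor loss can occur.
\end{itemize}

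\emph{Compliance gain} requires $\beta<t_\tau$ by Step 1 and $d^\star=1$ with $\alpha>\psi_\tau(\beta)$ by Step 2. Intersecting with the two delegation cases gives exactly $\bigl(\beta<\min\{t,t_\tau\}\bigr)\wedge\bigl(\alpha>\psi_\tau(\beta)\bigr)$, or $\bigl(t\le\beta<t_\tau\bigr)\wedge\bigl(\alpha>\max\{\psi_0,\psi_\tau\}\bigr)$.

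\emph{Compliance loss} requires $\beta>t_\tau$, $d^\star=1$, and $\alpha<\psi_\tau(\beta)$. Intersecting similarly gives the stated two regions. The boundary $\beta=t$ goes to the first set, since $\psi_0(t)$ is where delegation starts: at $\beta=t$ we have $f_W(0)=0$, so pure delegation is weakly optimal, consistent with $\beta\le t$ in the statement.

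\textbf{Main obstacle.} I expect the hardest part to be making Step 2 rigorous. Lemma~\ref{lm:quality} gives only weak monotonicity, so the level set $\{Q=\tau\}$ could be a flat interval rather than a graph. In addition, $Q$ may jump at the $\psi_0$ boundary, where $d^\star$ switches from $0$ to $1$.

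For the flat intervals, I would first try to obtain strict monotonicity in $\alpha$ wherever $s^\dagger>0$. The route is to write
\[
\partial_\alpha Q=(1-p_a)\bigl((b_I+\ell_I)p_w-\xi C_w\bigr)\,\partial_\alpha\phi(s^\dagger)+(\text{envelope-type terms}),
\]
then apply Assumption~\ref{assum:dominant} and the worker's first-order condition, in the same way as in the proof of Lemma~\ref{lm:quality_gap}. If that fails, I would adopt the $\inf$ convention above and state the iff's modulo the measure-zero boundary.

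For the jump at $\psi_0$, defining $\psi_\tau$ as an infimum over the delegation region handles it automatically. In that case $\max\{\psi_0,\psi_\tau\}$ is the correct threshold for $\beta\ge t$.
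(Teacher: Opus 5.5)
Your overall structure is the paper's, and your Step~3 is fine: you get $t_\tau$ from the monotonicity of $Q_0$, a separatrix from Lemma~\ref{lm:quality}, and then intersect with the action regimes of Theorem~\ref{thm:action}. The gap is in Step~2. You define $\psi_\tau(\beta)$ as an infimum over the delegation region $\{d^\star=1\}$. You then argue monotonicity via $\tau=Q(\psi_\tau(\beta_2),\beta_2)\le Q(\psi_\tau(\beta_2),\beta_1)$ for $\beta_1>\beta_2$. That inequality uses the $\beta$-monotonicity in Lemma~\ref{lm:quality}, which you only have on $\{d^\star=1\}$. The point $(\psi_\tau(\beta_2),\beta_1)$ need not lie there: delegation is non-increasing in $\beta$ (Lemma~\ref{lm:effects_fW}). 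For $\beta_1\ge t$ the point is in manual work, where $Q=Q_0$, whenever $\psi_\tau(\beta_2)<\psi_0(\beta_1)$.

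Worse, the claimed monotonicity is false for your $\psi_\tau$. Your definition forces $\psi_\tau\ge\psi_0$ on $[t,\infty)$. Whenever the delegated quality at the entry point $(\psi_0(\beta),\beta)$ already exceeds $\tau$, your infimum equals $\psi_0(\beta)$, which is increasing, and $Q(\psi_\tau(\beta),\beta)>\tau$ there. This happens in the paper's own example. For $\beta>0.82$ the curve $\psi'$ lies below $\psi_0$, so $f_I(s^\dagger(\psi_0(\beta),\beta))\ge 0$ and $Q=Q_0+f_I\ge 6+1.5\beta>6.4=\tau$. Your fallback (``set $\psi_\tau=0$ or $+\infty$ if not attained'') also does not match your infimum, which returns $\psi_0(\beta)$ here. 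Your remark that the infimum ``handles the jump at $\psi_0$ automatically'', making $\max\{\psi_0,\psi_\tau\}$ redundant, is the symptom: the object you built is really $\max\{\psi_0,\psi_\tau\}$, and it inherits the increase of $\psi_0$.

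The fix is to use $G(\alpha,\beta):=U_I(1,s^\dagger(\alpha,\beta))$ on the whole quadrant, regardless of what the worker actually chooses. This is also how the paper's proof must be read, since it defines $\psi_\tau$ by ``delegated quality equals $\tau$''. Because $s^\dagger$ maximizes $f_W$ whether or not the worker delegates, the arguments behind Lemmas~\ref{lm:quality_gap} and~\ref{lm:quality} never use $d^\star=1$. Hence $G$ is non-decreasing in $\alpha$ and $\beta$ globally. Set $\psi_\tau(\beta):=\inf\{\alpha\ge 0: G(\alpha,\beta)\ge\tau\}$. This is non-increasing simply because the superlevel set of $G$ is an up-set, and continuity in $\alpha$ gives $G(\psi_\tau(\beta),\beta)=\tau$ whenever the level is crossed. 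The theorem's clause ``$Q(W)=\tau$ on the graph'' then holds only where the graph lies in the delegation region. Intersecting with $\{\alpha>\psi_0(\beta)\}$ in Step~3 is exactly what produces $\max\{\psi_0,\psi_\tau\}$ and the window $\psi_0<\alpha<\psi_\tau$, and with this change Step~3 goes through as you wrote it. Two boundary points remain, both of which the paper also glosses over. Your flat-segment caveat is still apt: on the pure-delegation region $G\equiv b_Ip_a-\ell_I(1-p_a)-\xi C_a$ is constant. And at $\beta=t$, your appeal to weak optimality quietly switches the tie-breaking used in Theorem~\ref{thm:action}.
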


	\begin{proof}
		We characterize the regions where the worker's performance relative to the grading threshold $\tau$ changes due to AI availability.
		
		\paragraph{Threshold properties.}
		Let $t_\tau$ be the solution to $Q_0(W; \beta) = \tau$. 
		Recall that $Q_0(W) = (b_I + \ell_I)p_W(\beta) - \ell_I - \xi C_w(\beta)$. 
		Since the manual cost $C_w$ is decreasing in $\beta$, $Q_0(W)$ is strictly increasing in $\beta$. 
		Therefore:
		\[
		Q_0(W) > \tau \iff \beta > t_\tau \quad \text{and} \quad Q_0(W) < \tau \iff \beta < t_\tau.
		\]
		Next, let $\psi_\tau(\beta)$ be the ability level $\alpha$ such that the delegated quality equals $\tau$, i.e., $Q(W)|_{d^\star=1} = \tau$. 
		By Lemma \ref{lm:quality}, $Q(W)$ under delegation is non-decreasing in $\alpha$. 
		Thus:
		\[
		Q(W)|_{d^\star=1} > \tau \iff \alpha > \psi_\tau(\beta).
		\]
		
		\paragraph{Compliance Gain ($(Q(W) > \tau) \land (Q_0(W) < \tau)$).}
		This scenario implies the worker fails manually but succeeds with AI.
		First, $Q_0(W) < \tau$ necessitates $\beta < t_\tau$.
		Second, $Q(W) > \tau$ implies the worker must delegate ($d^\star = 1$), because manual work would yield $Q_0 < \tau$. 
		Thus, we require $d^\star=1$ and $\alpha > \psi_\tau(\beta)$.
		We analyze delegation in two sub-regions of $\beta$:
		\begin{itemize}
			\item If $\beta < t$: Delegation is always preferred. 
			Combining with $\beta < t_\tau$, we have $\beta < \min\{t, t_\tau\}$. The only remaining condition is $\alpha > \psi_\tau(\beta)$.
			\item If $\beta \ge t$: Delegation occurs only if $\alpha > \psi_0(\beta)$. 
			Combining with $\beta < t_\tau$, we have $t \le \beta < t_\tau$.
			The condition becomes $\alpha > \max\{\psi_0(\beta), \psi_\tau(\beta)\}$.
		\end{itemize}
		Combining these yields:
		\[
		\bigl(\beta < \min\{t, t_\tau\} \land \alpha > \psi_\tau(\beta)\bigr) \lor \bigl(t \le \beta < t_\tau \land \alpha > \max\{\psi_0(\beta), \psi_\tau(\beta)\}\bigr).
		\]
		
		\paragraph{Compliance Loss ($(Q(W) < \tau) \land (Q_0(W) > \tau)$).}
		This scenario implies the worker succeeds manually but fails with AI.
		First, $Q_0(W) > \tau$ necessitates $\beta > t_\tau$.
		Second, $Q(W) < \tau$ implies the worker must delegate ($d^\star = 1$), because manual work would yield $Q_0 > \tau$. 
		Thus, we require $d^\star=1$ and $\alpha < \psi_\tau(\beta)$.
		We analyze delegation in two sub-regions of $\beta$:
		\begin{itemize}
			\item If $\beta \le t$: Delegation is always preferred. 
			Combining with $\beta > t_\tau$, we have $t_\tau < \beta \le t$. 
			The condition is simply $\alpha < \psi_\tau(\beta)$.
			\item If $\beta > t$: Delegation occurs only if $\alpha > \psi_0(\beta)$. 
			Combining with $\beta > t_\tau$, we have $\beta > \max\{t, t_\tau\}$. 
			The condition becomes $\psi_0(\beta) < \alpha < \psi_\tau(\beta)$.
		\end{itemize}
		Combining these yields:
		\[
		\bigl(t_\tau < \beta \le t \land \alpha < \psi_\tau(\beta)\bigr) \lor \bigl(\beta > \max\{t, t_\tau\} \land \psi_0(\beta) < \alpha < \psi_\tau(\beta)\bigr).
		\]
	\end{proof}

	\subsection{
		Proof of Lemma~\ref{lm:effects_fW}: effects of
		\texorpdfstring{$\alpha,\beta$}{alpha, beta}
		on
		\texorpdfstring{$f_W$}{fW}
	}
	
	\begin{lemma}[\textbf{Restatement of Lemma \ref{lm:effects_fW}}]
		$f_W\bigl(s^\dagger(\alpha,\beta)\bigr)$ is non-decreasing in $\alpha$ and non-increasing in $\beta$.
	\end{lemma}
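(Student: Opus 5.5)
The plan is to treat $F(\alpha,\beta):=f_W(s^\dagger(\alpha,\beta))=\max_{s\in[0,1]} f_W(s;\alpha,\beta)$ as a value function. The Danskin/envelope argument then shows that it suffices to check monotonicity of $f_W(s;\alpha,\beta)$ \emph{pointwise in $s$}, for every fixed $s\in[0,1]$. Concretely, if $f_W(s;\alpha,\beta)$ is non-decreasing in $\alpha$ for each fixed $s$, then for $\alpha_1\le\alpha_2$ we have
\[
F(\alpha_1,\beta)=f_W(s^\dagger(\alpha_1,\beta);\alpha_1,\beta)\le f_W(s^\dagger(\alpha_1,\beta);\alpha_2,\beta)\le F(\alpha_2,\beta),
\]
and the analogous chain holds in $\beta$. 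This avoids any differentiation of $s^\dagger$, which is attractive since Table~\ref{tab:monotonicity_summary} notes that $s^\dagger$ is not monotone in $\alpha$.

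\textbf{Step 1 ($\alpha$).} Using the explicit form in Section~\ref{sec:utility}, I write $f_W(s)=\Phi(s;\alpha,\beta)+\Delta(\beta)$. Here $\alpha$ enters only through $\phi(s)=\phi(s;\alpha)$, which is non-decreasing in $\alpha$ for fixed $s$, as stated in the model section. The coefficient multiplying it is $K_W(\beta)=(1-p_a)\bigl((b_W+\ell_W)p_w-C_w\bigr)$. When $K_W\ge 0$, pointwise monotonicity follows immediately. When $K_W<0$, the function $\Phi(s)\le 0=\Phi(0)$ for all $s$, so $s^\dagger=0$ and $F=\Delta(\beta)$ is independent of $\alpha$. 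Either way $F$ is non-decreasing in $\alpha$. In the first case I still take the max over $s$ and use the chain above rather than treating $s=0$ specially.

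\textbf{Step 2 ($\beta$).} Here $\beta$ enters only through $C_w(\beta)$, which is decreasing in $\beta$; $p_w$, $\phi$ and $C_v$ do not depend on $\beta$ in the main model. Collecting the $C_w$ terms gives
\[
f_W(s)=C_w\bigl(1-(1-p_a)\phi(s)\bigr)+(\text{terms free of }C_w).
\]
Since $0\le(1-p_a)\phi(s)\le 1$, the coefficient of $C_w$ is non-negative. So for each fixed $s$, $f_W(s)$ is non-decreasing in $C_w$, and hence non-increasing in $\beta$. The max-over-$s$ chain then transfers this to $F$.

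\textbf{Main obstacle.} The delicate point is Step 1: ensuring the sign of $K_W(\beta)$ does not break pointwise monotonicity in $\alpha$. The dichotomy above handles it, because the maximizer collapses to $s=0$ whenever $K_W\le 0$, using $C_v\ge 0$ increasing. A secondary care point is stating explicitly that the maximum over the compact set $[0,1]$ exists by continuity of $\phi$ and $C_v$. No uniqueness of $s^\dagger$ is needed for the value-function argument, although strict concavity of $f_W$ gives uniqueness when $K_W>0$.
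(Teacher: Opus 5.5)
Your proposal is correct and follows essentially the same route as the paper: it shows pointwise monotonicity of $f_W(s)$ in $\alpha$ (via $K_W(\beta)\,\partial_\alpha\phi(s;\alpha)\ge 0$) and in $\beta$ (via the nonnegative coefficient $1-(1-p_a)\phi(s)$ on $C_w$), and transfers both to the maximum with the same chain of inequalities. Your extra case $K_W<0$ is valid but not needed in the paper's setting, since the standing restriction $g_W\ge 0$ already gives $(b_W+\ell_W)p_w-C_w\ge \ell_W\ge 0$ and hence $K_W\ge 0$.
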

	
	\begin{proof}
		Recall the definition of the worker's net benefit from delegation:
		\[
		f_W(s)
		= K_w(\beta)\,\phi(s;\alpha) - C_v(s) - C_a + C_w(\beta) - (b_W + \ell_W)(p_w-p_a),
		\]
		where
		\[
		K_w(\beta) := (1-p_a)\big((b_W + \ell_W) p_w - C_w(\beta)\big).
		\]
		Let $F_W(\alpha,\beta) := \max_{s\in[0,1]} f_W(s)$. Note that $K_w(\beta) \geq 0$ holds under the viability assumption.
		
		\paragraph{Monotonicity in $\alpha$.}
		Fix $\beta$. For any fixed $s\in[0,1]$, differentiating with respect to $\alpha$ yields:
		\[
		\frac{\partial}{\partial \alpha} f_W(s) = K_w(\beta)\,\frac{\partial}{\partial\alpha}\phi(s;\alpha).
		\]
		Since $K_w(\beta) \ge 0$ and the detection probability $\phi(s;\alpha)$ is non-decreasing in verification ability $\alpha$, we have $\frac{\partial}{\partial \alpha} f_W(s) \ge 0$.
		By the Envelope Theorem, since the objective function shifts upward pointwise, the maximum value $F_W(\alpha,\beta)$ must be non-decreasing in $\alpha$. 
		Specifically, for $\alpha_2 > \alpha_1$:
		\[
		F_W(\alpha_2,\beta) = f_W(s^\dagger_{\alpha_2} \mid \alpha_2) \ge f_W(s^\dagger_{\alpha_1} \mid \alpha_2) \ge f_W(s^\dagger_{\alpha_1} \mid \alpha_1) = F_W(\alpha_1,\beta).
		\]
		
		\paragraph{Monotonicity in $\beta$.}
		Fix $\alpha$ and $s\in[0,1]$. We differentiate $f_W(s)$ with respect to $\beta$. Note that $\beta$ affects $f_W$ primarily through the manual cost $C_w(\beta)$. 
		Expanding the term involving $C_w$:
		\[
		\text{Terms with } C_w(\beta) = -(1-p_a)C_w(\beta)\phi(\alpha, s) + C_w(\beta) = C_w(\beta)\left[1 - (1-p_a)\phi(s;\alpha)\right].
		\]
		Differentiating with respect to $\beta$:
		\[
		\frac{\partial}{\partial \beta} f_W(s)
		= \left[1 - (1-p_a)\phi(s;\alpha)\right] \frac{\partial C_w(\beta)}{\partial \beta}.
		\]
		First, observe the bracketed term. Since $\phi \in [0,1]$ and $p_a \in [0,1]$, we have $(1-p_a)\phi \le 1$, which implies $[1 - (1-p_a)\phi] \ge 0$.
		Second, by assumption, manual cost decreases with skill $\beta$, so $\frac{\partial C_w}{\partial \beta} < 0$.
		Combining these, we obtain:
		\[
		\frac{\partial}{\partial \beta} f_W(s) \le 0.
		\]
		Thus, $f_W(s)$ is pointwise non-increasing in $\beta$. It follows that the maximum value $F_W(\alpha,\beta)$ is non-increasing in $\beta$.
	\end{proof}

	\subsection{
		Proof of Lemma~\ref{lm:effects_Phi}: effects of
		\texorpdfstring{$\alpha,\beta$}{alpha, beta}
		on
		\texorpdfstring{$\Phi$}{Phi}
	}

	\begin{lemma}[\textbf{Restatement of Lemma \ref{lm:effects_Phi}}]
		$\partial_s \Phi(0;\alpha,\beta)$ is non-decreasing in $\alpha$ and non-decreasing in $\beta$.
	\end{lemma}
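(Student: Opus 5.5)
Differentiating $\Phi$ in $s$ and evaluating at $s=0$ gives
\[
\partial_s \Phi(0;\alpha,\beta) = K_w(\beta)\,\phi'(0;\alpha) - C_v'(0),
\qquad K_w(\beta) := (1-p_a)\bigl((b_W+\ell_W)p_w - C_w(\beta)\bigr),
\]
which is well defined because $\phi$ and $C_v$ are continuously differentiable. The term $C_v'(0)$ involves neither $\alpha$ nor $\beta$, since the verification cost function belongs to the fixed baseline worker characteristics. The claim therefore reduces to monotonicity of the product $K_w(\beta)\,\phi'(0;\alpha)$ in each argument separately. We use two sign facts throughout: $\phi'(0;\alpha)\ge 0$ because $\phi$ is increasing, and $K_w(\beta)\ge 0$ by the viability condition already used in the proof of Lemma~\ref{lm:effects_fW}.

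\textbf{Monotonicity in $\beta$.} Fix $\alpha$. Only $C_w(\beta)$ depends on $\beta$, and $C_w$ is decreasing in $\beta$. Hence $K_w(\beta)$ is non-decreasing, with $K_w'(\beta) = -(1-p_a)C_w'(\beta) \ge 0$. Multiplying by the nonnegative constant $\phi'(0;\alpha)$ shows that $\partial_s\Phi(0;\alpha,\beta)$ is non-decreasing in $\beta$. This is the easy direction.

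\textbf{Monotonicity in $\alpha$.} Fix $\beta$, so $K_w(\beta)\ge 0$ is constant. It then suffices to show that $\phi'(0;\alpha)$ is non-decreasing in $\alpha$. This is the main obstacle, because the model only asserts that $\phi(s;\alpha)$ increases pointwise in $\alpha$, and that alone does not control the slope at the origin. My first attempt is a limit argument. Since $\phi(0;\alpha)=0$ for every $\alpha$,
\[
\phi'(0;\alpha) = \lim_{s\downarrow 0} \frac{\phi(s;\alpha)}{s}.
\]
For $\alpha_2>\alpha_1$, pointwise monotonicity gives $\phi(s;\alpha_2)/s \ge \phi(s;\alpha_1)/s$ for every $s>0$. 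Passing to the limit, which exists by continuous differentiability, yields $\phi'(0;\alpha_2)\ge\phi'(0;\alpha_1)$.

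If pointwise monotonicity in $\alpha$ is to be read only as the stated modeling convention, I would make it explicit as a standing assumption. I would also check it on the two canonical forms, where $\phi'(0;\alpha)=\alpha$ for $1-e^{-\alpha s}$ and for $1-\frac{1}{1+\alpha s}$; both are increasing in $\alpha$. A scaled variant such as $1-\frac{1}{1+c\alpha s}$ gives $c\alpha$, also increasing. Combining the two directions proves the lemma.
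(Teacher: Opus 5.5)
Your proof is correct and follows the paper's argument essentially step for step: the same reduction to $K_w(\beta)\,\partial_s\phi(0;\alpha)-C_v'(0)$, the same difference-quotient limit $\lim_{s\downarrow 0}\phi(s;\alpha)/s$ for the $\alpha$-direction (using $\phi(0;\alpha)=0$ and pointwise monotonicity in $\alpha$), and the same sign argument $K_w'(\beta)=-(1-p_a)C_w'(\beta)\ge 0$ multiplied by $\partial_s\phi(0;\alpha)\ge 0$ for the $\beta$-direction. Two expository points: pointwise monotonicity in $\alpha$ \emph{does} control the slope at the origin, because all the curves pass through $0$ at $s=0$, so it is not really an obstacle. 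Also, the bound $K_w(\beta)\ge 0$ you rely on follows from the standing restriction $g_W\ge 0$, since $(b_W+\ell_W)p_w-C_w\ge \ell_W\ge 0$.
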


	\begin{proof}
		Recall the definition of the net verification benefit function:
		\[
		\Phi(s;\alpha,\beta) := K_w(\beta) \phi(s; \alpha) - C_v(s),
		\]
		where the coefficient $K_w(\beta)$ is defined as:
		\[
		K_w(\beta) := (1-p_a)\left((b_W + \ell_W)p_w - C_w(\beta)\right).
		\]
		Recall that $K_w(\beta) \ge 0$ for all $\beta \geq 0$.
		Let $g(\alpha,\beta) := \partial_s \Phi(0;\alpha,\beta)$. Differentiating $\Phi$ with respect to $s$ and evaluating at $s=0$, we obtain:
		\[
		g(\alpha,\beta) = K_w(\beta) \cdot \partial_s \phi(0;\alpha) - C'_v(0).
		\]
		
		\paragraph{Monotonicity in $\alpha$.}
		Differentiating $g(\alpha,\beta)$ with respect to $\alpha$:
		\[
		\frac{\partial}{\partial \alpha} g(\alpha,\beta) = K_w(\beta) \cdot \frac{\partial}{\partial \alpha} \left( \partial_s \phi(0;\alpha) \right).
		\]
		Since $K_w(\beta) \ge 0$, the sign of the derivative depends on the term $\frac{\partial}{\partial \alpha} \partial_s \phi(0;\alpha)$.
		By definition of the derivative at zero and the property that $\phi(0)=0$:
		\[
		\partial_s \phi(0;\alpha) = \lim_{t \rightarrow 0^+} \frac{\phi(t; \alpha) - \phi(0;\alpha)}{t} = \lim_{t \rightarrow 0^+} \frac{\phi(t; \alpha)}{t}.
		\]
		We know $\phi(t; \alpha)$ is non-decreasing in $\alpha$ for any fixed $t > 0$. 
		Therefore, for $\alpha_2 > \alpha_1$, we have $\phi(t; \alpha_2) \ge \phi(t; \alpha_1)$, which implies:
		\[
		\frac{\phi(t; \alpha_2)}{t} \ge \frac{\phi(t; \alpha_1)}{t}.
		\]
		Since the inequality holds for all $t > 0$, it is preserved in the limit as $t \to 0$. 
		Thus, $\partial_s \phi(0;\alpha)$ is non-decreasing in $\alpha$, implying $\frac{\partial}{\partial \alpha} g(\alpha,\beta) \ge 0$.
		
		\paragraph{Monotonicity in $\beta$.}
		Differentiating $g(\alpha,\beta)$ with respect to $\beta$:
		\[
		\frac{\partial}{\partial \beta} g(\alpha,\beta) = \frac{\partial K_w(\beta)}{\partial \beta} \cdot \partial_s \phi(0;\alpha).
		\]
		First, consider the derivative of $K_w(\beta)$. Assuming $p_w$ is constant or absorbs into the cost scaling, the dependence on $\beta$ comes from the manual cost $C_w(\beta)$:
		\[
		\frac{\partial K_w(\beta)}{\partial \beta} = -(1-p_a) \frac{\partial C_w(\beta)}{\partial \beta}.
		\]
		Since the manual cost decreases with skill ($\frac{\partial C_w}{\partial \beta} < 0$) and $p_a \le 1$, we have $\frac{\partial K_w}{\partial \beta} > 0$.
		Second, consider $\partial_s \phi(0)$. Since $\phi(t)$ is a probability increasing from $\phi(0)=0$, we have $\phi(t) \geq 0$ for $t > 0$, implying:
		\[
		\partial_s \phi(0) = \lim_{t \rightarrow 0} \frac{\phi(t)}{t} \ge 0.
		\]
		Thus, $\frac{\partial}{\partial \beta} g(\alpha,\beta) \ge 0$.
		We conclude that $\partial_s \Phi(0;\alpha,\beta)$ is non-decreasing in $\beta$.
	\end{proof}
	
	\subsection{
		Proof of Lemma~\ref{lm:quality_gap}: effect of
		\texorpdfstring{$\alpha$}{alpha}
		on
		\texorpdfstring{$f_I(s^\star)$}{fI(s*)}
	}

	\begin{lemma}[\textbf{Restatement of Lemma ~\ref{lm:quality_gap}}]
		$d^\star$ is non-decreasing in $\alpha$; and for ability pairs with $d^\star=1$, $f_I(s^\star)$ is non-decreasing in $\alpha$.
	\end{lemma}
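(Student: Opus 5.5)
The proof has two parts: one for $d^\star$ and one for $f_I(s^\star)$ on the delegation region.

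\textbf{Monotonicity of $d^\star$.} Since $U_W(d,s)=f_W(s)d+g_W$ is affine in $d$, the optimal delegation is $d^\star=\mathbb{I}[f_W(s^\dagger(\alpha,\beta))\ge 0]$, using the tie-breaking convention of Theorem~\ref{thm:action}. By Lemma~\ref{lm:effects_fW}, the map $\alpha\mapsto f_W(s^\dagger(\alpha,\beta))$ is non-decreasing for fixed $\beta$. Hence the set $\{\alpha: f_W(s^\dagger)\ge 0\}$ is an upper interval, and the indicator is non-decreasing in $\alpha$. This part is routine.

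\textbf{Monotonicity of $f_I(s^\star)$ when $d^\star=1$.} Here $s^\star=s^\dagger(\alpha,\beta)$. Write
\[
f_I(s)=K_I(\beta)\,\phi(s;\alpha)-\xi C_v(s)+c_I(\beta),\qquad K_I(\beta)=(1-p_a)\bigl((b_I+\ell_I)p_w-\xi C_w\bigr).
\]
The key step relates $f_I$ to the worker objective. Scaling $\Phi$ by $\xi$ gives
\[
f_I(s)=\xi\,\Phi(s;\alpha,\beta)+\bigl(K_I-\xi K_w\bigr)\phi(s;\alpha)+\text{const}(\beta),
\qquad
K_I-\xi K_w=(1-p_a)p_w\bigl[(b_I+\ell_I)-\xi(b_W+\ell_W)\bigr].
\]
By Assumption~\ref{assum:dominant}, the coefficient $K_I-\xi K_w$ is nonnegative.

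It then remains to show that each $\alpha$-dependent term is non-decreasing along $s^\dagger(\alpha,\beta)$.
\begin{itemize}
\item The term $\xi\Phi(s^\dagger(\alpha,\beta);\alpha,\beta)=\xi\max_s\Phi$ is a max-value function. Since $K_w\ge 0$ and $\phi$ is pointwise non-decreasing in $\alpha$, the objective shifts upward pointwise. The same chain of inequalities as in the proof of Lemma~\ref{lm:effects_fW} therefore gives monotonicity without differentiability.
\item The term $(K_I-\xi K_w)\,\phi(s^\dagger(\alpha,\beta);\alpha)$ is non-decreasing in $\alpha$ by Assumption~\ref{assum:detection}, which is stated as $\partial_\alpha\phi(s^\dagger(\alpha,\beta))\ge 0$ along the optimal verification path.
\end{itemize}
Summing the two terms gives the claim.

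The main obstacle is the second term. $s^\dagger$ can move non-monotonically in $\alpha$ (Table~\ref{tab:monotonicity_summary} marks it ``$\slash$''), so direct differentiation of $f_I(s^\dagger)$ would involve the sign-indeterminate factor $\partial_\alpha s^\dagger$. The decomposition avoids this: the $\Phi$ part is handled by an envelope argument, and the residual is exactly the quantity controlled by Assumption~\ref{assum:detection}.

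Two points need care. First, uniqueness of $s^\dagger$ should be noted; it follows from strict concavity of $\Phi$ in $s$. Second, at corner solutions $s^\dagger\in\{0,1\}$ the envelope argument still applies, because it uses only the max-value inequality chain.
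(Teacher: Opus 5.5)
Your proposal is correct and follows the paper's proof. You use the same threshold argument via Lemma~\ref{lm:effects_fW} for $d^\star$, and the same decomposition $f_I=\xi f_W+(K_I-\xi K_W)\phi+\text{const}$ (your $\Phi$ differs from $f_W$ only by $\Delta(\beta)$), with the two pieces controlled by Lemma~\ref{lm:effects_fW} together with Assumption~\ref{assum:dominant} and by Assumption~\ref{assum:detection}. Replacing the paper's case split ($s^\star=0$ versus $s^\star>0$) and its differentiation with the max-value inequality chain for the $\Phi$ term is a minor tidying rather than a different route.
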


	\begin{proof}
		We address the two assertions sequentially.
		
		\paragraph{Monotonicity of $d^\star$.}
		Recall the optimal policy from Theorem \ref{thm:action}: the worker delegates ($d^\star = 1$) if and only if the net benefit of delegation exceeds the manual baseline, i.e., $f_W(s^\dagger) > \max(0, f_W(0))$.
		From Lemma \ref{lm:effects_fW}, the value function $f_W(s^\dagger(\alpha, \beta))$ is non-decreasing in $\alpha$.
		Consequently, if delegation is optimal for a given $\alpha_1$ (implying $f_W(s^\dagger(\alpha_1)) > \text{threshold}$), it must also be optimal for any $\alpha_2 > \alpha_1$, as $f_W(s^\dagger(\alpha_2)) \ge f_W(s^\dagger(\alpha_1))$. 
		Thus, $d^\star$ is non-decreasing in $\alpha$.
		
		\paragraph{Monotonicity of $f_I(s^\star)$ under delegation.}
		Assume $d^\star = 1$. 
		We analyze the institutional utility $f_I(s^\star)$ by comparing it to the worker's utility $f_W(s^\star)$.
		Recall that the effective benefit coefficients for the institution and the worker are:
		\[
		K_I(\beta) := (1-p_a)\bigl((b_I + \ell_I) p_w - \xi C_w(\beta)\bigr), \quad K_W(\beta) := (1-p_a)\bigl((b_W + \ell_W) p_w - C_w(\beta)\bigr).
		\]
		Using these coefficients, the utility functions under delegation ($d=1$) can be written as:
		\begin{align*}
			f_W(s) &= K_W(\beta) \phi(s; \alpha) - C_v(s) + \Delta_W, \\
			f_I(s) &= K_I(\beta) \phi(s; \alpha) - \xi C_v(s) + \Delta_I,
		\end{align*}
		where $\Delta_W$ and $\Delta_I$ are terms independent of $s$ and $\alpha$.
		Observe that we can express $f_I(s)$ as a linear combination of $f_W(s)$ and the detection probability $\phi(s)$:
		\[
		f_I(s) = \xi f_W(s) + (K_I(\beta) - \xi K_W(\beta)) \phi(s; \alpha) + \text{Constant}.
		\]
		We now differentiate the equilibrium value $f_I(s^\star)$ with respect to $\alpha$.
		\begin{enumerate}[leftmargin=*]
			\item \textbf{Case 1: Pure delegation ($s^\star = 0$).}
			In this case, $s^\star$ is constant locally. Since $\phi(0;\alpha) = 0$, $f_I(0)$ is constant with respect to $\alpha$. 
			Thus, $\frac{d}{d\alpha} f_I(s^\star) = 0$.
			
			\item \textbf{Case 2: Verified delegation ($s^\star > 0$).}
			In this case, $s^\star = s^\dagger(\alpha, \beta)$.
			Differentiating the linked expression with respect to $\alpha$:
			\[
			\frac{d}{d\alpha} f_I(s^\dagger) = \xi \frac{d}{d\alpha} f_W(s^\dagger) + (K_I(\beta) - \xi K_W(\beta)) \frac{d}{d\alpha} \phi(s^\dagger;\alpha).
			\]
			We analyze the signs of these terms:
			\begin{itemize}
				\item $\frac{d}{d\alpha} f_W(s^\dagger) \ge 0$ by Lemma \ref{lm:effects_fW}.
				\item By Assumption \ref{assum:dominant}, $(b_I + \ell_I) \ge \xi(b_W + \ell_W)$, which implies $K_I(\beta) \ge \xi K_W(\beta)$. 
				Since $K_W > 0$, the coefficient $(K_I - \xi K_W)$ is non-negative.
				\item The term $\frac{d}{d\alpha} \phi(s^\dagger;\alpha)$ represents the total change in equilibrium detection probability. By Assumption \ref{assum:detection}, the equilibrium detection probability is non-decreasing in $\alpha$.
			\end{itemize}
			Combining these non-negative terms, we conclude that $\frac{d}{d\alpha} f_I(s^\star) \ge 0$.
		\end{enumerate}
	\end{proof}
	
	\subsection{
		Proof of Lemma~\ref{lm:quality}: effects of
		\texorpdfstring{$\alpha,\beta$}{alpha, beta}
		on
		\texorpdfstring{$Q(W)$}{Q(W)}
	}

	\begin{lemma}[\bf{Restatement of Lemma \ref{lm:quality}}]
		For ability pairs with $d^\star=1$, worker quality $Q(W)$ is non-decreasing in $\alpha$ and non-decreasing in $\beta$.
	\end{lemma}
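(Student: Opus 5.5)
When $d^\star=1$, the plan is to write $Q(W)=U_I(1,s^\star)=f_I(s^\star)+g_I$ and treat the two parameters separately. The $\alpha$ part should follow from results already proved. The $\beta$ part needs a decomposition of $f_I+g_I$ together with an envelope argument.

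\emph{Monotonicity in $\alpha$.} The baseline $g_I=Q_0(W)$ does not depend on $\alpha$ (Table~\ref{tab:monotonicity_summary}). So on the delegation region, $Q(W)-Q_0(W)=f_I(s^\star)$, and Lemma~\ref{lm:quality_gap} already says this is non-decreasing in $\alpha$. If the pure-delegation branch switches to the verified branch as $\alpha$ crosses $\psi_1(\beta)$, I will check continuity there: $s^\dagger=0$ on both sides of the boundary, so $f_I(s^\star)$ is continuous in $\alpha$ and the piecewise monotonicity glues together. This gives monotonicity in $\alpha$ directly.

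\emph{Monotonicity in $\beta$.} I will write the delegated institutional utility explicitly:
\[
U_I(1,s)=(b_I+\ell_I)\bigl(p_a+(1-p_a)\phi(s)p_w\bigr)-\ell_I-\xi\bigl(C_a+C_v(s)+(1-p_a)\phi(s)C_w(\beta)\bigr).
\]
Here $\beta$ enters in two ways. The first is directly, through $C_w(\beta)$ in the correction cost. That term is $-\xi(1-p_a)\phi(s)C_w(\beta)$, which is non-decreasing in $\beta$ because $C_w$ is decreasing. The second is indirectly, through $s^\dagger(\alpha,\beta)$. For this channel I will use the same trick as in Lemma~\ref{lm:quality_gap}:
\[
U_I(1,s)=\xi\,U_W(1,s)+(K_I-\xi K_W)\phi(s)+\text{const},
\]
where $K_I-\xi K_W=(1-p_a)p_w\bigl((b_I+\ell_I)-\xi(b_W+\ell_W)\bigr)\ge 0$ by Assumption~\ref{assum:dominant}. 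This coefficient does not depend on $\beta$, because the $C_w$ terms cancel. Also, $U_W(1,s)=(b_W+\ell_W)(p_a+(1-p_a)\phi p_w)-\ell_W-C_a-C_v(s)-(1-p_a)\phi(s)C_w(\beta)$, so $U_W(1,s^\dagger)=\max_s U_W(1,s)$ is non-decreasing in $\beta$ by the pointwise argument used in Lemma~\ref{lm:effects_fW}, now applied to $U_W(1,\cdot)$ rather than $f_W$.

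It then remains to show that $\phi(s^\dagger(\alpha,\beta))$ is non-decreasing in $\beta$. Since $\phi$ is increasing, it suffices that $s^\dagger$ is non-decreasing in $\beta$. This is the table entry, and it follows from increasing differences: $\partial_s\Phi=K_W(\beta)\phi'(s)-C_v'(s)$ with $K_W$ increasing in $\beta$ and $\phi'\ge 0$, so $\Phi$ is supermodular in $(s,\beta)$ and Topkis' theorem gives that $s^\dagger$ is monotone. In the interior case, strict concavity lets one use the implicit function theorem instead. Summing the three non-negative contributions, namely $\xi\cdot\frac{d}{d\beta}U_W(1,s^\dagger)\ge 0$, the $(K_I-\xi K_W)\phi$ term, and the constant, yields the $\beta$-monotonicity.

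The main obstacle is the step in which $\beta$ shifts the verification choice: a change in $s^\dagger$ might lower institutional utility even while raising the worker's. The decomposition above resolves this. The worker's part is handled by the envelope argument, and the misaligned remainder depends on $s$ only through $\phi(s)$ with a non-negative, $\beta$-free coefficient. The point to watch is that the decomposition must be taken relative to $U_W(1,\cdot)$ and not relative to $f_W$, because $f_W$ carries an extra $+C_w(\beta)$ term that moves the wrong way.
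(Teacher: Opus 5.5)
Your proof is correct. It differs from the paper's mainly in being derivative-free, and on the $\alpha$ part it is the sounder of the two.

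\textbf{The $\beta$ part.} The paper differentiates $G(\alpha,\beta)=Q(W)|_{d^\star=1}$ directly. It gets $\partial_\beta s^\dagger>0$ at interior solutions from the implicit function theorem, applies the chain rule, and substitutes the worker's first-order condition $C_v'=K_W\,\partial_s\phi$ to obtain $\partial_\beta G=K_I'\phi(s^\dagger)+\partial_\beta s^\dagger\,\partial_s\phi(s^\dagger)\,(K_I-\xi K_W)$. Corners are handled by brief remarks. Your decomposition $U_I(1,s)=\xi U_W(1,s)+(K_I-\xi K_W)\phi(s)+\text{const}$ is the integrated form of that same identity: since $K_I'=\xi K_W'$, the paper's first term is $\xi$ times the envelope derivative of $U_W(1,s^\dagger)$. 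Using Topkis and value-function comparisons instead means you never need $s^\dagger$ to be differentiable. It is not differentiable where the constraint $s\in[0,1]$ starts or stops binding, and your argument treats those corners uniformly. The paper's version, in exchange, gives an explicit local formula for the marginal effect of $\beta$.

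\textbf{The $\alpha$ part.} Here the difference matters more. The paper signs the term $\partial_\alpha s^\dagger\,\partial_s\phi(s^\dagger)(K_I-\xi K_W)$ by asserting $\partial_\alpha s^\dagger\ge 0$ ``by Lemma~\ref{lm:quality_gap}''. That lemma does not say this, and it is false in general: Table~\ref{tab:monotonicity_summary} lists $s^\dagger$ as non-monotone in $\alpha$, and the interior solution in Section~\ref{sec:assum_detection} is unimodal in $\alpha$. Your reduction $Q=f_I(s^\star)+g_I$, with $g_I$ independent of $\alpha$, instead runs through Lemma~\ref{lm:quality_gap}. That lemma rests on Assumption~\ref{assum:detection}, which controls the total derivative of $\phi(s^\dagger)$, and that is exactly the right ingredient.

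\textbf{Minor point.} Your gluing check at $\psi_1(\beta)$ is unnecessary. The identity $f_I=\xi f_W+(K_I-\xi K_W)\phi+\text{const}$ holds for every $s$, including $s^\dagger=0$, so the argument of Lemma~\ref{lm:quality_gap} is already global in $\alpha$. Also, ``$s^\dagger=0$ on both sides of the boundary'' is inaccurate. What is true is that $s^\dagger$ is continuous in $\alpha$, vanishes at $\alpha=\psi_1(\beta)$, and is positive to its right.
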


	\begin{proof}
		Let $G(\alpha,\beta) := Q(W)|_{d^\star=1}$. Using the effective benefit coefficients defined in the proof of Lemma \ref{lm:quality_gap}, we write the institution's quality and the worker's objective function under delegation as:
		\begin{align*}
			G(\alpha,\beta) &= K_I(\beta) \phi(s^\dagger; \alpha) - \xi C_v(s^\dagger) + \Gamma_I, \\
			f_W(s^\dagger) &= K_W(\beta) \phi(s^\dagger; \alpha) - C_v(s^\dagger) + \Gamma_W,
		\end{align*}
		where $\Gamma_I, \Gamma_W$ are constants with respect to $s$ and $\alpha$.
		Recall that $K_I(\beta) \ge \xi K_W(\beta) > 0$, and $\frac{d K_I}{d\beta} > 0, \frac{d K_W}{d\beta} > 0$ (due to decreasing manual costs).
		
		\paragraph{Monotonicity of $s^\dagger$ in $\beta$.}
		We first establish that the optimal effort $s^\dagger$ is non-decreasing in $\beta$.
		Consider the interior first order condition for the worker:
		\[
		g(s, \beta) := \frac{\partial f_W}{\partial s} = K_W(\beta) \phi_s(s; \alpha) - C'_v(s) = 0.
		\]
		By the implicit function theorem, the sensitivity of the interior solution $s_0$ is:
		\[
		\frac{\partial s_0}{\partial \beta} = - \frac{\partial g / \partial \beta}{\partial g / \partial s}.
		\]
		The denominator $\frac{\partial g}{\partial s} = \frac{\partial^2 f_W}{\partial s^2} < 0$ by the second-order condition for a maximum.
		The numerator is $\frac{\partial g}{\partial \beta} = K'_W(\beta) \partial_s\phi(s)$. 
		Since $K'_W(\beta) > 0$ and $\partial_s\phi(s) > 0$, the numerator is positive.
		Therefore, $\frac{\partial s_0}{\partial \beta} > 0$. Since $s^\dagger = \min(1, \max(0, s_0))$, the monotonicity holds for the constrained solution as well.
		
		\paragraph{Monotonicity of $Q(W)$ in $\beta$.}
		Differentiating $G(\alpha,\beta)$ with respect to $\beta$:
		\[
		\frac{\partial G}{\partial \beta} = K'_I(\beta) \phi(s^\dagger) + \frac{\partial s^\dagger}{\partial \beta} \left[ K_I(\beta) \partial_s\phi(s^\dagger) - \xi C'_v(s^\dagger) \right].
		\]
		From the worker's FOC, we know that for an interior solution, $C'_v(s^\dagger) = K_W(\beta) \partial_s\phi(s^\dagger)$. 
		If $s^\dagger=0$, the derivative term vanishes or is positive; if $s^\dagger=1$, $\frac{\partial s^\dagger}{\partial \beta}=0$. 
		Focusing on the interior case, we substitute $C'_v$:
		\[
		\frac{\partial G}{\partial \beta} = K'_I(\beta) \phi(s^\dagger) + \frac{\partial s^\dagger}{\partial \beta} \phi_s(s^\dagger) \left[ K_I(\beta) - \xi K_W(\beta) \right].
		\]
		All terms are non-negative:
		\begin{itemize}
			\item $K'_I(\beta) \ge 0$ (decreasing manual cost increases net institutional benefit).
			\item $\frac{\partial s^\dagger}{\partial \beta} \ge 0$ (from above).
			\item $K_I(\beta) \ge \xi K_W(\beta)$ (Assumption \ref{assum:dominant}).
		\end{itemize}
		Thus, $\frac{\partial G}{\partial \beta} \ge 0$.
		
		\paragraph{Monotonicity of $Q(W)$ in $\alpha$.}
		Differentiating $G(\alpha,\beta)$ with respect to $\alpha$:
		\[
		\frac{\partial G}{\partial \alpha} = K_I(\beta) \left[ \partial_\alpha \phi(s^\dagger) + \partial_s\phi(s^\dagger) \frac{\partial s^\dagger}{\partial \alpha} \right] - \xi C'_v(s^\dagger) \frac{\partial s^\dagger}{\partial \alpha}.
		\]
		Rearranging terms:
		\[
		\frac{\partial G}{\partial \alpha} = K_I(\beta) \partial_\alpha \phi(s^\dagger) + \frac{\partial s^\dagger}{\partial \alpha} \left[ K_I(\beta) \partial_s\phi(s^\dagger) - \xi C'_v(s^\dagger) \right].
		\]
		Again, using the worker's first-order condition $C'_v \le K_W \partial_s\phi$, which implies that $K_I \partial_s\phi - \xi C'_v \ge K_I \phi_s - \xi K_W \phi_s$:
		\[
		\frac{\partial G}{\partial \alpha} \ge K_I(\beta) \partial_\alpha \phi(s^\dagger) + \frac{\partial s^\dagger}{\partial \alpha} \partial_s \phi(s^\dagger) \left[ K_I(\beta) - \xi K_W(\beta) \right].
		\]
		Since $\partial_\alpha \phi \ge 0$, $\frac{\partial s^\dagger}{\partial \alpha} \ge 0$ (Lemma \ref{lm:quality_gap}), and $K_I \ge \xi K_W$, the total derivative is non-negative.
	\end{proof}
	
	\subsection{Details of the illustrative example in Figure \ref{fig:example}}
	\label{sec:simulation}
	
	We first illustrate the choice of parameters and functions in Figure \ref{fig:example}.
	Consider a worker with a linear verification cost $C_v(s)=s$, an execution cost $C_w(\beta)=5(1-\beta)$ for efficiency $\beta\in[0,1]$ (where the factor $5$ reflects that execution is substantially more costly than verification), and an inverse-linear error-detection function
	$
	\phi(s;\alpha)=1-\frac{1}{1+2\alpha s}.
	$
	To match the scale of $\beta$, we restrict to $\alpha\in [0,1]$, where the constant $2$ serves as a scaling parameter for this normalization.
	Below, we select a representative parameter setting for analysis.
	
	\medskip
	\noindent\emph{Task profile.}
	We set $b_W=8$, $\ell_W=6$, $b_I=14$, $\ell_I=12$, $\xi=0.3$, and $\tau=6.4$, which satisfies Assumption~\ref{assum:dominant}.

	\medskip
	\noindent\emph{AI characteristics.}
	We set $p_a=0.65$ (moderate reliability) and $C_a=0$ (negligible execution cost).
	
	\medskip
	\noindent\emph{Worker characteristics.}
	We set $p_w=0.75$, so that the worker outperforms the AI in task success probability.
	
	\paragraph{Characterization of key quantities.}
	We next compute the key quantities that govern the regime characterization in our theoretical results.
	
	\noindent \emph{To Theorem \ref{thm:action}.}
	To compute the threshold $t$, we first obtain
	\[
	\Delta(\beta)=5(1-\beta)-C_a-(b_W+\ell_W)(p_w-p_a).
	\]
	Under the above parameter choices, this yields
	\begin{align}
		\label{eq:t_val}
		t=1 - \frac{C_a+(b_W+\ell_W)(p_w-p_a)}{5} = 0.72.
	\end{align}
	We also derive closed-form expressions for the two regime boundaries (on their respective domains):
	\[
	\psi_0(\beta)=\frac{20}{\bigl(\sqrt{77+70\beta}-\sqrt{200\beta-144}\bigr)^2},
	\qquad
	\psi_1(\beta)=\frac{20}{77+70\beta}.
	\]
	Finally, within the verified-delegation regime, the optimal action takes the form $(d^\star, s^\star) = (1, s^\dagger(\alpha, \beta))$ with
	\[
	s^\dagger(\alpha, \beta)
	=
	\sqrt{\frac{1.925 + 1.75 \beta}{\alpha}}
	-\frac{1}{\alpha}.
	\]
	\paragraph{To Theorem \ref{thm:quality}.}
	The key is to characterize the separatrix $\psi$.
	To this end, we first define a function $\psi':[0,1]\to[0,1]$ (monotonically decreasing in $\beta$) such that
	\[
	f_I\!\left(s^\dagger(\alpha,\beta)\right)=0
	\text{ for any ability pair }
	(\alpha,\beta)=(\psi'(\beta),\beta).
	\]
	Extending $\psi_0(\beta)=0$ for $\beta<t$, Theorem~\ref{thm:quality} implies that 
	$
	\psi(\beta)=\max\{\psi_0(\beta),\psi'(\beta)\}.
	$
	Finally, the closed-form condition of $f_I\left(s^\dagger(\alpha,\beta)\right)=0$ is:
	\[
	5.2 - 0.975 \beta = \frac{6.3 + 0.525 \beta}{10 \sqrt{\alpha(0.077+0.07\beta)} -1} + \frac{15 \sqrt{\alpha(0.077+0.07\beta)} - 3 }{10\alpha}
	\]
	
	\paragraph{To Theorem \ref{thm:region}.}
	To compute the threshold $t_\tau$, we first obtain
	\[
	Q_0(W)=g_I=(b_I+\ell_I)p_w-\ell_I-5\xi(1-\beta).
	\]
	Solving $Q_0(W)=\tau$ yields the threshold
	$
	t_\tau=\frac{4}{15}.
	$
	under our parameter choices.
	We also derive a closed-form condition for $Q(W) = \tau$, which defines the separatrix $\psi_\tau$:
	\[
	\frac{(252+ 21\beta)(2-\sqrt{7\alpha(1.1+ \beta)})}{40(\sqrt{7\alpha(1.1+ \beta)}-1)} - \frac{3\sqrt{1.75(1.1+\beta)}}{10\sqrt{\alpha}} + \frac{0.3}{\alpha} + 4.9 = \tau.
	\]
	
	\paragraph{Interpretation of the parameterization.}
	The parameters in Figure~\ref{fig:example} are chosen as an illustrative setting that makes the model's threshold and quality-regime phenomena visible.
	They encode a structurally asymmetric environment: the institution has larger
	stakes in success and failure, internalizes only a discounted share of the
	worker's private effort cost, and faces an AI system that is cheaper but less
	accurate than manual execution. Thus, delegation can be privately attractive
	even when it is institutionally harmful.
	
	The cutoff is set to \(\tau=6.4\). Since the pre-AI baseline quality is
	\(Q_0(W;\beta)=6+1.5\beta\), the baseline qualification cutoff is
	\(\beta=4/15\). 
	Hence, the baseline excludes only low-efficiency workers in
	\(\beta\in[0,1]\), allowing the figure to focus on how AI-induced delegation
	and verification reshape qualification outcomes.

	\section{Interventions to improve worker quality}
	\label{sec:intervention}
	
	\begin{figure*}[t]
		\centering        
		\includegraphics[width=0.27\linewidth]{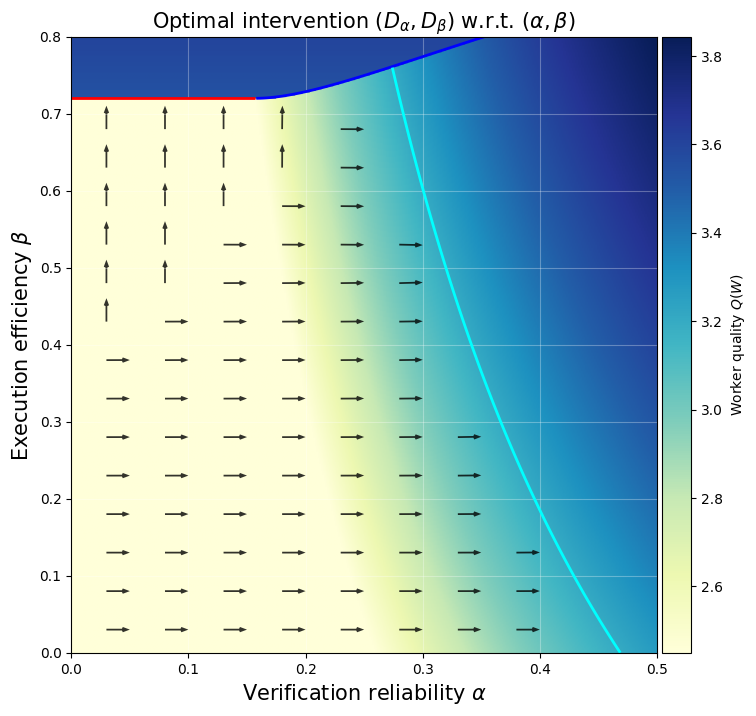}
		\caption{Plots illustrating the effects of the worker-side intervention, as functions of abilities $(\alpha,\beta)$, under the default parameter setting
			$
			(b_W, \ell_W, b_I, \ell_I, \xi, \tau, p_a, C_a, p_w) = (8, 6, 14, 12, 0.3, 6.4, 0.65, 0, 0.75),
			$
			and the functional choices $C_v(s)=s$, $C_w(\beta)=5(1-\beta)$, $\phi(s;\alpha)=1-\frac{1}{1+2\alpha s}$, and $h_1(x) = h_2(x) = x$.
			We restrict the domain to $[0,0.5]\times[0,0.8]$ for clarity.}
		\label{fig:worker_intervention}
	\end{figure*}
	\begin{figure*}[t]
		\centering
		\begin{subfigure}[b]{0.27\textwidth}
			\centering
			\includegraphics[width=\linewidth]{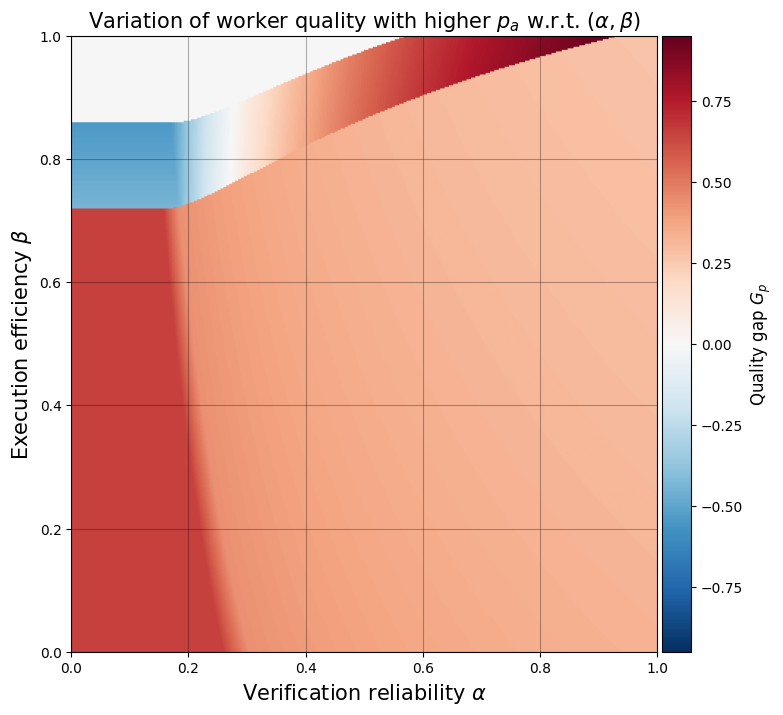}
			\caption{Heatmap of $G_p$ with $D_p = 0.05$}
			\label{fig:high_AI}
		\end{subfigure}
		\hfill
		\begin{subfigure}[b]{0.27\textwidth}
			\centering
			\includegraphics[width=\linewidth]{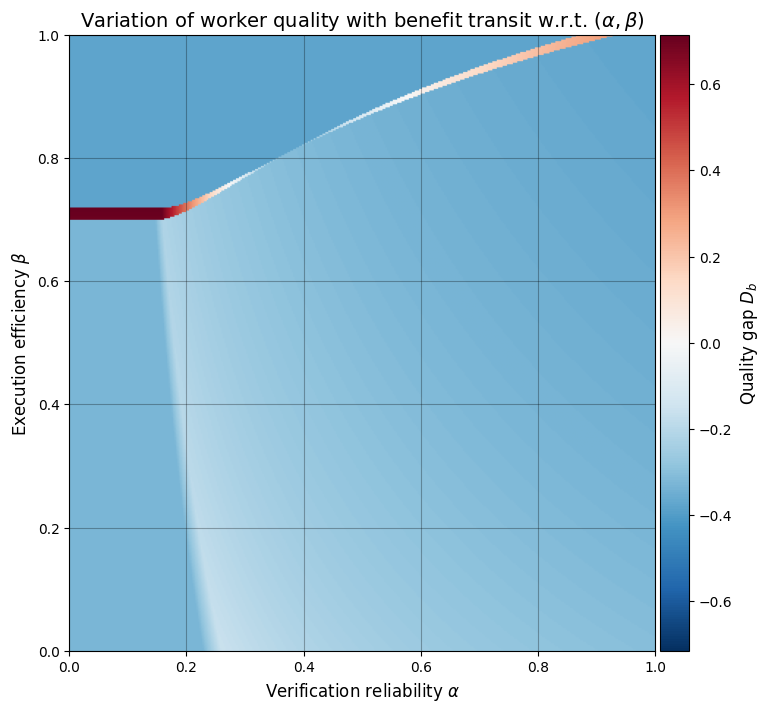}
			\caption{Heatmap of $G_b(D_b)$ with $D_b = 1$}
			\label{fig:high_gain}
		\end{subfigure}
		\hfill
		\begin{subfigure}[b]{0.27\textwidth}
			\centering
			\includegraphics[width=\linewidth]{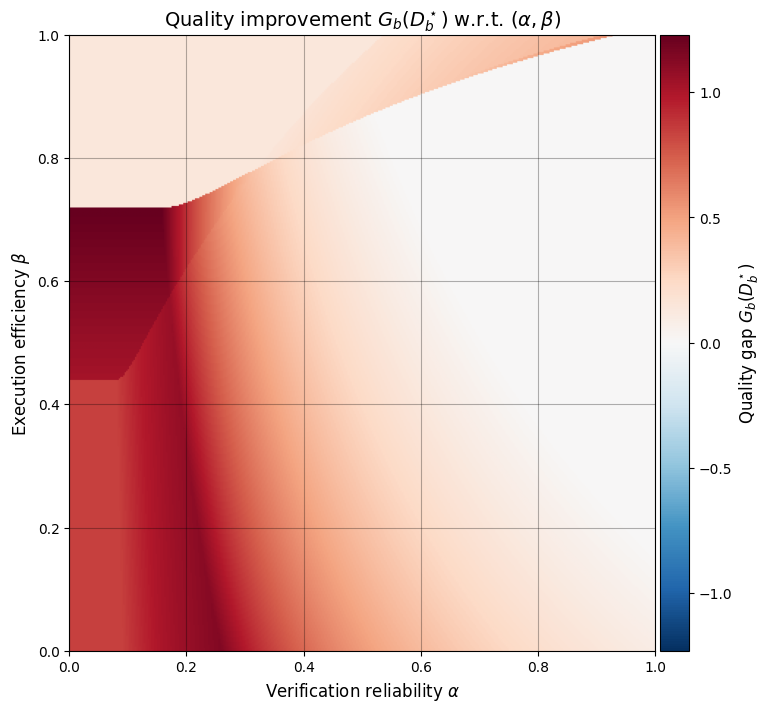}
			\caption{Heatmap of $G_b(D_b^\star)$ with $\kappa = 0.11$}
			\label{fig:equ_game}
		\end{subfigure}
		\caption{Plots illustrating the effects of the institutional-side interventions, as functions of abilities $(\alpha,\beta)$, under the default parameter setting
			$
			(b_W, \ell_W, b_I, \ell_I, \xi, \tau, p_a, C_a, p_w) = (8, 6, 14, 12, 0.3, 6.4, 0.65, 0, 0.75),
			$
			and the functional choices $C_v(s)=s$, $C_w(\beta)=5(1-\beta)$, $\phi(s;\alpha)=1-\frac{1}{1+2\alpha s}$.
			In Figures~\ref{fig:high_AI}, ~\ref{fig:high_gain}, and ~\ref{fig:equ_game}, we analyze how worker quality varies under a more capable AI, under a larger marginal gain, and under the institution's dynamic strategy, respectively.
		}
		\label{fig:intervention}
	\end{figure*}
	
	The previous analysis shows that AI-assisted work can generate quality loss when
	workers rationally delegate without sufficient verification. This section studies
	how such losses can be mitigated through interventions. We distinguish two broad
	classes of interventions. The first class is worker-side: it improves the
	worker's underlying abilities, such as verification reliability $\alpha$ or
	execution efficiency $\beta$. The second class is institutional-side: it changes
	the environment in which the worker chooses a workflow, such as AI capability,
	payoff incentives, or direct rewards for verification.
	
	These two classes correspond to different policy questions. Worker-side
	interventions ask how to improve a given worker's ability profile at minimum
	cost so that the resulting quality reaches a target level. Institutional-side
	interventions ask how changes in AI systems or incentives reshape the worker's
	optimal delegation and verification behavior, and whether these changes improve
	institutional quality. A central message is that intervention effects are
	heterogeneous: improving a primitive such as AI capability or worker incentives
	need not uniformly improve quality, because the intervention may also induce
	workers to switch workflow regimes.

	\subsection{Worker-side interventions}
	\label{sec:intervention_worker}
	
	It follows from Lemma~\ref{lm:quality} that increasing either verification reliability $\alpha$ or execution efficiency $\beta$ can improve worker quality $Q(W)$, and thus both can serve as worker-side interventions.
	To determine which intervention is more cost-effective, we formulate a constrained optimization problem.
	
	We consider two interventions: increasing $\alpha$ by $D_\alpha \ge 0$ and increasing $\beta$ by $D_\beta \ge 0$.
	Let $Q(D_\alpha,D_\beta)$ denote the resulting worker quality under abilities $(\alpha+D_\alpha,\beta+D_\beta)$.
	The goal is to ensure $Q(D_\alpha,D_\beta)\ge \tau$ while minimizing intervention cost.
	
	Let $h_1,h_2:\R_{\ge 0}\to \R_{\ge 0}$ be monotone functions that model the costs of increasing $\alpha$ and $\beta$, respectively.
	A simple specification is linear, $h_i(x)=c_i x$ for $i\in\{1,2\}$, which is justified as a first-order approximation when $x$ is small.
	More generally, one may take $h_i(x)=c_i x^\rho$ for $\rho>1$, capturing increasing marginal costs of continued upskilling.

	\paragraph{Optimization problem.}
	We formalize the intervention design as:
	
	\begin{align}
		\label{eq:intervention_worker}
		\min_{D_\alpha, D_\beta \geq 0} h_1(D_\alpha) + h_2(D_\beta) \quad \text{s.t.} \quad Q(D_\alpha, D_\beta) \geq \tau.
	\end{align}
	
	\paragraph{Analysis.}
	We use the same setting as in Figure \ref{fig:example} and set $h_1(x) = h_2(x)  = x$.
	Figure \ref{fig:worker_intervention} plots the optimal intervention ways for workers with various $(\alpha, \beta)$.
	Observe that improving $\alpha$ is usually more cost-efficient, validating the central role of verification reliability in the AI age.

	\subsection{Institutional-side interventions}
	\label{sec:intervention_institution}
	
	On the institutional-side, we propose three potential interventions: deploying more capable AI systems, reshaping worker incentives, and directly rewarding verification.
	Below we illustrate these three interventions.
	
	\paragraph{Deploying more capable AI systems.}
	This intervention can be implemented by training a task-specific agent and corresponds to increasing the AI success probability $p_a$.
	Under this intervention, both the worker’s optimal action and the resulting quality may change.
	Let $D_p\in[0,1-p_a]$ denote the increase in $p_a$, and let $Q(D_p)$ denote the worker quality under AI capability $p_a+D_p$.
	We define the \emph{quality variation value} as
	\[
	G_p(D_p) := Q(D_p) - Q(0).
	\]
	Figure~\ref{fig:high_AI} plots a heatmap of $G_p$ with $D_p = 0.05$.
	We observe that $G_p$ can be positive in some regions of $(\alpha,\beta)$ but negative in others, indicating that improving AI capability may either increase or decrease worker quality depending on worker abilities.
	Thus, deploying more capable AI systems can have a non-monotonic effect on worker quality.
	Specifically, workers with high execution efficiency $\beta$ but low verification reliability $\alpha$ may switch from manual work to pure delegation, which can reduce quality.

	\paragraph{Incentive reshaping and equilibrium design.} Another institutional lever is to reshape the worker's payoff so that the worker internalizes a larger share of the task-success benefit. We consider two related formulations.
	
	\emph{Fixed incentive reshaping.}
	We first consider a fixed incentive transfer. Let \(D_b \ge 0\) denote an exogenously specified transfer from the institution to the worker: the worker's success benefit becomes \(b_W+D_b\), while the institution's marginal success benefit becomes \(b_I-D_b\). Let \(Q(D_b)\) denote the worker quality induced by the worker's optimal action under the reshaped incentive \(b_W+D_b\), and define
	\[
	G_b(D_b):=Q(D_b)-Q(0).
	\]
	Figure~\ref{fig:high_gain} plots \(G_b(D_b)\) with \(D_b=1\). The figure shows that a fixed transfer has heterogeneous effects across the ability space. In some regions, increasing the worker's private success benefit improves institutional quality by making unchecked delegation less attractive and strengthening incentives to verify AI outputs. In particular, workers with relatively high execution efficiency but low verification reliability may move away from institutionally harmful pure delegation, thereby increasing quality. However, the same fixed transfer can also reduce institutional quality in other regions, since it lowers the institution's own marginal success benefit through \(b_I-D_b\) and may induce behavioral changes whose institutional value is limited. Thus, fixed incentive reshaping is not uniformly beneficial; its effect depends on worker abilities and on the induced workflow response.
	
	\emph{Equilibrium incentive design.}
	The preceding analysis treats \(D_b\) as an exogenously fixed policy. We next consider a distinct, richer design problem in which the transfer is chosen strategically by the institution. For a worker type \((\alpha,\beta)\), the institution first selects a transfer \(D_b\), anticipating that the worker will subsequently re-optimize delegation and verification behavior under the reshaped benefit \(b_W+D_b\). In this extension, we also introduce an additional alignment return
	\[
	C(D_b):=\kappa D_b^2,
	\]
	which captures the positive institutional value generated by higher worker motivation and stronger incentive alignment. The institution therefore solves
	\[
	D_b^\star(\alpha,\beta)
	\in
	\arg\max_{D_b\geq 0}
	\bigl\{Q(D_b)+C(D_b)\bigr\},
	\]
	
	The worker then chooses its utility-maximizing action under \(b_W+D_b^\star\). Then we define
	\[
	G_b(D_b^\star) = Q(D_b^\star) + C(D_b^\star) - Q(0).
	\]
	
	Under this augmented objective, the transfer becomes an endogenous institutional design variable rather than a fixed policy parameter.  
	Figure~\ref{fig:equ_game} plots $G_b(D_b^\star)$ with $\kappa = 0.11$.
	The figure illustrates the resulting equilibrium gain under this design objective. The gain is most visible for workers originally located in the manual-work or pure-delegation regimes. For workers in the manual-work regime, the transfer can raise the private return to successful AI-assisted work and make verified delegation worthwhile. For workers in the pure-delegation regime, it can increase the private value of careful verification and thereby mitigate rational over-delegation. Thus, the equilibrium-design extension illustrates how strategic institutions may use incentive transfers as an alignment mechanism, inducing workflow choices that better preserve institutional quality in AI-assisted work.

	\paragraph{Rewarding observable verification.}
	
	\begin{figure*}[t]
		\centering
		\begin{subfigure}[b]{0.33\textwidth}
			\centering
			\includegraphics[width=\linewidth]{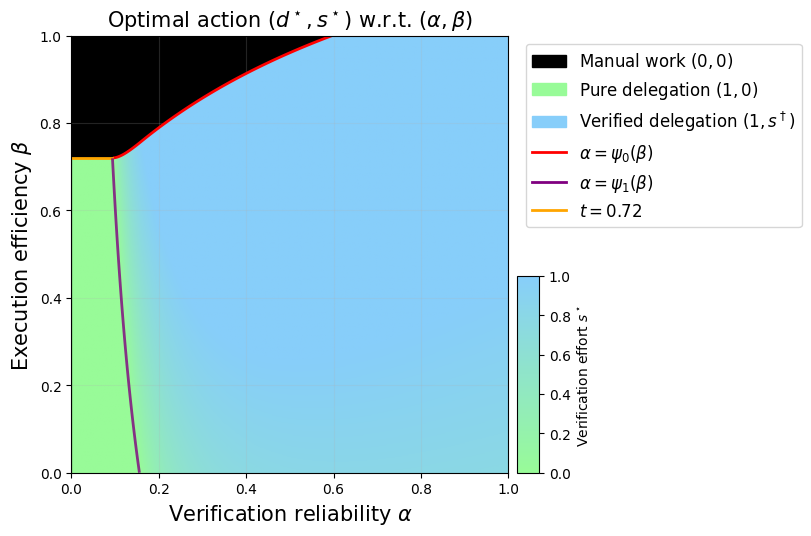}
			\caption{Worker action}
			\label{fig:reward_verification_worker}
		\end{subfigure}
		\hfill
		\begin{subfigure}[b]{0.32\textwidth}
			\centering
			\includegraphics[width=\linewidth]{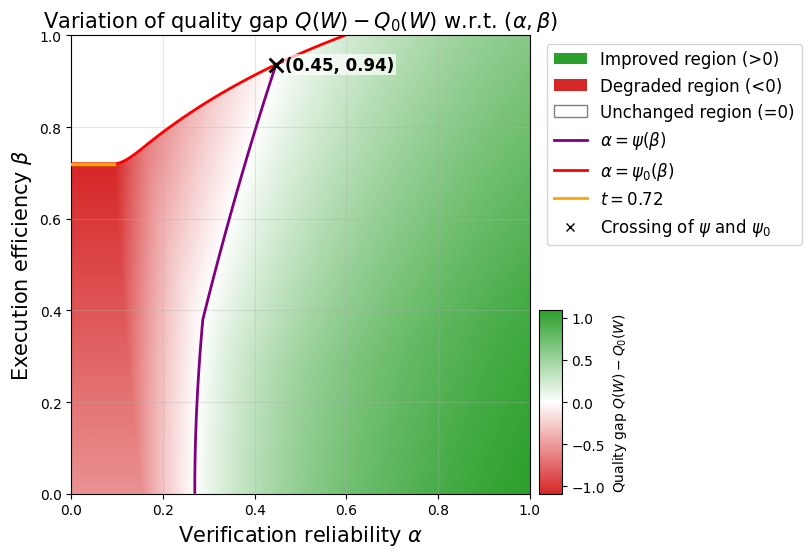}
			\caption{Quality gap}
			\label{fig:reward_verification_gap}
		\end{subfigure}
		\hfill
		\begin{subfigure}[b]{0.32\textwidth}
			\centering
			\includegraphics[width=\linewidth]{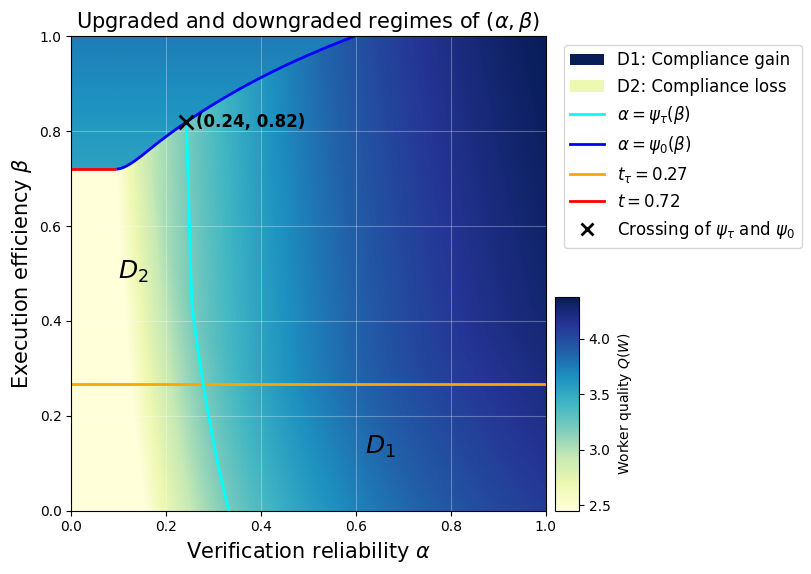}
			\caption{Quality grading}
			\label{fig:reward_verification_grading}
		\end{subfigure}
		\caption{Plots illustrating the worker action and quality regimes under a linear verification reward $\gamma s$ (with $\gamma = 0.2$), as functions of abilities of $(\alpha, \beta)$, under the default parameter setting
			$
			(b_W, \ell_W, b_I, \ell_I, \xi, \tau, p_a, C_a, p_w) = (8, 6, 14, 12, 0.3, 6.4, 0.65, 0, 0.75),
			$
			and the functional choices $C_v(s)=s$, $C_w(\beta)=5(1-\beta)$, and $\phi(s;\alpha)=1-\frac{1}{1+2\alpha s}$.
		}
		\label{fig:reward_verification}
	\end{figure*}
	
	If the institution can observe verification effort or workflow traces, it may
	reward the worker directly for verification rather than only for final outcomes.
	Let \(R(s)\ge 0\) denote a process-based verification reward, paid only when the
	worker delegates and verifies the AI output. The worker's utility becomes
	\[
	U_W^R(d,s)=U_W(d,s)+dR(s),
	\]
	while the institutional utility becomes
	\[
	U_I^R(d,s)=U_I(d,s)-dR(s).
	\]
	For example, taking \(R(s)=\gamma s\) directly increases the marginal private
	return to verification. This intervention therefore makes pure delegation less
	attractive relative to verified delegation, especially for workers whose
	verification reliability is moderate but whose baseline incentives to verify are
	weak.
	
	Figure~\ref{fig:reward_verification} illustrates this effect with a linear verification reward
	\(R(s)=\gamma s\) and \(\gamma=0.2\). Compared to Figure~\ref{fig:example}, the verified-delegation regime
	in Figure~\ref{fig:reward_verification_worker} expands, indicating that the verification reward
	raises the worker's private incentive to scrutinize AI outputs. Correspondingly,
	\Cref{fig:reward_verification_gap} shows that the quality-degradation region
	shrinks, suggesting that observable verification rewards can mitigate harmful
	over-delegation. Figure~\ref{fig:reward_verification_grading} further shows that this intervention can reduce compliance loss for some workers, although
	its effect remains heterogeneous across worker abilities.

	\section{Model extensions}
	\label{sec:extension}
	
	The baseline model isolates the delegation--verification mechanism under a
	parsimonious set of assumptions. In this section, we examine whether the main
	qualitative conclusions are robust when these assumptions are relaxed. Across
	the extensions, we focus on three conclusions from the baseline analysis:
	the phase-transition structure of worker actions, the coexistence of compliance
	gain and compliance loss, and the possibility of rational over-delegation.
	
	We organize the extensions according to which component of the baseline model is
	perturbed. First, we relax the worker's action space.
	Section~\ref{sec:partial_delegation} allows the worker to choose an interior
	level of delegation through a nonlinear delegation cost, while
	Section~\ref{sec:resource_constraints} introduces time or cognitive resource
	constraints that restrict feasible workflows. Second, we relax the information
	structure: Section~\ref{sec:belief_miscalibration} allows the worker to make
	delegation decisions under miscalibrated beliefs about AI capability. Third, we
	relax the task environment by allowing task difficulty to vary across instances
	in Section~\ref{sec:varying}.
	
	We then consider richer delegated workflows. Section~\ref{sec:partial_re_execution}
	allows detected AI errors to be corrected through partial rather than full
	re-execution, and Section~\ref{sec:multi_round_interaction} allows the worker to
	interact with the AI over multiple verification rounds. Finally,
	Sections~\ref{sec:continuous_quality} and~\ref{sec:relaxing_regularities}
	relax two modeling abstractions: binary outcomes and regularity assumptions on
	the verification technology.
	
	The results below show that the main conclusions of the baseline model do not
	depend on any single simplifying assumption. The extensions may smooth regime
	boundaries, change the size of quality-gain or quality-loss regions, or alter
	the worker's optimal verification intensity, but they preserve the core
	mechanism: AI assistance improves quality only when delegation is paired with
	sufficiently reliable verification; otherwise, rational worker behavior can
	still generate institutional quality loss.

	\subsection{Partial delegation via nonlinear delegation cost}
	\label{sec:partial_delegation}
	
	\begin{figure*}[t]
		\centering
		\begin{subfigure}[b]{0.33\textwidth}
			\centering
			\includegraphics[width=\linewidth]{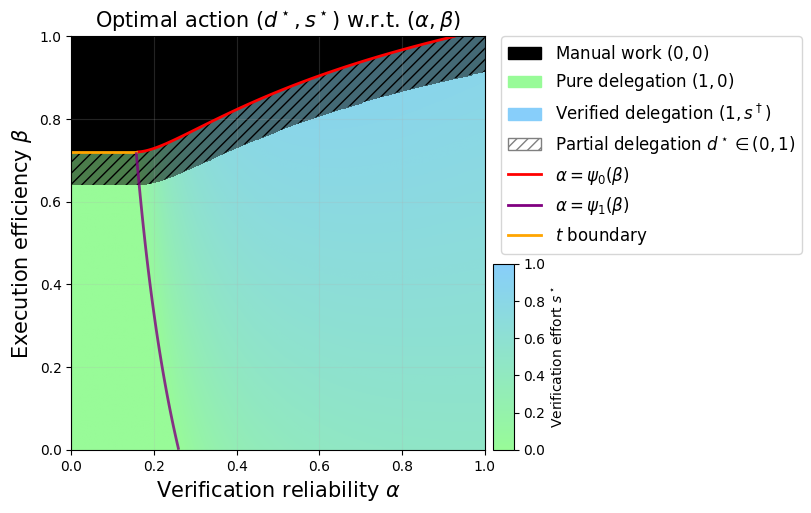}
			\caption{Worker action}
			\label{fig:partial_delegation_worker}
		\end{subfigure}
		\hfill
		\begin{subfigure}[b]{0.30\textwidth}
			\centering
			\includegraphics[width=\linewidth]{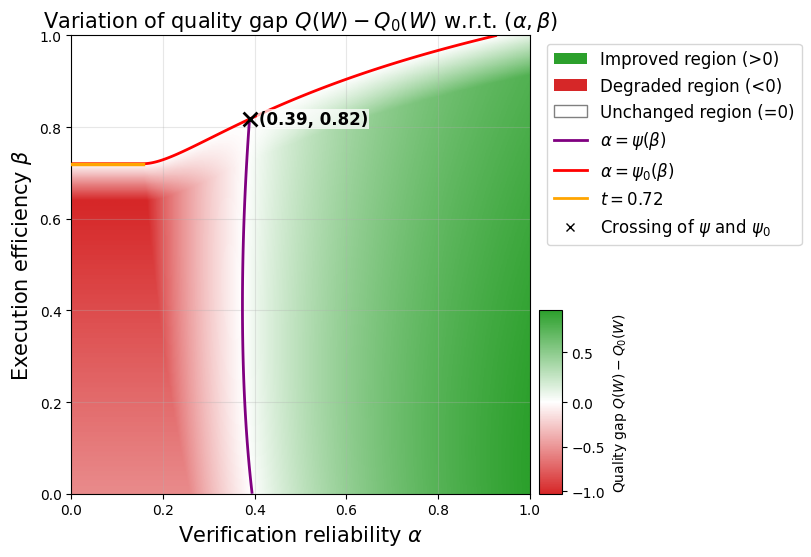}
			\caption{Quality gap}
			\label{fig:partial_delegation_gap}
		\end{subfigure}
		\hfill
		\begin{subfigure}[b]{0.29\textwidth}
			\centering
			\includegraphics[width=\linewidth]{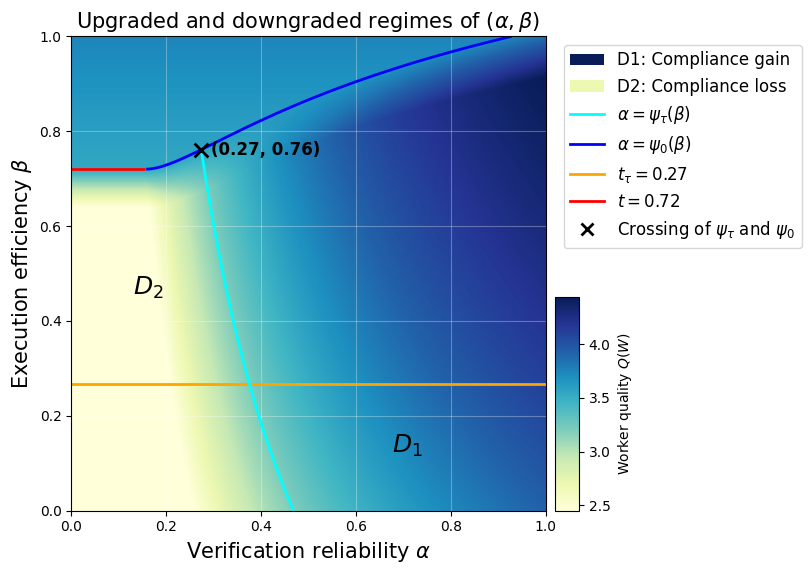}
			\caption{Quality grading}
			\label{fig:partial_delegation_grading}
		\end{subfigure}
		\caption{Plots illustrating the worker action and quality regimes under an additional quadratic delegation cost $0.1 d^2$, as functions of abilities of $(\alpha, \beta)$, under the default parameter setting
			$
			(b_W, \ell_W, b_I, \ell_I, \xi, \tau, p_a, C_a, p_w) = (8, 6, 14, 12, 0.3, 6.4, 0.65, 0, 0.75),
			$
			and the functional choices $C_v(s)=s$, $C_w(\beta)=5(1-\beta)$, and $\phi(s;\alpha)=1-\frac{1}{1+2\alpha s}$.
		}
		\label{fig:partial_delegation}
	\end{figure*}
	
	The baseline specification makes worker utility affine in the delegation rate,
	$U_W(d,s)=f_W(s)d+g_W$, and therefore induces corner solutions in $d$.
	To examine whether the regime structure depends on this binary reliance
	property, we perturb the worker decision layer by adding a convex private cost
	of delegation intensity. For $\delta>0$, let
	\[
	U_W^\delta(d,s)
	:=
	f_W(s)d+g_W-\delta d^2,
	\qquad (d,s)\in[0,1]^2 .
	\]
	The additional term captures nonlinear cognitive or workflow-adjustment
	frictions from integrating AI into only part of task execution. The task-success
	function and institutional utility are unchanged. Let
	\[
	(d_\delta^\star,s_\delta^\star)
	\in
	\arg\max_{(d,s)\in[0,1]^2} U_W^\delta(d,s),
	\qquad
	Q_\delta(W):=U_I(d_\delta^\star,s_\delta^\star).
	\]
	For any fixed verification effort $s$, the perturbed objective is strictly
	concave in $d$, so the optimal delegation rate is
	\[
	d_\delta^\star(s)
	=
	\Pi_{[0,1]}
	\left(
	\frac{f_W(s)}{2\delta}
	\right),
	\]
	where $\Pi_{[0,1]}$ denotes projection onto $[0,1]$. Moreover, the value after
	optimizing over $d$ is monotone in $f_W(s)$, so the relevant verification choice
	is still governed by
	\[
	s^\dagger(\alpha,\beta)\in\arg\max_{s\in[0,1]} f_W(s)
	\]
	whenever delegation is used. Hence the worker chooses no delegation when
	$f_W(s^\dagger)\le 0$, partial delegation when
	$0<f_W(s^\dagger)<2\delta$, and full delegation when
	$f_W(s^\dagger)\ge 2\delta$.
	The perturbation
	does not introduce a new verification object, but inserts an interior feasible
	reliance layer between manual work and full AI reliance.
	
	\paragraph{Analysis.}
	Figure~\ref{fig:partial_delegation} visualizes the perturbed model with
	$\delta=0.1$. In Figure~\ref{fig:partial_delegation_worker}, the black-striped region
	marks the new partial-delegation layer, where the worker has positive
	delegation surplus but not enough to choose $d=1$ once the quadratic reliance
	friction is included. Outside this layer, the action partition remains close to
	the baseline: low-verification workers may still purely delegate, sufficiently
	reliable workers enter verified delegation, and workers with nonpositive
	delegation surplus remain in manual work. Figures~\ref{fig:partial_delegation_gap}
	and~\ref{fig:partial_delegation_grading} show that the quality-improvement/degradation
	and compliance-gain/loss regimes also persist. Since
	\[
	Q_\delta(W)-Q_0(W)
	=
	f_I(s_\delta^\star)d_\delta^\star,
	\]
	partial delegation attenuates the magnitude of institutional gains or losses
	when $0<d_\delta^\star<1$, but does not remove the sign mismatch between
	worker and institutional objectives. Overall, this perturbation smooths the
	baseline corner transition in $d$ while preserving the core qualitative
	mechanism: AI can still improve workers with sufficiently reliable verification
	and degrade quality for workers who rationally delegate despite weak
	institutional value.
	
	Together, these patterns show that the baseline regime structure is not an
	artifact of binary delegation. Allowing interior reliance smooths the transition
	between manual work and full delegation, but it does not eliminate the underlying
	misalignment between the worker's private delegation surplus and the institution's
	quality objective.

	\subsection{Time and cognitive resource constraints}
	\label{sec:resource_constraints}
	
	\begin{figure*}[t]
		\centering
		\begin{subfigure}[b]{0.33\textwidth}
			\centering
			\includegraphics[width=\linewidth]{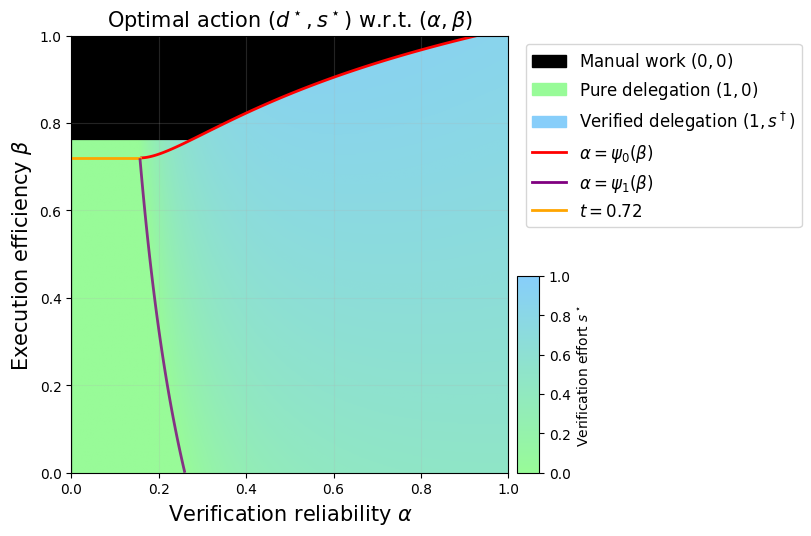}
			\caption{Worker action}
			\label{fig:cost_restriction_worker}
		\end{subfigure}
		\hfill
		\begin{subfigure}[b]{0.32\textwidth}
			\centering
			\includegraphics[width=\linewidth]{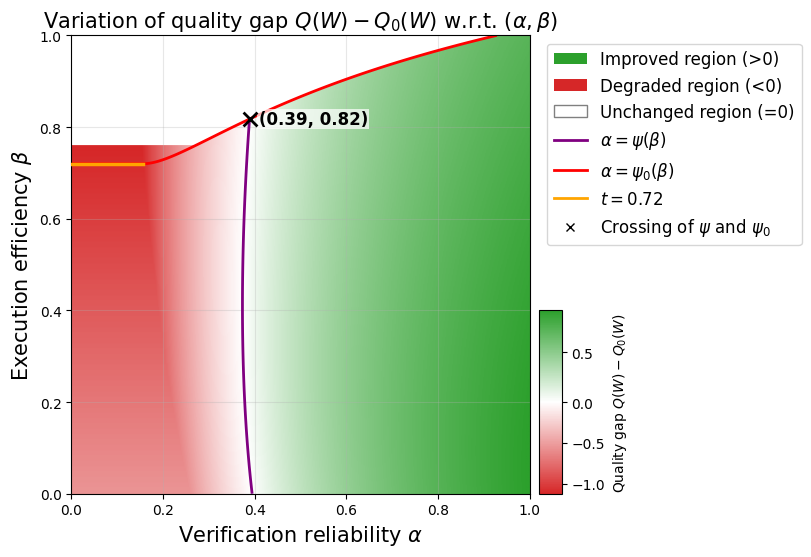}
			\caption{Quality gap}
			\label{fig:cost_restriction_gap}
		\end{subfigure}
		\hfill
		\begin{subfigure}[b]{0.31\textwidth}
			\centering
			\includegraphics[width=\linewidth]{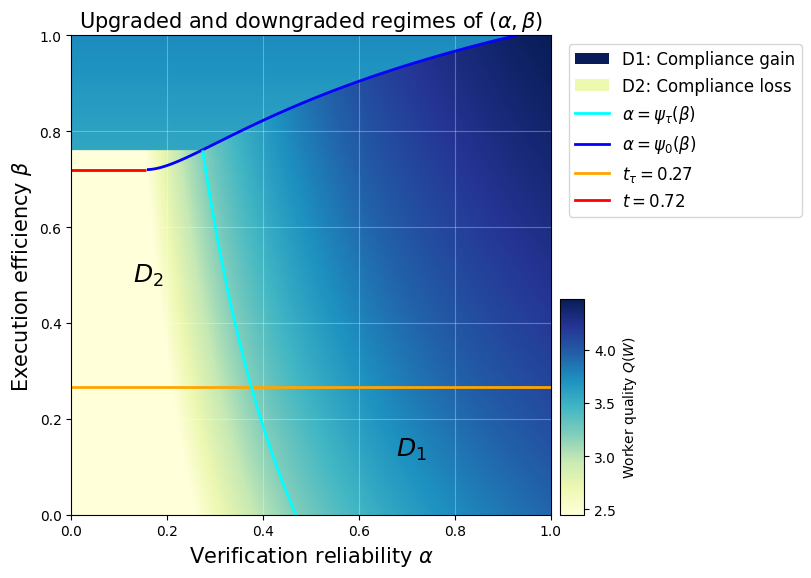}
			\caption{Quality grading}
			\label{fig:cost_restriction_grading}
		\end{subfigure}
		\caption{Plots illustrating the worker action and quality regimes under a cost restriction of $\mathrm{cost}(d,s) = (1-d)C_w + dC_v(s) \leq \tau_c$ (with $\tau_c = 0.6$), as functions of abilities of $(\alpha, \beta)$, under the default parameter setting
			$
			(b_W, \ell_W, b_I, \ell_I, \xi, \tau, p_a, C_a, p_w) = (8, 6, 14, 12, 0.3, 6.4, 0.65, 0, 0.75),
			$
			and the functional choices $C_v(s)=s$, $C_w(\beta)=5(1-\beta)$, and $\phi(s;\alpha)=1-\frac{1}{1+2\alpha s}$.
		}
		\label{fig:cost_restriction}
	\end{figure*}
	
	We next relax the assumption that all workflows are feasible. In practice,
	workers may face time, attention, or cognitive-resource limits that make manual
	execution or careful verification difficult. We model this by imposing a resource
	constraint on the worker's feasible reliance strategies.
	
	Let $\tau_c \geq 0$ denote the worker's available resource capacity. Instead of
	changing the success function or the payoff primitives, we restrict the feasible
	set by
	\[
	\mathcal{A}_{\tau_c}
	:=
	\left\{(d,s)\in[0,1]^2:
	(1-d)C_w+dC_v(s)\leq \tau_c
	\right\}.
	\]
	Here the resource budget captures the ex ante time or attention required to
	choose and implement a workflow, rather than the realized ex post correction
	cost after an AI error is detected. The worker's decision is therefore perturbed
	to
	\[
	(d_{\tau_c}^{\star},s_{\tau_c}^{\star})
	\in
	\arg\max_{(d,s)\in \mathcal{A}_{\tau_c}} U_W(d,s).
	\]
	This perturbation preserves the same payoff structure as the baseline model but
	changes which workflows are attainable. In particular, manual work is feasible
	only when $C_w(\beta)\leq \tau_c$. At the same time, because verification also
	consumes resources, the constraint can make high verification effort infeasible.
	Hence, when the resource budget is tight, workers with low verification
	reliability may be pushed toward pure delegation rather than verified delegation.
	
	\paragraph{Analysis.}
	Figure~\ref{fig:cost_restriction} visualizes the worker action and quality
	regimes under this perturbation with $\tau_c=0.6$. Compared with the baseline, a
	subset of workers originally located in the manual-work regime now switches to
	pure delegation (Figure~\ref{fig:cost_restriction_worker}). Although these
	workers would otherwise avoid AI, the resource constraint makes manual execution
	infeasible while low $\alpha$ makes verification unattractive. Consequently, the
	quality-degradation region and the compliance-loss region expand in the
	corresponding area (Figures~\ref{fig:cost_restriction_gap}
	and~\ref{fig:cost_restriction_grading}). This illustrates a feasibility-driven
	form of rational over-delegation: workers rely on AI not because delegation is
	institutionally beneficial, but because their feasible non-AI workflows are
	constrained.
	
	Overall, the persistence of phase transitions, quality degradation, and
	compliance loss under this perturbation supports the robustness of our main
	conclusions.

	\subsection{Belief miscalibration and reliance behavior}
	\label{sec:belief_miscalibration}
	
	\begin{figure*}[t]
		\centering
		
		\begin{subfigure}[b]{0.33\textwidth}
			\centering
			\includegraphics[width=\linewidth]{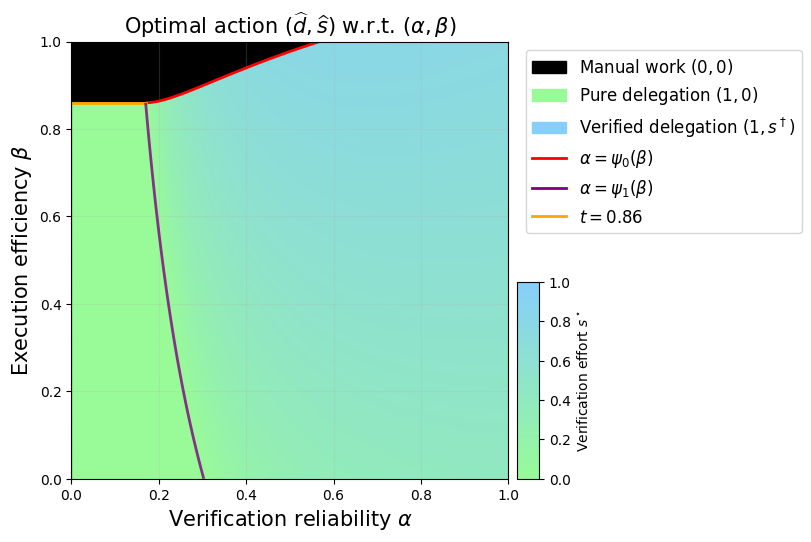}
			\caption{Worker action with $\widehat{p}_a = 0.7$}
			\label{fig:over_estimation_action}
		\end{subfigure}
		\hfill
		\begin{subfigure}[b]{0.3\textwidth}
			\centering
			\includegraphics[width=\linewidth]{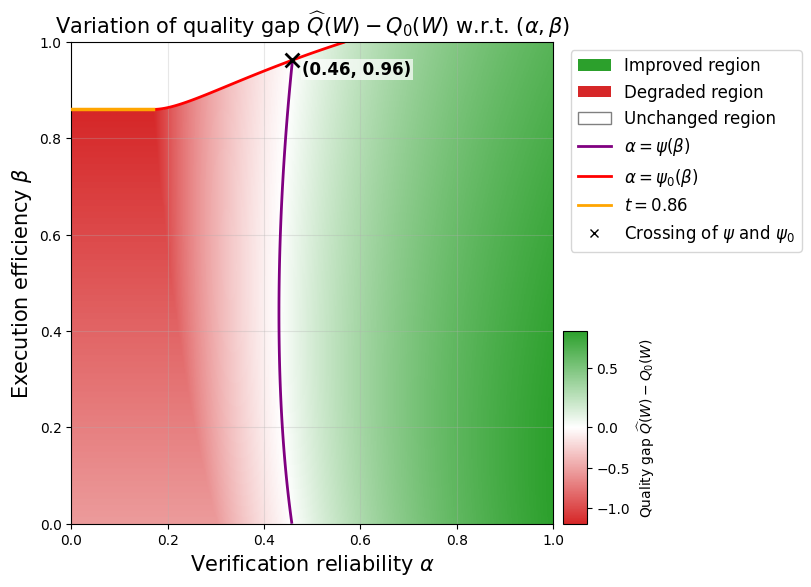}
			\caption{Quality gap with $\widehat{p}_a = 0.7$}
			\label{fig:over_estimation_gap}
		\end{subfigure}
		\hfill
		\begin{subfigure}[b]{0.3\textwidth}
			\centering
			\includegraphics[width=\linewidth]{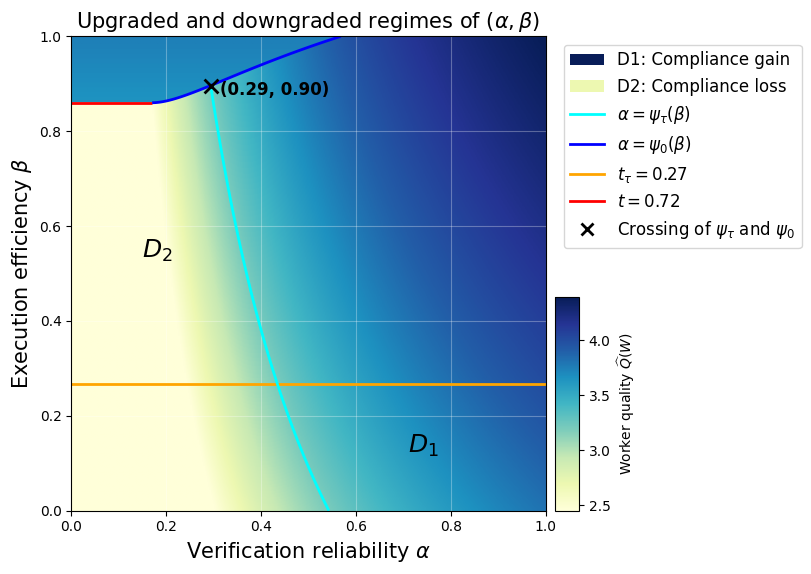}
			\caption{Quality grading with $\widehat{p}_a = 0.7$}
			\label{fig:over_estimation_grading}
		\end{subfigure}
		
		\par\medskip
		
		\begin{subfigure}[b]{0.33\textwidth}
			\centering
			\includegraphics[width=\linewidth]{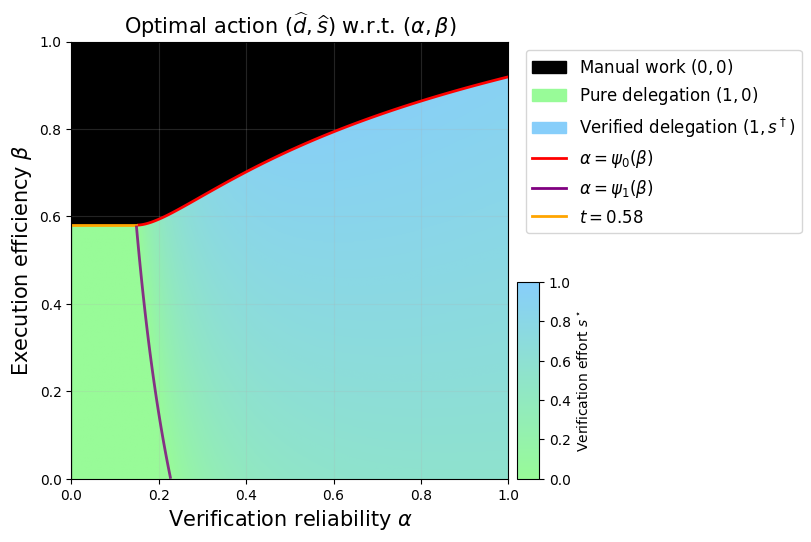}
			\caption{Worker action with $\widehat{p}_a = 0.6$}
			\label{fig:under_estimation_action}
		\end{subfigure}
		\hfill
		\begin{subfigure}[b]{0.3\textwidth}
			\centering
			\includegraphics[width=\linewidth]{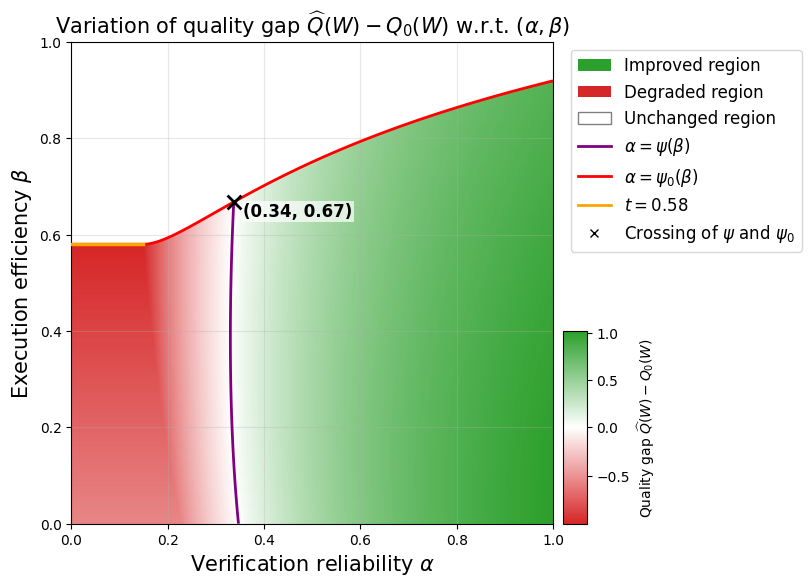}
			\caption{Quality gap with $\widehat{p}_a = 0.6$}
			\label{fig:under_estimation_gap}
		\end{subfigure}
		\hfill
		\begin{subfigure}[b]{0.3\textwidth}
			\centering
			\includegraphics[width=\linewidth]{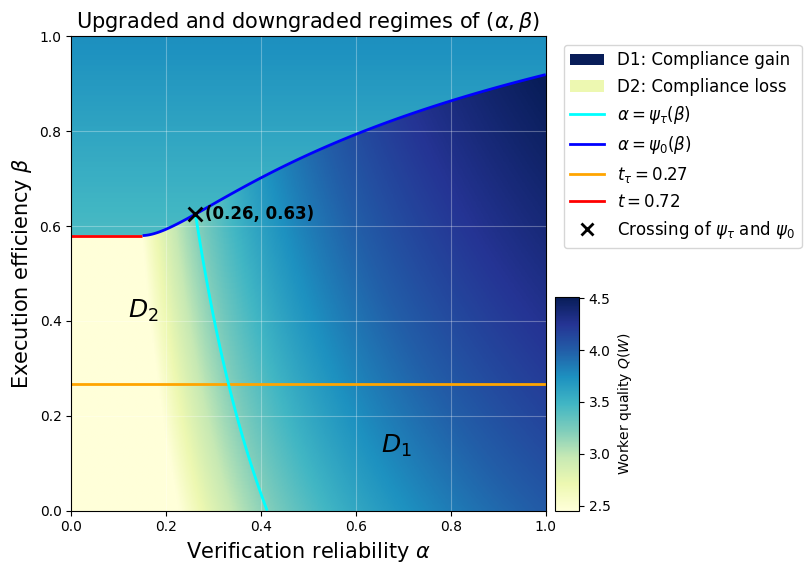}
			\caption{Quality grading with $\widehat{p}_a = 0.6$}
			\label{fig:under_estimation_grading}
		\end{subfigure}
		
		\caption{Plots illustrating the worker action and quality regimes under miscalibrated beliefs about AI capability as functions of abilities $(\alpha,\beta)$. The first row reports over-estimation of AI capability, while the second row reports under-estimation of AI capability. The default parameter setting is
			$
			(b_W, \ell_W, b_I, \ell_I, \xi, \tau, p_a, C_a, p_w) = (8, 6, 14, 12, 0.3, 6.4, 0.65, 0, 0.75),
			$
			with the functional choices $C_v(s)=s$, $C_w(\beta)=5(1-\beta)$, and $\phi(s;\alpha)=1-\frac{1}{1+2\alpha s}$.
		}
		\label{fig:noisy}
	\end{figure*}
	
	In the baseline model, the worker is assumed to know the true AI success
	probability $p_a$. In practice, however, AI performance is often jagged across
	instances \cite{dell2023navigating}. As a result, the worker may optimize
	according to a subjective belief $\widehat{p}_a$ that differs from the true AI
	capability. This belief perturbation changes the worker's chosen action, while
	the realized institutional quality is still evaluated under the true $p_a$.
	
	We distinguish two forms of belief miscalibration. The first is overestimation,
	where $\widehat{p}_a>p_a$, which may induce over-reliance on AI. The second is
	underestimation, where $\widehat{p}_a<p_a$, which may induce under-reliance.
	
	Let $\widehat{p}_a\in[0,1]$ denote the success probability that the worker believes the AI to have.
	This miscalibrated belief modifies the worker utility to
	\[
	\widehat{U}_W(d,s)
	=
	\left[
	(1-{\widehat{p}_a})\bigl((b_W+\ell_W)p_w-C_w\bigr)\phi(s)
	- C_v(s)
	- (b_W+\ell_W)(p_w-\widehat{p}_a)
	+ C_w - C_a
	\right] d
	+ g_W.
	\]
	Let
	\[
	(\widehat{d},\widehat{s})\in \arg\max_{(d,s)\in[0,1]^2}\widehat{U}_W(d,s)
	\]
	denote the worker's optimal action under the miscalibrated utility.
	The resulting worker quality is
	\[
	\widehat{Q}(W):=U_I(\widehat{d},\widehat{s}).
	\]
	In particular, $\widehat{Q}(W)=Q(W)$ when $\widehat{p}_a=p_a$.
	
	\paragraph{Overestimation and over-reliance: $\widehat{p}_a>p_a$.}
	Figures~\ref{fig:over_estimation_action}--\ref{fig:over_estimation_grading}
	visualize how worker action and worker quality vary when the worker overestimates
	AI capability, with $\widehat{p}_a=0.7>p_a=0.65$. The core qualitative phenomena
	from the baseline model persist, including phase transitions, compliance
	gain/loss regimes, and rational over-delegation. Moreover, over-delegation
	becomes more pronounced, expanding both the quality-degradation region and the
	compliance-loss region. This highlights the risk of overestimating AI capability
	and the importance of calibrating workers' beliefs about AI performance.
	
	\paragraph{Underestimation and under-reliance: $\widehat{p}_a<p_a$.}
	Conversely, Figures~\ref{fig:under_estimation_action}--\ref{fig:under_estimation_grading}
	consider the opposite form of belief miscalibration, where the worker
	underestimates AI capability, with $\widehat{p}_a=0.6<p_a=0.65$. This captures an
	under-reliance regime: pessimistic beliefs reduce the perceived return from
	delegation, causing workers to rely less on AI and expanding the manual-work
	region. Compared with the calibrated benchmark, both the improved and degraded
	quality regions shrink. Thus, under-reliance dampens the overall effect of AI
	assistance: it can reduce harmful over-delegation, but it can also prevent
	workers from realizing potential gains from AI access.
	
	Together, these two belief perturbations show that whether workers over-rely on
	or under-rely on AI, the core phase-transition structure and heterogeneous
	quality effects remain intact, demonstrating the robustness of our model and
	conclusions.

	\subsection{Heterogeneous task difficulty}
	\label{sec:varying}
	
	\begin{figure*}[t]
		\centering
		\begin{subfigure}[b]{0.33\textwidth}
			\centering
			\includegraphics[width=\linewidth]{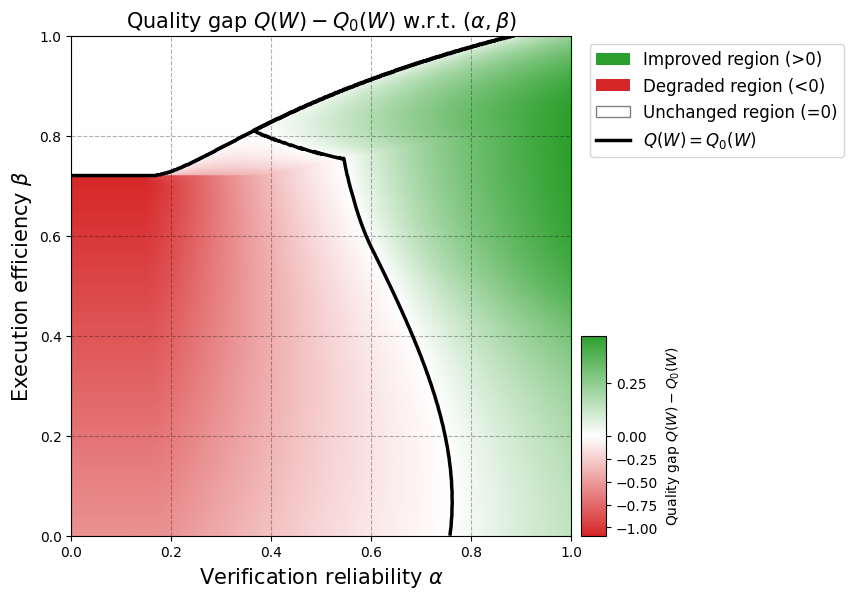}
			\caption{Quality gap with heterogeneous task difficulty}
			\label{fig:gap_heterogeneous}
		\end{subfigure}
		\qquad
		\begin{subfigure}[b]{0.31\textwidth}
			\centering
			\includegraphics[width=\linewidth]{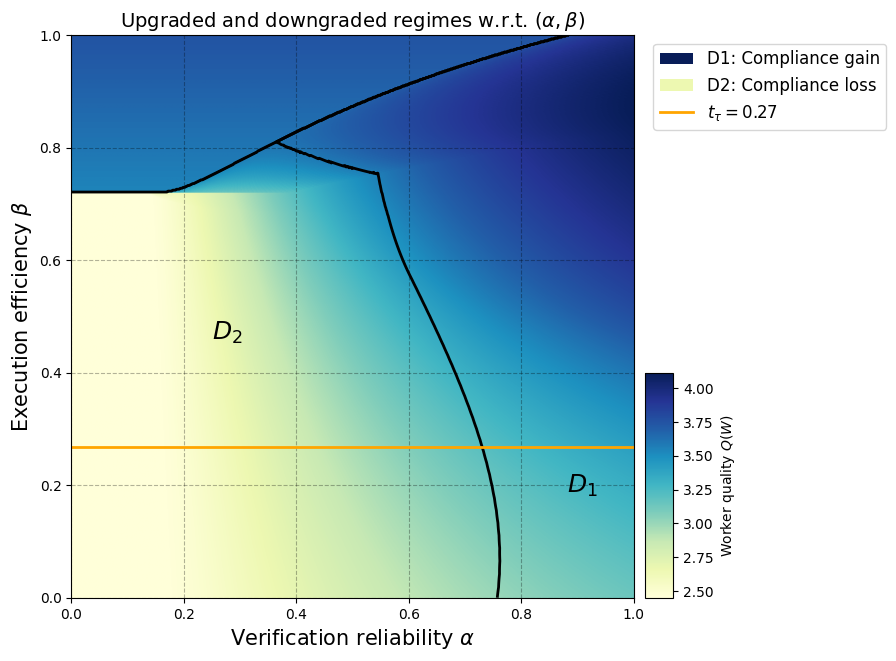}
			\caption{Quality grading with heterogeneous task difficulty}
			\label{fig:grading_heterogeneous}
		\end{subfigure}
		
		\caption{Plots illustrating the worker quality regimes with heterogeneous task difficulty as functions of abilities $(\alpha,\beta)$, under the default parameter setting
			$
			(b_W, \ell_W, b_I, \ell_I, \xi, \tau, C_a) = (8, 6, 14, 12, 0.3, 6.4, 0),
			$
			and the functional choices $p_w(h) = 1 - 0.5h$, $p_a(h) = 1-0.7h$, $C_v(s;h)=(0.5+h)s$, $C_w(\beta; h)=10(1-\beta)h$, and $\phi(s;\alpha)=1-\frac{1}{1+2\alpha s}$.
		}
		\label{fig:task_difficulty}
	\end{figure*}
	
	Our model implicitly assumes that tasks have uniform difficulty, leading to
	constant success probabilities and costs. In practice, workers may face tasks of
	varying difficulty; for example, clinical cases can differ substantially in
	diagnostic hardness. To study the effect of heterogeneous task difficulty, we
	extend the model as follows.
	
	Let $h\in[0,1]$ denote task difficulty, where $h=0$ is easiest and $h=1$ is
	hardest. We model success probabilities as
	\[
	p_w(h)=1-0.5h,
	\qquad
	p_a(h)=1-0.7h
	\]
	for the worker and the AI, respectively. Both $p_w(h)$ and $p_a(h)$ are
	monotonically decreasing in $h$. This specification keeps the AI-alone success
	probability below the worker-alone success probability, as in the baseline
	calibration. Thus, the value of AI in this example does not come from superior
	standalone accuracy, but from changing the cost and workflow tradeoff faced by
	the worker.
	
	We model the execution cost and verification cost as
	\[
	C_w(\beta;h)=10(1-\beta)h,
	\qquad
	C_v(s;h)=(0.5+h)s,
	\]
	respectively, both of which are monotonically increasing in $h$ for any fixed
	$s$. All other parameters and functional forms are set as in
	Figure~\ref{fig:example}. In particular, when $h\equiv 0.5$, this specification
	reduces to the setting of Figure~\ref{fig:example}.
	
	For each difficulty level $h$, we compute the worker's optimal action
	$(d^\star(h),s^\star(h))$ and the resulting quality $Q(W;h)$. Let the
	task-difficulty distribution be $\mu=\mathrm{Unif}[0,1]$. The expected worker
	quality under heterogeneous task difficulty is then
	\[
	Q(W)=\int_{0}^{1} Q(W;h)\,dh.
	\]
	Analogous to Section~\ref{sec:result}, we study quality variation and grading
	outcomes by comparing $Q(W)$ with the no-AI baseline $Q_0(W)$.
	
	\paragraph{Analysis.}
	Figure~\ref{fig:task_difficulty} visualizes how worker quality varies under AI
	assistance when task difficulty is heterogeneous. Averaging over task difficulty
	smooths the regime boundaries relative to the fixed-difficulty benchmark,
	because workers may choose different workflows at different difficulty levels.
	Nevertheless, the aggregate quality gap still contains both improvement and
	degradation regions, and the grading plot still exhibits compliance gain and
	compliance loss. Thus, task heterogeneity does not wash out the baseline
	mechanism.

	\subsection{Partial re-execution}
	\label{sec:partial_re_execution}
	
	\begin{figure*}[t]
		\centering
		\begin{subfigure}[b]{0.33\textwidth}
			\centering
			\includegraphics[width=\linewidth]{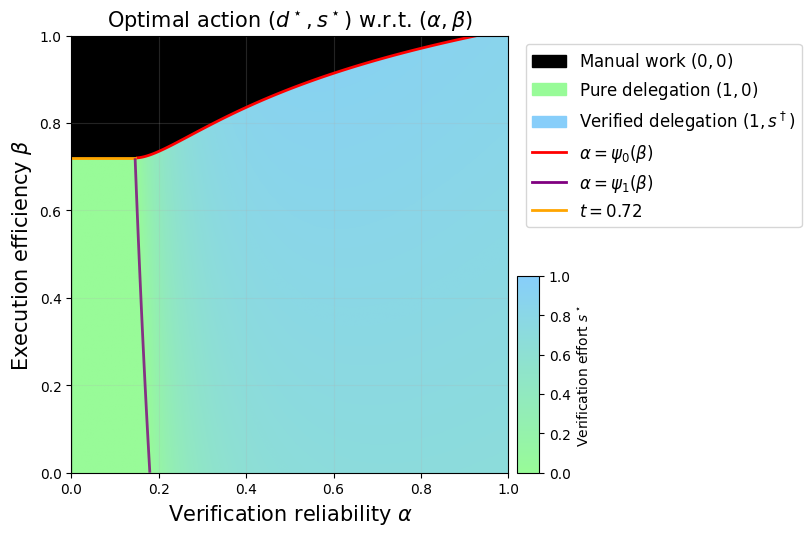}
			\caption{Worker action with $\kappa = 0.8$}
			\label{fig:action_partial}
		\end{subfigure}
		\hfill
		\begin{subfigure}[b]{0.3\textwidth}
			\centering
			\includegraphics[width=\linewidth]{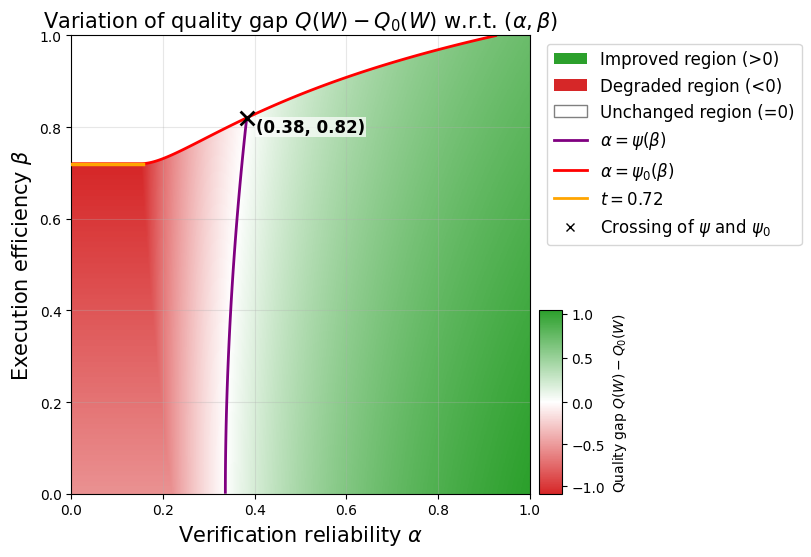}
			\caption{Quality gap with $\kappa = 0.8$}
			\label{fig:gap_partial}
		\end{subfigure}
		\hfill
		\begin{subfigure}[b]{0.3\textwidth}
			\centering
			\includegraphics[width=\linewidth]{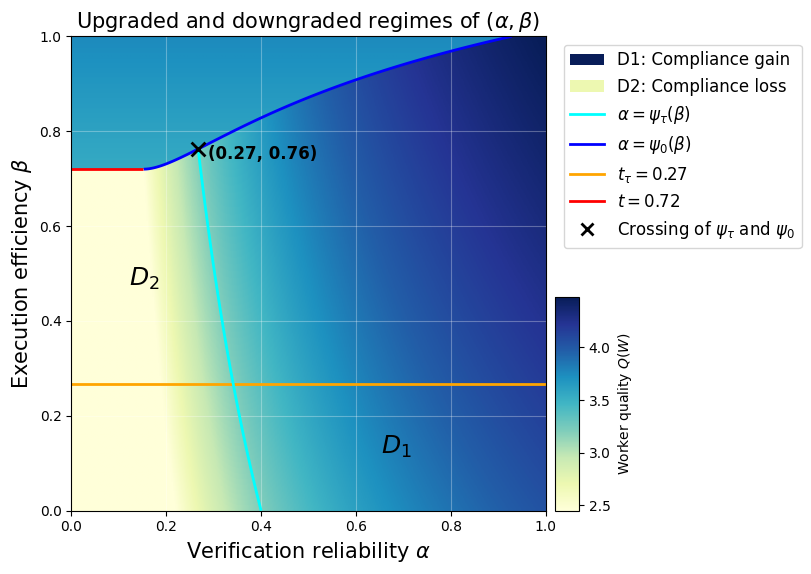}
			\caption{Quality grading with $\kappa = 0.8$}
			\label{fig:grading_partial}
		\end{subfigure}
		\caption{Plots illustrating the worker action and quality regimes with partial re-execution cost as functions of abilities $(\alpha,\beta)$, under the default parameter setting
			$
			(b_W, \ell_W, b_I, \ell_I, \xi, \tau, p_a, C_a, p_w, \kappa) = (8, 6, 14, 12, 0.3, 6.4, 0.65, 0, 0.75, 0.8),
			$
			and the functional choices $C_v(s)=s$, $C_w(\beta)=5(1-\beta)$, and $\phi(s;\alpha)=1-\frac{1}{1+2\alpha s}$.
		}
		\label{fig:partial}
	\end{figure*}
	
	Our model assumes that after detecting an AI error, the worker fully re-executes
	the task and incurs the full cost $C_w$. In practice, re-execution may be only
	partial, since the worker can often leverage the AI output, for example in essay
	writing or report generation, to reduce effort.
	
	Let $\kappa\in[0,1]$ denote a re-execution cost factor, where $\kappa=1$
	corresponds to full re-execution as in the baseline model, and $\kappa<1$
	captures partial re-execution. Under this extension, the worker utility becomes
	\[
	U_W(d,s)
	=
	\left[
	(1-p_a)\bigl((b_W+\ell_W)p_w-\kappa C_w\bigr)\phi(s)
	- C_v(s)
	- (b_W+\ell_W)(p_w-p_a)
	+  C_w - C_a
	\right] d
	+ g_W.
	\]
	This extension changes the cost of correcting detected AI errors, while leaving
	the delegation and verification structure otherwise unchanged.
	
	\paragraph{Analysis.}
	Figure~\ref{fig:partial} visualizes how the worker's action and quality vary
	under partial re-execution cost with $\kappa=0.8$. The core qualitative phenomena
	from Figure~\ref{fig:example} persist, including phase transitions, compliance
	gain/loss regimes, and rational over-delegation. Notably,
	Figure~\ref{fig:partial} is nearly identical to Figure~\ref{fig:example},
	suggesting that partial re-execution has only a mild effect on worker quality in
	this regime. The reason is that partial re-execution only changes the marginal
	cost of correcting detected errors, while the basic tradeoff between delegation
	surplus and verification reliability remains unchanged.

	\subsection{Multi-round interaction}
	\label{sec:multi_round_interaction}
	
	We now relax the assumption that verification occurs in a single step. Under
	delegation, the worker may interact with the AI over multiple rounds before
	accepting, revising, or replacing the AI output. Suppose the delegated workflow
	consists of $T$ verification rounds. In each round $t\in\{1,\ldots,T\}$, the
	worker exerts verification effort $e_t\ge 0$, which detects an AI error with
	probability $\delta_t(e_t)\in[0,1]$. Conditional on the effort sequence
	$e=(e_1,\ldots,e_T)$, we assume that detection events are independent across
	rounds. The probability that an incorrect AI output is detected by the end of
	the interaction is therefore
	\[
	\phi_T(e_1,\ldots,e_T)
	:=
	1-\prod_{t=1}^T \bigl(1-\delta_t(e_t)\bigr).
	\]
	If an error is detected, the worker re-executes the task independently with
	success probability $p_w$. Hence, under delegation, the success probability is
	\[
	p_T^{\mathrm{del}}(e_1,\ldots,e_T)
	=
	p_a+(1-p_a)\phi_T(e_1,\ldots,e_T)p_w,
	\]
	and the overall success probability under delegation rate $d$ is
	\[
	p_T(d,e_1,\ldots,e_T)
	=
	(1-d)p_w
	+ d p_a
	+ d(1-p_a)\phi_T(e_1,\ldots,e_T)p_w .
	\]
	Let the multi-round verification cost be
	\[
	C_{v,T}(e_1,\ldots,e_T)
	:=
	\sum_{t=1}^T c_t(e_t),
	\]
	where $c_t(\cdot)$ denotes the cost of effort in round $t$. The induced total
	cost is
	\[
	\mathrm{cost}_T(d,e_1,\ldots,e_T)
	=
	(1-d)C_w
	+ d\Bigl(
	C_a
	+ C_{v,T}(e_1,\ldots,e_T)
	+ (1-p_a)\phi_T(e_1,\ldots,e_T)C_w
	\Bigr).
	\]
	This multi-round perturbation can be represented as a reduced-form one-shot
	verification technology. To see this, define the effective detection frontier
	under total verification effort $s$ as
	\[
	\bar{\phi}_T(s)
	:=
	\max_{\substack{e_1,\ldots,e_T\ge 0\\ \sum_{t=1}^T e_t\le s}}
	\left\{
	1-\prod_{t=1}^T \bigl(1-\delta_t(e_t)\bigr)
	\right\}.
	\]
	Let $e^\star(s)$ denote an effort allocation attaining this maximum, and define
	the induced effective verification cost by
	\[
	\bar{C}_{v,T}(s)
	:=
	\sum_{t=1}^T c_t(e_t^\star(s)).
	\]
	Substituting these effective functions into the expressions above yields
	\[
	p_T(d,s)
	=
	(1-d)p_w
	+ d p_a
	+ d(1-p_a)\bar{\phi}_T(s)p_w,
	\]
	and
	\[
	\mathrm{cost}_T(d,s)
	=
	(1-d)C_w
	+ d\Bigl(
	C_a
	+ \bar{C}_{v,T}(s)
	+ (1-p_a)\bar{\phi}_T(s)C_w
	\Bigr).
	\]
	Thus, the multi-round interaction has the same structural form as the baseline
	model, with the original detection function $\phi$ and verification cost $C_v$
	replaced by the effective functions $\bar{\phi}_T$ and $\bar{C}_{v,T}$.
	
	For example, in the homogeneous case with equal effort allocation $e_t=s/T$ and
	exponential per-round detection $\delta_t(e)=1-e^{-\alpha e}$, we obtain
	\[
	\bar{\phi}_T(s)
	=
	1-\left(e^{-\alpha s/T}\right)^T
	=
	1-e^{-\alpha s},
	\]
	which coincides with the detection function used in the baseline specification.
	
	This reduction shows that the qualitative conclusions of the baseline analysis
	are preserved under multi-round interaction. Multi-round verification may change
	the effective detection frontier and the cost of scrutiny, but worker quality
	continues to be governed by the interaction between delegation incentives and
	verification reliability.

	\subsection{Continuous output quality}
	\label{sec:continuous_quality}
	
	The baseline model evaluates task outcomes through a binary success indicator.
	This abstraction is natural for tasks with objectively correct labels, but many
	AI-assisted outputs are better described by a continuous quality score: a
	diagnosis, legal memo, or generated report may be partially correct rather than
	simply correct or incorrect. To capture this setting, let $Y(d,s)\in[0,1]$
	denote the final output quality induced by delegation level $d$ and verification
	effort $s$, and let $Y_w$ and $Y_a$ denote the worker-alone and AI-alone output
	qualities. Write
	\[
	q_w := \mathbb{E}[Y_w],
	\qquad
	q_a := \mathbb{E}[Y_a].
	\]
	We incorporate continuous quality by applying the same detection--correction
	logic as in the baseline model. Verification detects the residual quality
	deficit of the AI output with probability $\phi(s)$. A detected deficit is then
	partially repaired by the worker. In this formulation, $q_w$ is interpreted as
	the worker's ability to repair the residual quality deficit of the AI output,
	rather than as simple replacement of the AI output by an independently produced
	worker output. The expected output quality is therefore
	\[
	p_Y(d,s)
	:=
	\mathbb{E}[Y(d,s)]
	=
	(1-d)q_w
	+
	d\bigl(q_a+(1-q_a)\phi(s)q_w\bigr).
	\]
	Thus, the only substantive change is to replace the binary success probabilities
	$p_w$ and $p_a$ with the expected quality levels $q_w$ and $q_a$, and to replace
	the AI error probability $1-p_a$ with the expected residual quality gap
	$1-q_a$. When $Y_w$ and $Y_a$ are Bernoulli random variables, this expression
	reduces exactly to the baseline success probability.
	
	The correction-cost term is modified analogously by replacing $1-p_a$ with
	$1-q_a$, while the definitions of worker utility, institutional utility,
	optimal action, and worker quality remain unchanged. Because these substitutions
	preserve the same reduced-form utility structure as the baseline model, all
	subsequent results apply directly under the continuous-quality interpretation,
	including the characterization of worker actions, quality improvement and
	degradation, compliance gain/loss regimes, and rational over-delegation.
	
	Thus, the continuous-quality extension shows that our framework is not tied to
	binary correctness and further supports the robustness of the model and
	conclusions.

	\subsection{Relaxing concavity and convexity assumptions}
	\label{sec:relaxing_regularities}
	
	\begin{figure*}[t]
		\centering
		
		\begin{subfigure}[t]{0.33\textwidth}
			\centering
			\vspace{0pt}
			\includegraphics[width=\linewidth]{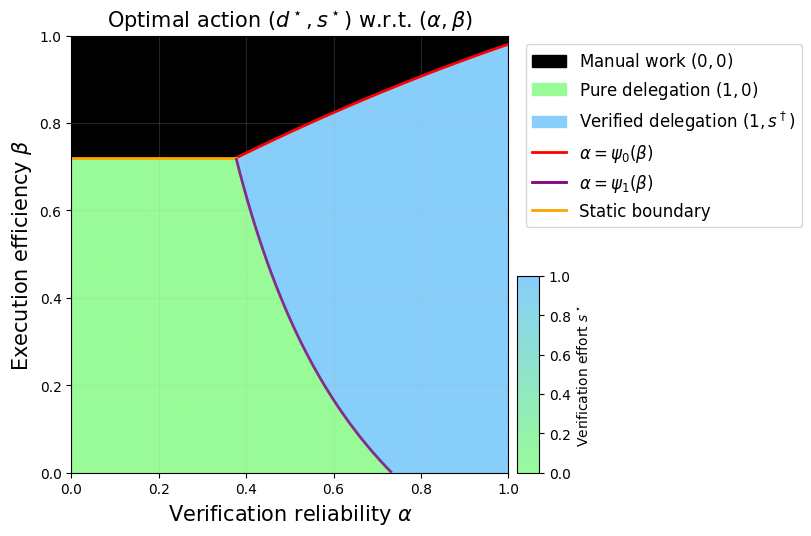}
			\caption{Worker action with non-concave $\phi(s)=1-e^{-\alpha s^2}$}
			\label{fig:non_concave_detection_worker}
		\end{subfigure}
		\qquad
		\begin{subfigure}[t]{0.33\textwidth}
			\centering
			\vspace{0pt}
			\includegraphics[width=\linewidth]{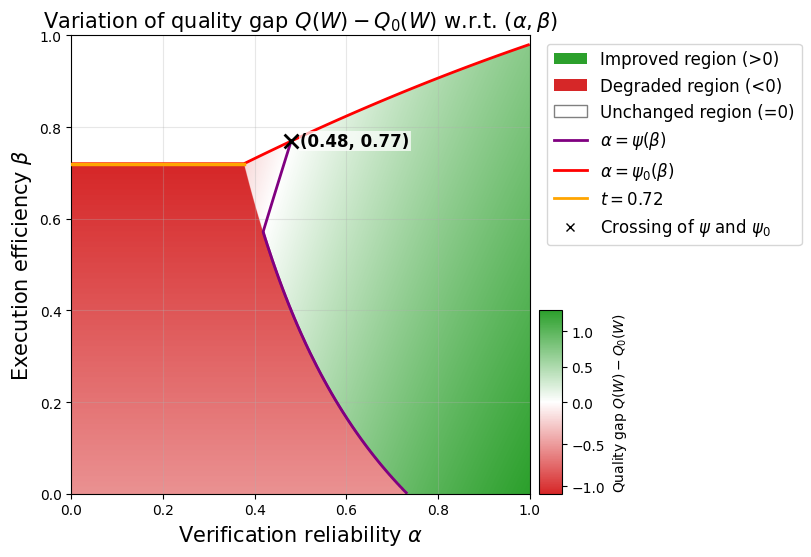}
			\caption{Quality gap with non-concave $\phi(s)=1-e^{-\alpha s^2}$}
			\label{fig:non_concave_detection_gap}
		\end{subfigure}
		
		\begin{subfigure}[t]{0.33\textwidth}
			\centering
			\vspace{0pt}
			\includegraphics[width=\linewidth]{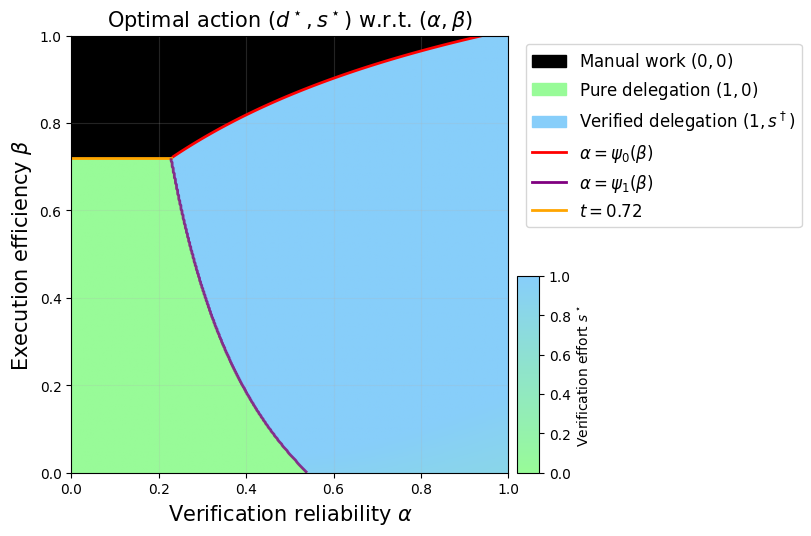}
			\caption{Worker action with non-convex $C_v(s)=\sqrt{s}$}
			\label{fig:non_convex_verification_worker}
		\end{subfigure}
		\qquad
		\begin{subfigure}[t]{0.33\textwidth}
			\centering
			\vspace{0pt}
			\includegraphics[width=\linewidth]{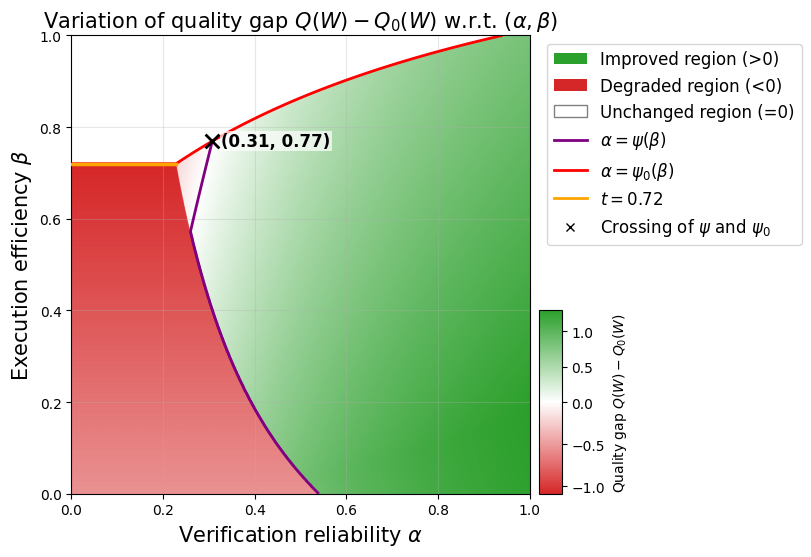}
			\caption{Quality gap with non-convex $C_v(s)=\sqrt{s}$}
			\label{fig:non_convex_verification_gap}
		\end{subfigure}
		
		\caption{Plots illustrating the worker action and quality regimes under
			relaxed regularity assumptions for the detection function $\phi(\cdot)$ and
			verification cost $C_v(\cdot)$, as functions of abilities $(\alpha,\beta)$.
			The left two panels use the non-concave detection function
			$\phi(s;\alpha)=1-e^{-\alpha s^2}$, while the right two panels use the
			non-convex verification cost $C_v(s)=\sqrt{s}$. Other parameters are as in
			Figure~\ref{fig:example}.}
		\label{fig:non_concave_detection}
	\end{figure*}
	
	The baseline model imposes two regularity assumptions on the verification
	technology: the detection function $\phi(s;\alpha)$ is concave in verification
	effort, and the verification cost $C_v(s)$ is convex. These assumptions make the
	worker's verification problem well behaved, but they are not essential for the
	model's main mechanism. In this subsection, we relax them separately.
	
	\paragraph{Non-concave detection function.}
	We first relax the concavity of $\phi$ while keeping the baseline verification
	cost. This relaxation may destroy the concavity of the worker's verification
	objective, so the optimal verification effort
	$s^\dagger\in\arg\max_{s\in[0,1]} f_W(s)$ need not be unique or continuous.
	However, the main structural property is preserved: since
	$U_W(d,s)=f_W(s)d+g_W$ remains affine in $d$, the delegation decision is still
	governed by whether the maximized delegation surplus $\max_s f_W(s)$ is
	nonnegative. Provided the pointwise monotonicity of the delegation surplus in
	$\alpha$ and $\beta$ is preserved, the maximized surplus $\max_s f_W(s)$
	inherits the same monotone comparative statics. Hence the threshold structure
	behind the phase transition persists even without concavity.
	
	Figures~\ref{fig:non_concave_detection_worker}
	and~\ref{fig:non_concave_detection_gap} illustrate this case with the
	non-concave detection function $\phi(s;\alpha)=1-e^{-\alpha s^2}$. The same
	manual-work, pure-delegation, and verified-delegation regimes remain, and the
	quality-gap plot still contains a low-$\alpha$ degradation region. The main new
	feature is that verification has weak marginal value near $s=0$, so as $\alpha$
	increases the global maximizer can jump discretely from $s^\star=0$ to a
	positive verification level. This creates turning points in the regime boundary,
	but it does not eliminate rational over-delegation or the associated
	quality-degradation region.
	
	\paragraph{Non-convex verification cost.}
	We next relax the convexity of $C_v$ while keeping the baseline concave detection
	function. The same qualitative logic applies: the verification objective may
	lose smoothness and concavity, but the worker's delegation choice is still
	determined by the threshold in $\max_s f_W(s)$.
	
	Figures~\ref{fig:non_convex_verification_worker}
	and~\ref{fig:non_convex_verification_gap} consider the non-convex cost
	$C_v(s)=\sqrt{s}$. Because small positive verification effort is relatively
	costly near the origin, the worker again avoids infinitesimal verification and
	switches to positive verification only when the global verification surplus
	becomes sufficiently large. The resulting action and quality-gap plots mirror
	the non-concave-$\phi$ case: regime boundaries become less smooth and
	verification effort may jump, but the core regions---manual work, pure
	delegation, verified delegation, and low-$\alpha$ quality degradation---remain
	present.
	
	Overall, relaxing the concavity of $\phi$ or the convexity of $C_v$ changes the
	smoothness of verification choices but not the core mechanism, showing that our
	main conclusions are robust to these regularity perturbations.

	Taken together, these extensions show that the baseline conclusions are robust
	across changes in the action space, information structure, task environment,
	correction workflow, outcome representation, and regularity assumptions. The
	extensions modify the geometry of regime boundaries and the size of gain/loss
	regions, but they do not remove the core mechanism: delegation is beneficial
	only when paired with sufficiently reliable verification, while rational private
	delegation can generate institutional quality loss when verification is weak.

	\section{Conclusions, limitations, and future work}
	\label{sec:discussion}
	
	This paper develops a formal framework for understanding how AI assistance reshapes institutional worker quality through endogenous delegation and verification decisions.
	By modeling worker behavior as the solution to a rational optimization problem and evaluating outcomes through an institution-centered notion of quality, we identify a structural mechanism by which AI can induce sharp, nonlinear changes in behavior and outcomes.
	
	We show that AI assistance transforms the mapping from worker ability to workflow choice.
	Small differences in verification reliability can trigger abrupt regime shifts between manual work, verified delegation, and pure delegation.
	These phase transitions arise from structural properties of delegation with costly verification and persist under general assumptions on detection and cost functions.
	
	As a result, AI assistance reshapes worker quality heterogeneously.
	Workers with strong verification ability may experience \emph{compliance gains}, becoming institutionally qualified despite lower execution efficiency, while workers with weaker verification ability may over-delegate and incur \emph{compliance loss}, even when baseline task success improves.
	These effects arise endogenously from rational optimization, without invoking behavioral bias or miscalibration.
	
	Taken together, our results suggest that in AI-assisted workflows, the ability to evaluate and verify AI outputs becomes a central determinant of institutional outcomes.
	AI thus functions not only as a productivity-enhancing tool, but as a structural filter that amplifies differences in verification ability. 
	At the organizational level, this suggests that AI deployment should be accompanied by explicit attention to verification protocols, correction procedures, and accountability arrangements.

	Our framework abstracts away from several real-world considerations to isolate this core mechanism.
	We study a single worker performing a single task, treat verification reliability as exogenous, and do not model dynamic effects such as learning or deskilling.
	Institutional evaluation is outcome-based and does not condition on workflow observability.
	Moreover, our analysis is structural rather than empirical and does not estimate effect sizes in specific domains.
	
	These limitations suggest several directions for future work.
	Promising extensions include dynamic models in which execution efficiency and verification reliability evolve over time, analyses of institutional interventions such as verification requirements or incentive schemes, and multi-worker or team-based settings with distributed oversight.
	Empirical studies mapping real-world tasks and verification practices to model parameters could further enable quantitative assessment of compliance gain, compliance loss, and verification amplification in practice.

	\section*{Acknowledgment}
	
	LH is supported in part by Fundamental and Interdisciplinary Disciplines Breakthrough Plan of the Ministry of Education of China (No. JYB2025XDXM118), NSFC Grant No. 625707396. NKV was supported in part by NSF Grant CCF-2112665.
	
	\bibliography{references}
	\bibliographystyle{plain}

	\appendix

	\begin{table}[ht]
		\centering
		\caption{Notations used in this paper}
		\begin{tabular}{ccc}
			\toprule
			\textbf{Symbol} & \textbf{Domain} & \textbf{Description} \\
			\midrule
			$d$  & $[0,1]$ & Delegation rate (fraction of task delegated to AI) \\
			$s$  & $[0,1]$ & Verification (scrutiny) effort \\
			$p(d,s)$  & $[0,1]^2 \rightarrow [0,1]$ & Task success probability induced by $(d,s)$ \\
			$\cost(d,s)$ & $[0,1]^2 \rightarrow \R_{\geq 0}$ & Total cost induced by $(d,s)$ \\
			$d^\star$ & $[0,1]$ & Worker-optimal delegation choice \\
			$s^\star$ & $[0,1]$ & Worker-optimal verification choice \\
			$b_W$ & $\R_{\geq 0}$ & Worker valuation per unit success probability \\
			$b_I$ & $\R_{\geq 0}$ & Institutional valuation per unit success probability \\
			$\ell_W$ & $\R_{\geq 0}$ & Worker loss per unit failure probability \\
			$\ell_I$ & $\R_{\geq 0}$ & Institutional loss per unit failure probability \\
			$\xi$ & $\R_{\geq 0}$ & Institutional cost discount factor \\
			$p_w$  & $[0,1]$ & Baseline task success probability of worker \\
			$p_a$  & $[0,1]$ & Baseline task success probability of AI \\
			$\phi(s)$ & $[0,1]\rightarrow[0,1]$ & Detection probability as a function of verification effort $s$ \\
			$C_w$  & $\R_{\geq 0}$ & Worker task execution cost \\
			$C_a$ & $\R_{\geq 0}$ & AI task execution cost \\
			$C_v(s)$ & $[0,1]\rightarrow \R_{\geq 0}$ & Verification cost as a function of scrutiny $s$ \\
			$\alpha$ & $\R_{\geq 0}$ & Detection sensitivity parameter in $\phi$ \\
			$\beta$ & $\R_{\geq 0}$ & Worker task efficiency parameter \\
			$U_W(d,s)$ & $[0,1]^2 \rightarrow \R$ & Worker utility \\
			$U_I(d,s)$ & $[0,1]^2\rightarrow \R$ & Institutional utility \\
			$Q(\alpha,\beta)$ & $\R_{\geq 0}^2\rightarrow \R$ & Worker quality with AI \\
			$Q_{0}(\alpha,\beta)$ & $\R_{\geq 0}^2\rightarrow \R$ & Worker quality without AI \\
			$\Delta_Q(\alpha,\beta)$ & $\R_{\geq 0}^2\rightarrow \R$ & Quality gap induced by AI assistance \\
			$\tau$ & $\R_{\geq 0}$ & Qualification threshold \\
			\bottomrule
		\end{tabular}
		\label{tab:notation}
	\end{table}

	\section{Omitted details in Section \ref{sec:calibration}: from data to model parameters}
	\label{sec:usability_details}
	
	We provide additional details for the example stated in Section \ref{sec:calibration}.
	
	\paragraph{Collab-CXR dataset.}
	The Collab-CXR dataset is organized into multiple experimental designs.
	We focus on diagnoses collected under \textbf{Design~2}, in which each clinician is randomly assigned 60 cases out of the 324 total cases.
	For each case, the clinician makes predictions under one of four information environments:
	(1) X-ray only;
	(2) X-ray + AI-predicted probabilities;
	(3) X-ray + medical history; and
	(4) X-ray + medical history + AI-predicted probabilities.
	The study is conducted over four sessions separated by at least two weeks.
	In each session, the clinician diagnoses cases under exactly one of the four information environments, and across the four sessions each of the 60 cases is eventually observed under all four environments.
	This design enables direct comparisons of clinician performance with and without AI assistance.
	
	Since the AI diagnoses cases using only the X-ray, we focus on information environments (1) and (2) to enable a fair comparison between the AI and clinicians.
	Accordingly, we treat environment (1) as the clinician-alone condition and environment (2) as the clinician-with-AI condition.
	
	\paragraph{Details of case success.}
	Given a case $j$, let $X_a^{(j)},X_w^{(j)}\in\{0,1\}^{14}$ denote the AI’s and the clinician’s diagnosis-indicator vectors, respectively.
	The $i$-th coordinate equals $1$ if the prediction for the $i$-th pathology matches the expert label.
	Let $I_a^{(j)},I_w^{(j)}\in\{0,1\}$ denote the corresponding \emph{case-success} indicators for the AI and the clinician.
	We define
	\[
	I_a^{(j)}=\prod_{i\in[14]} (X_a^{(j)})_i,
	\qquad
	I_w^{(j)}=\prod_{i\in[14]} (X_w^{(j)})_i.
	\]
	
	\paragraph{Case clean.}
	Case difficulty may vary, which can induce substantial variation in the clinician-alone time cost $C_w$.
	To restrict attention to cases of comparable hardness, we retain only those cases whose clinician-alone completion time lies within $[0.5\times,\,2\times]$ the clinician’s average completion time.
	
	In addition, there are a small number of cases $j$ with $I_a^{(j)}=1$ but $I_w^{(j)}=0$, i.e., cases where the clinician appears to override an AI output that matches the expert labels.
	This behavior is not captured by our baseline model, which assumes clinicians only detect AI errors.
	Since such cases account for only $4.9\%$ of all cases, we treat them as erroneous-detection cases and exclude them from the analysis.
	
	After this data cleaning, we retain 38 cases for clinician \texttt{vn\_exp\_65341958}.
	
	\paragraph{Details of deriving $\cost(1,s^\dagger)$.}
	We observe that the average time spent in the clinician-with-AI condition, denoted $C_{wa}$, is larger than the clinician-alone average time $C_w$.
	This is because recorded time includes not only diagnostic reasoning but also time spent interacting with the interface (e.g., reviewing the AI output and entering information).
	In particular, even when the clinician does not modify the AI output (captured by $X_w=X_a$), there is still nontrivial interface/entry time.
	
	We approximate this fixed entry-time component by $C_w$ and use the decomposition
	\[
	C_{wa}=\cost(1,s^\dagger)+\Pr[X_w=X_a]\cdot C_w.
	\]
	Consequently, we estimate
	\[
	\cost(1,s^\dagger)=C_{wa}-\Pr[X_w=X_a]\cdot C_w.
	\]
	
	\paragraph{Parameter categorization.}
	The parameters used in the empirical illustration fall into three categories.
	First, some quantities are directly observed or estimated from outcome data,
	including the worker's baseline success probability \(p_w\), the AI's standalone
	success probability \(p_a\), and the observed success rate under delegation
	\(p(1,s^\dagger)\). Second, worker-side quantities are inferred through the
	model by matching observed behavior and outcomes. These include the verification
	choice \(s^\dagger\), the effective detection probability \(\phi(s^\dagger)\),
	the verification cost \(C_v(s^\dagger)\), composite cost terms such as
	\(\mathrm{cost}(1,s^\dagger)\), the ability parameters \(\alpha\) and \(\beta\),
	and the net private payoff scale \(b_W+\ell_W\). Third, structural parameters
	that are not identified from worker-level observational data are fixed by
	assumption for counterfactual evaluation. These include the AI usage cost
	\(C_a\), the institutional benefit and loss parameters \(b_I\) and \(\ell_I\),
	the cost-discount factor \(\xi\), and the institutional quality threshold
	\(\tau\). This categorization clarifies the role of the empirical exercise: it
	maps available data into the model's primitives and illustrates operational
	implications, rather than providing a causal identification of all structural
	parameters.

\end{document}